\documentclass{article}

\usepackage{main}

\usepackage[utf8]{inputenc} 
\usepackage[T1]{fontenc}    
\usepackage{hyperref}       
\usepackage{url}            
\usepackage{booktabs}       
\usepackage{amsfonts}       
\usepackage{nicefrac}       
\usepackage{microtype}      
\usepackage{lineno}
\usepackage{lipsum}
\usepackage{graphicx}
\usepackage{cancel}
\usepackage{enumerate}
\usepackage{natbib}
\usepackage{subcaption}
\usepackage{amsmath}
\usepackage{amssymb}
\usepackage{mathtools}
\usepackage{amsthm}
\usepackage{todonotes}
\usepackage[most]{tcolorbox}

\theoremstyle{plain}

\theoremstyle{definition}

\theoremstyle{remark}

\makeatletter

\def\th@plain{%
  \thm@headfont{\bfseries}
  \thm@bodyfont{\itshape}
  \thm@headpunct{.}
  \thm@preskip=12pt
  \thm@postskip=12pt
}

\def\th@remark{%
  \thm@headfont{\itshape\bfseries}
  \thm@bodyfont{\normalfont}
  \thm@headpunct{.}
  \thm@preskip=10pt
  \thm@postskip=10pt
}

\makeatother

\usepackage[most]{tcolorbox}
\tcbuselibrary{theorems,breakable}

\tcbset{
  formalblue/.style={
    enhanced,
    breakable,
    boxrule=0.6pt,
    arc=2pt,
    left=7pt,
    right=7pt,
    top=7pt,
    bottom=7pt,
    fonttitle=\bfseries,
    coltitle=black,
    colback=blue!2,
    colframe=blue!45!black,
    colbacktitle=blue!8,
    before skip=12pt,
    after skip=12pt
  }
}

\newtcbtheorem{boxtheorem}{Theorem}{formalblue}{thm}
\newtcbtheorem{boxprop}{Proposition}{formalblue}{prop}
\newtcbtheorem{boxassumption}{Assumption}{formalblue}{assm}
\newtcbtheorem{boxremark}{Remark}{formalblue}{rem}

\graphicspath{ {./docs/} }

\title{Koopman early warning signals for bifurcation and rate-induced tipping}

\author{
Juan Nathaniel\textsuperscript{1}\thanks{Corresponding author: \texttt{jn2808@columbia.edu}} \quad
Carla Roesch\textsuperscript{2} \quad
Derek DeSantis\textsuperscript{3} \quad
Parvathi Kooloth\textsuperscript{4} \quad
Hang Fan\textsuperscript{1} \\
\textbf{
Valerio Lucarini\textsuperscript{5} \quad
Anastasia Romanou\textsuperscript{6} \quad
Pierre Gentine\textsuperscript{1}
} \\
\textsuperscript{1}Columbia University, USA \quad
\textsuperscript{2}University of Edinburgh, UK \quad \\
\textsuperscript{3}Los Alamos National Laboratory, USA 
\textsuperscript{4}Pacific Northwest National Laboratory, USA \quad \\
\textsuperscript{5}University of Leicester, UK \quad
\textsuperscript{6}NASA Goddard Institute for Space Studies, USA
}

\begin{document}
\maketitle

\begin{abstract}
Abrupt transitions in complex systems are often preceded by early warning signals. However, most indicators rely on the notion of critical slowing down and do not generally extend to rate-induced tipping where transitions can occur without local loss of stability. This is problematic in stochastic, nonautonomous systems where internal variability and time-varying variables interact to shape tipping onset. We use Koopman operator theory to develop a unified early warning framework for both bifurcation and rate-induced tipping in stochastic systems. Our approach builds on residual Koopman mode decomposition that measures discrepancies between dynamics and their finite-dimensional approximation, and extends it to the control setting by augmenting the observable space with time-varying control variables. In idealized examples, the resulting indicators recover expected signatures near bifurcation points and improve detection in rate-induced regimes where classical indicators fail. We further show that learned embeddings through deep learning outperform prescribed dictionaries, especially in a high-dimensional setting. Applied to simulations of the Atlantic Meridional Overturning Circulation, our Koopman-based indicators distinguish tipping from non-tipping trajectories and reveal interpretable spectral signatures prior to critical transition.
\end{abstract}

\keywords{Koopman theory \and deep learning \and tipping points \and AMOC}

\section*{Introduction}

Critical transitions, or tipping points, are abrupt shifts in system dynamics that can occur when a stable regime loses resilience or can no longer track changing external conditions. Such transitions have been studied across complex systems, including in ecology, biology, and climate science~\cite{scheffer2009early,scheffer2010foreseeing,lenton2012early,wunderling2024climate,ritchie2021overshooting,dakos2024tipping,panahi2023rate,hastings2026tipping,trefois2015critical,lenton2008tipping,kooloth2026time,Boers2025}. Because these transitions can be severe, irreversible, and strongly amplified by stochastic fluctuations, considerable effort has been devoted to the development of early warning signals (EWS) capable of anticipating their onset~\cite{ritchie2017probability,slyman2023rate,LucariniChekroun2023,LucariniChekroun2024}.

Most classical EWS are rooted in bifurcation-induced tipping (B-tipping), where a slowly varying parameter drives the system toward loss of local stability. Near such transitions, perturbations recover increasingly slowly, a phenomenon known as critical slowing down (CSD)~\cite{dakos2012robustness}. In stochastic systems, CSD typically manifests as increasing lag-1 autocorrelation (AR1) and variance~\cite{ives1995measuring}, making these quantities among some of the most widely used indicators. Spectral theory based on combining Koopman framework \cite{Budinisic2012} with response theory also provides a mathematical explanation for this behaviour \cite{Santos2022,LucariniChekroun2023}. A distinct mechanism arises in rate-induced tipping (R-tipping)~\cite{ashwin2012tipping,kiers2020rate,ritchie2023rate}, where transitions occur not because stability is lost, but because external conditions change too rapidly for the system to track the evolving attractor. In this case, the instantaneous dynamics may remain locally stable throughout the transition pathway, so that classical indicators based on CSD may fail to provide reliable warning signals~\cite{huang2024deep}.

In realistic stochastic and externally forced systems, different tipping mechanisms may coexist and can be difficult to distinguish across multiple interacting timescales~\cite{nathaniel2024chaosbench,herdeanu2026causaldynamics,sevinchan2026context}. This is especially relevant in climate dynamics, where internal variability, nonstationary forcing, and multiscale feedbacks jointly shape the onset of critical transitions. Existing EWS are typically tailored to either stability loss (B-tipping) or tracking failure (R-tipping) separately, raising the need for frameworks that remain informative across both regimes.

Here we address this challenge using Koopman operator theory, which represents finite-dimensional nonlinear dynamics through the linear evolution of infinite-dimensional observables~\cite{brunton2021modern,koopman1931hamiltonian,Budinisic2012,nathaniel2026generative}. Previous works have connected B-tipping and CSD to Koopman spectral structure, including spectral gap closure (Figure~\ref{fig:graphical_abstract}a), the amplification of stochastic fluctuations (Figure~\ref{fig:graphical_abstract}b)~\cite{LucariniChekroun2023,Tantet2018,chekroun2019c,Held2004,Santos2022,miyauchi2026generalized}, and the resulting deterioration of Koopman representation quantified through residual Koopman mode decomposition (ResKMD) near critical transition~\cite{miyauchi2026generalized}. In this work, we extend this idea to the rate-induced setting using a control-aware Koopman formulation~\cite{proctor2018generalizing}, addressing both tipping mechanisms under a unified framework. 

We evaluate our framework across a hierarchy of systems, ranging from prototypical models to reduced-order and fully coupled simulations of the Atlantic Meridional Overturning Circulation (AMOC), a widely discussed climate tipping element. We further show that learned embeddings through deep learning (DL), such as through multi-layer perceptron (MLP)-based autoencoders, outperform prescribed observables, such as random Fourier (RFF~\cite{rahimi2007random}) and time-delay features (TDF~\cite{abarbanel1994predicting}), in terms of better separation of tipping and non-tipping trajectories, especially in high-dimensional settings. Beyond early warning detection, our framework also provides interpretable spectral diagnostics, including decay timescales, spectral gap structure, and changes in memory preceding critical transition. Together, these results demonstrate how operator-theoretic methods can unify prediction and mechanistic interpretation of different tipping phenomena in stochastic systems.

\begin{figure}
    \centering
    \includegraphics[width=0.9\linewidth]{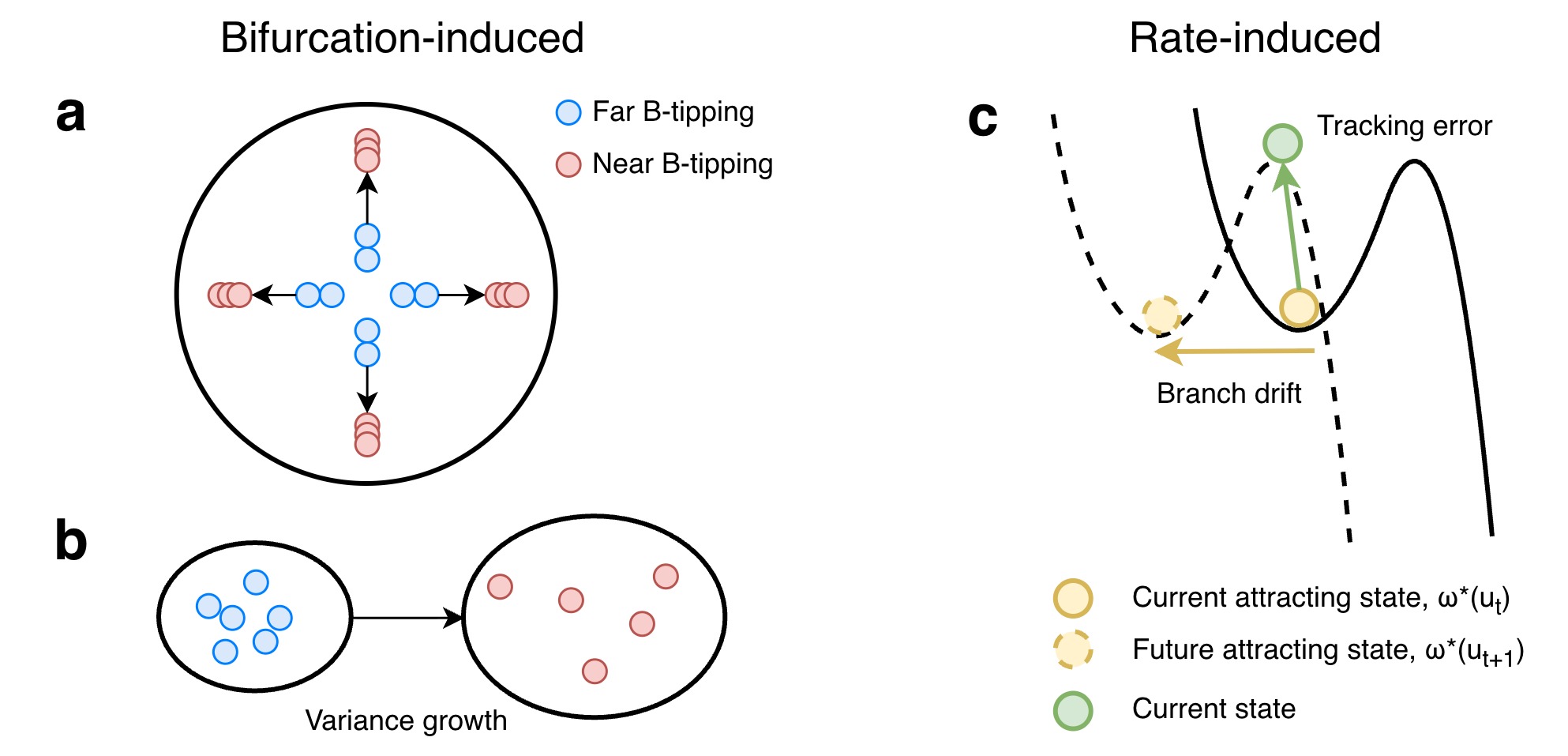}
    \caption{Schematic illustration of Koopman-based early warning mechanisms. (\textbf{a}) In B-tipping, slow Koopman modes move toward the stability boundary as the system approaches critical slowing down, reducing the spectral gap and increasing dynamical memory. (\textbf{b}) Slower recovery causes stochastic perturbations to persist longer, leading to variance growth. (\textbf{c}) In R-tipping, the frozen dynamics can remain locally stable, but the system can fail to track a moving attracting branch. The current and future attracting states, $\omega^\star(u_t)$ and $\omega^\star(u_{t+1})$, indicate how the branch shifts as the control changes. The branch drift $d_t=\omega^\star(u_t)-\omega^\star(u_{t+1})$ measures this shift, while the tracking error $\bar{\omega}_t=\omega_t-\omega^\star(u_t)$ measures the departure of the current state from the instantaneous attracting state. Together, these spectral properties and tracking error mechanisms motivate our unified Koopman-based EWS.}
    \label{fig:graphical_abstract}
\end{figure}

\section*{Generalized Koopman framework for early warning of tipping points}

The central quantity in our framework is the residual obtained through ResKMD~\cite{colbrook2024beyond,colbrook2023residual,mankovich2026disentangling}. This residual measures how accurately the observed dynamics can be represented using a finite set of Koopman modes. Far from tipping, the dynamics are often well described by a compact low-dimensional spectral representation dominated by a small number of Koopman modes. Near tipping, we argue that this representation progressively deteriorates: in the bifurcation setting because slow modes accumulate near the stability boundary, and in the rate-induced setting because the system can no longer track the moving attracting branch under changing external conditions. The Koopman residual therefore acts as a dynamical measure of loss of spectral representability (see Figure~\ref{fig:graphical_abstract}).

\paragraph{Bifurcation-induced setting.} In the autonomous setting relevant to B-tipping, we consider a stochastic dynamical system with state $\omega_t \in \mathbb{R}^N$ and observables $\boldsymbol{\psi}(\omega_t) \in \mathbb{C}^M$ evolved through the first-moment Koopman operator $\mathcal{K}^{(1)}: \mathbb{C}^M \to \mathbb{C}^M$. We approximate the dynamics using a retained Koopman subspace $\mathcal{S}_m^b$. In the bifurcation setting, the Koopman residual separates naturally into two contributions: a truncation component measuring imperfect finite-dimensional Koopman representation, and a stochastic component measuring variability in the observable evolution
\begin{equation}
    \underbrace{\mathrm{res}\!\left[\mathcal{K}^{(1)},\mathcal{K}^{(1)}_{\mathcal{S}_m^b};\boldsymbol{\psi}\right]^2}_{\text{ResKMD}}
    =
    \underbrace{
    \left\|
    \mathcal{K}^{(1)}\boldsymbol{\psi}
    -
    \mathcal{K}^{(1)}_{\mathcal{S}_m^b}\boldsymbol{\psi}
    \right\|^2
    }_{\text{truncation error}}
    +
    \underbrace{
    \int
    \mathrm{Tr}\!\left[
    \mathrm{Cov}\!\left(\boldsymbol{\psi}(\Phi_\varepsilon(\omega))\right)
    \right]\,d\mu(\omega)
    }_{\text{stochastic fluctuation}}.
    \label{eq:main_reskmd_decomp}
\end{equation}

Here $\mu$ denotes a reference measure on state space and $\Phi_\varepsilon$ the one-step stochastic map. Near B-tipping, both contributions are theoretically shown to increase as slow Koopman modes approach the stability boundary and stochastic fluctuations become amplified~\cite{miyauchi2026generalized}. In this sense, ResKMD generalizes classical CSD indicators such as AR1 and variance within a spectral operator-theoretic framework~\cite{grziwotz2023anticipating} (see discussion in Supplementary Information~\ref{si-sec:ar1_derivation}, \ref{si-sec:var_derivation}). 

\paragraph{Rate-induced setting.} The R-tipping setting requires a different interpretation. Here the key issue is not necessarily loss of local stability, but failure of the system to track a moving attracting branch under time-dependent forcing. We therefore augment the Koopman representation with a control variables describing the forcing history~\cite{colbrook2024beyond,colbrook2023residual}. Let $u_t\in\mathbb{R}^U$ denote the control variable evolving according to $u_{t+1}=g(u_t)$, let $\boldsymbol{\Psi}(\omega,u) \in \mathbb{C}^M$ be an observable dictionary on the joint state-control space, and let $\mathcal{K}^{(1)}_c: \mathbb{C}^M \to \mathbb{C}^M$ denote the corresponding first-moment Koopman operator. The resulting control-aware residual retains the same structure as in the bifurcation setting, separating truncation and stochastic contribution, but now in the joint state-control space

\begin{boxprop}{Control-aware Koopman residual decomposition}{reskmdc_decomp}
\begin{equation}
    \underbrace{\mathrm{res}\!\left[\mathcal{K}^{(1)}_c,\mathcal{K}^{(1)}_{\mathcal{S}_m^c};\boldsymbol{\Psi}\right]^2}_{\text{ResKMD control}}
    =
    \underbrace{
    \left\|
    \mathcal{K}^{(1)}_c\boldsymbol{\Psi}
    -
    \mathcal{K}^{(1)}_{\mathcal{S}_m^c}\boldsymbol{\Psi}
    \right\|^2
    }_{\text{truncation error}}
    +
    \underbrace{
    \int
    \mathrm{Tr}\!\left[
    \mathrm{Cov}\!\left(
    \boldsymbol{\Psi}(\Phi_\varepsilon(\omega,u),g(u))
    \right)
    \right]\,d\mu(\omega,u)
    }_{\text{stochastic fluctuation}}.
    \label{eq:main_reskmdc_decomp}
\end{equation}
The proof is provided in Supplementary Information~\ref{si-sec:proof-reskmdc-decomp}. 
\end{boxprop}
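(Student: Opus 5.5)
The plan is to unpack the definition of the residual and apply a bias–variance (Pythagorean) decomposition in $L^2(\mu)$, mirroring the autonomous decomposition \eqref{eq:main_reskmd_decomp} from~\cite{miyauchi2026generalized}. The only change is that everything lives on the joint state--control space. First I fix definitions. The stochastic Koopman residual of the projected operator $\mathcal{K}^{(1)}_{\mathcal{S}_m^c}$ acting on the dictionary $\boldsymbol{\Psi}$ is the mean-squared one-step prediction error
\begin{equation*}
\mathrm{res}\!\left[\mathcal{K}^{(1)}_c,\mathcal{K}^{(1)}_{\mathcal{S}_m^c};\boldsymbol{\Psi}\right]^2
= \int \mathbb{E}\Big\|\boldsymbol{\Psi}\big(\Phi_\varepsilon(\omega,u),g(u)\big) - \big(\mathcal{K}^{(1)}_{\mathcal{S}_m^c}\boldsymbol{\Psi}\big)(\omega,u)\Big\|^2 \, d\mu(\omega,u).
\end{equation*}
Here the expectation is over the noise in the one-step map. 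The first-moment operator is defined by
\begin{equation*}
(\mathcal{K}^{(1)}_c\boldsymbol{\Psi})(\omega,u)=\mathbb{E}\big[\boldsymbol{\Psi}(\Phi_\varepsilon(\omega,u),g(u))\big].
\end{equation*}
The control evolves deterministically through $g$, so the only randomness sits in the state component. The augmented pair $(\omega,u)\mapsto(\Phi_\varepsilon(\omega,u),g(u))$ is therefore again a random map on $\mathbb{R}^N\times\mathbb{R}^U$. This means the controlled problem is literally an autonomous stochastic system on an enlarged space, which is the key observation.

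Next, for each fixed $(\omega,u)$, I write
\begin{equation*}
\boldsymbol{\Psi}(\Phi_\varepsilon,g(u)) - \mathcal{K}^{(1)}_{\mathcal{S}_m^c}\boldsymbol{\Psi} = \big[\boldsymbol{\Psi}(\Phi_\varepsilon,g(u)) - \mathcal{K}^{(1)}_c\boldsymbol{\Psi}\big] + \big[\mathcal{K}^{(1)}_c\boldsymbol{\Psi} - \mathcal{K}^{(1)}_{\mathcal{S}_m^c}\boldsymbol{\Psi}\big].
\end{equation*}
I then expand the squared norm (Hermitian, since the dictionary is $\mathbb{C}^M$-valued). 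The second bracket is deterministic given $(\omega,u)$, and the first has zero conditional mean by the definition of $\mathcal{K}^{(1)}_c$. Hence the cross term $2\,\mathrm{Re}\,\mathbb{E}\langle\cdot,\cdot\rangle$ vanishes pointwise. The conditional expectation of the squared norm of the first bracket is exactly $\mathrm{Tr}\,\mathrm{Cov}(\boldsymbol{\Psi}(\Phi_\varepsilon(\omega,u),g(u)))$. The second bracket gives the squared truncation error. Integrating against $\mu$ then yields \eqref{eq:main_reskmdc_decomp}, with $\|\cdot\|$ read as the $L^2(\mu;\mathbb{C}^M)$ norm.

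The main obstacle is technical rather than algebraic. Two points need care. First, the residual must be well defined: $\boldsymbol{\Psi}(\Phi_\varepsilon,g(u))$ needs to be square-integrable under $\mu\otimes\mathbb{P}$ so that the covariance exists and Fubini/Tonelli justifies exchanging $\int d\mu$ with $\mathbb{E}$. I would assume this as a standing integrability hypothesis, which is the same one used in the autonomous case. Second, the precise definition of $\mathcal{K}^{(1)}_{\mathcal{S}_m^c}$ must be set up correctly: it must be a deterministic function of $(\omega,u)$, for example $P_{\mathcal{S}_m^c}\mathcal{K}^{(1)}_c$ restricted to the subspace or its EDMD realization. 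If instead one uses the data-driven Galerkin matrix, the identity holds in the large-sample limit, and the finite-sample cross term goes to zero by the law of large numbers. Finally, I would note that setting $U=0$ recovers \eqref{eq:main_reskmd_decomp}, which serves as a consistency check.
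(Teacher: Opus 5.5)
Your proposal is correct and follows essentially the same route as the paper's proof: add and subtract the conditional mean $\mathbb{E}[Y\mid\omega,u]=[\mathcal{K}^{(1)}_c\boldsymbol{\Psi}](\omega,u)$, note that the cross term vanishes by zero conditional mean, identify the two remaining terms as the squared truncation error and $\mathrm{Tr}[\mathrm{Cov}]$, then integrate against $\mu$. Your added square-integrability/Fubini hypothesis and the requirement that $\mathcal{K}^{(1)}_{\mathcal{S}_m^c}\boldsymbol{\Psi}$ be a deterministic function of $(\omega,u)$ are implicit in the paper and worth stating; the finite-sample EDMD remark is outside the proposition and not needed.
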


In order to make the rate-induced mechanism explicit, we define the tracking error relative to the moving attracting branch (see Figure~\ref{fig:graphical_abstract}c)
\begin{equation}
    \bar{\omega}_t := \underbrace{\omega_t-\omega^\star(u_t)}_{\text{traking error}},
\end{equation}

which measures the deviation of the evolving system state from the instantaneous attracting branch associated with the control variable \(u_t\). In the rate-induced setting, tipping corresponds to growth of this tracking error despite the dynamics remaining locally stable. Linearizing around $\omega^\star(u_t)$ yields (derivation in Supplementary Information~\ref{si-sec:rate_perturb_dynamics})
\begin{equation}
    \bar{\omega}_{t+1} = \underbrace{\mathcal{J}(u_t)\bar{\omega}_t}_{\text{local restoration}} + \underbrace{\omega^\star(u_t)-\omega^\star(u_{t+1})}_{\text{branch drift, }d_t} + \underbrace{\epsilon_t}_{\text{high-order terms}}.
\end{equation}

The first term describes local restoration toward the attracting branch, while the second term $d_t$ denotes the branch drift induced by the changing control variable, and $\epsilon_t$ collects stochastic and higher-order contributions. In this picture, R-tipping occurs when branch drift is not sufficiently compensated by local restoration. 

We assume that the tracking error lies in, or is well approximated by, the span of the chosen observables. Concretely, there exists a bounded linear map $C : \mathbb{C}^M \to \mathbb{R}^N$ relating observables to the tracking error $\bar{\omega}_t$. Motivated by the tracking error dynamics above, our main theoretical result connects increased control-aware residuals to unresolved branch drift.

\begin{boxtheorem}{Residual lower bound from unresolved branch drift}{rtip_lower_bound}

Assume that the local restoring dynamics are captured by the retained Koopman subspace, then the control-aware residual satisfies:

\begin{equation}
    \mathrm{res}\!\left[\mathcal{K}^{(1)}_c,\mathcal{K}^{(1)}_{\mathcal{S}_m^c};\boldsymbol{\Psi}\right]
    \ge
    \frac{1}{\|C\|}
    \left(
    \|\mathcal{Q}_m d\| - \|\mathcal{Q}_m \epsilon\|
    \right),
    \label{eq:main_drift_lower_bound}
\end{equation}

where $\mathcal{Q}_m d$ and $\mathcal{Q}_m \epsilon$ denote the components of the branch drift and high-order terms unresolved by the retained Koopman representation. Proof is provided in Supplementary Information~\ref{si-sec:proof-rtip_lower_bound}.
\end{boxtheorem}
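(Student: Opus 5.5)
The plan is to combine the decomposition in Proposition~\ref{prop:reskmdc_decomp} with the linearized tracking-error recursion, and to transfer the tracking-error defect to observable space through the map $C$.

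\textbf{Step 1: reduce to the truncation term.} Since the stochastic-fluctuation term in \eqref{eq:main_reskmdc_decomp} is a trace of covariances, it is nonnegative. Proposition~\ref{prop:reskmdc_decomp} therefore gives
\[
\mathrm{res}\!\left[\mathcal{K}^{(1)}_c,\mathcal{K}^{(1)}_{\mathcal{S}_m^c};\boldsymbol{\Psi}\right]\ge \left\|\mathcal{K}^{(1)}_c\boldsymbol{\Psi}-\mathcal{K}^{(1)}_{\mathcal{S}_m^c}\boldsymbol{\Psi}\right\|.
\]
It then suffices to bound this truncation error from below.

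\textbf{Step 2: apply $C$ to the one-step evolution.} By assumption $C\boldsymbol{\Psi}(\omega_t,u_t)=\bar\omega_t$. Applying $C$ to the exact first-moment evolution gives $C\mathcal{K}^{(1)}_c\boldsymbol{\Psi}=\mathbb{E}[\bar\omega_{t+1}]$. By the linearized recursion this equals $\mathcal{J}(u_t)\bar\omega_t+d_t+\epsilon_t$, where any zero-mean noise has been absorbed into $\epsilon_t$ or has vanished under the expectation.

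Let $\mathcal{P}_m$ be the projection onto the retained subspace, transported to tracking-error space through $C$, and let $\mathcal{Q}_m=I-\mathcal{P}_m$. The hypothesis that the local restoring dynamics are captured by $\mathcal{S}_m^c$ is formalized as two conditions:
\begin{enumerate}
\item $C\mathcal{K}^{(1)}_{\mathcal{S}_m^c}\boldsymbol{\Psi}=\mathcal{P}_m\bigl(\mathcal{J}\bar\omega+d+\epsilon\bigr)$, i.e.\ the retained model reproduces the resolved part of the evolution;
\item $\mathcal{Q}_m\mathcal{J}\bar\omega=0$, i.e.\ the restoring term lies entirely in the resolved part.
\end{enumerate}
Subtracting the two evolutions then gives
\[
C\bigl(\mathcal{K}^{(1)}_c-\mathcal{K}^{(1)}_{\mathcal{S}_m^c}\bigr)\boldsymbol{\Psi}=\mathcal{Q}_m d+\mathcal{Q}_m\epsilon .
\]

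\textbf{Step 3: conclude.} The operator-norm bound $\|Cx\|\le\|C\|\,\|x\|$ gives
\[
\bigl\|(\mathcal{K}^{(1)}_c-\mathcal{K}^{(1)}_{\mathcal{S}_m^c})\boldsymbol{\Psi}\bigr\|\ge \frac{1}{\|C\|}\,\|\mathcal{Q}_m d+\mathcal{Q}_m\epsilon\|.
\]
The reverse triangle inequality gives $\|\mathcal{Q}_m d+\mathcal{Q}_m\epsilon\|\ge\|\mathcal{Q}_m d\|-\|\mathcal{Q}_m\epsilon\|$. Combining this with Step 1 yields \eqref{eq:main_drift_lower_bound}. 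If $\|\mathcal{Q}_m d\|<\|\mathcal{Q}_m\epsilon\|$, the right-hand side is negative and the bound holds trivially.

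\textbf{Main obstacle.} The hard part is Step 2: making precise the hypothesis that the restoring dynamics are captured, and making $\mathcal{Q}_m$ well defined in $\mathbb{R}^N$ when the projection actually lives in $\mathbb{C}^M$. I would first try to define $\mathcal{Q}_m$ on tracking-error space as $C(I-\Pi_m)$ acting on observable vectors, where $\Pi_m$ is the orthogonal projection onto $\mathcal{S}_m^c$, and then verify that $C\Pi_m=\mathcal{P}_mC$ on the relevant vectors. A second subtlety is the choice of norm. If the residual is measured in $L^2(\mu)$, each pointwise identity must be integrated, with $\|\mathcal{Q}_m d\|$ read as an $L^2(\mu)$ norm. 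The linear map $C$ is independent of the state, so the same operator-norm bound applies under the integral.
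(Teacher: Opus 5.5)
Your Steps~1 and~3 are sound, but Step~2 adds a hypothesis the theorem does not grant. Your condition~(1), $C\mathcal{K}^{(1)}_{\mathcal{S}_m^c}\boldsymbol{\Psi}=\mathcal{P}_m(\mathcal{J}\bar{\omega}+d+\epsilon)$, asserts that the truncated Koopman mode decomposition, pushed through $C$, is \emph{exactly} the orthogonal projection of the true first-moment evolution onto $\mathcal{S}_m^c$. The phrase ``the local restoring dynamics are captured'' means only your condition~(2), $\mathcal{Q}_m\mathcal{J}\bar{\omega}=0$, and condition~(1) does not follow from it.

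Condition~(1) is equivalent to requiring every component of $C(\mathcal{K}^{(1)}_c-\mathcal{K}^{(1)}_{\mathcal{S}_m^c})\boldsymbol{\Psi}$ to be orthogonal to $\mathcal{S}_m^c$. The truncation $\sum_{n\le m}\lambda_n^c v_n^c\phi_n^c$ simply discards the modes $n>m$. When eigenfunctions are not orthogonal, which is the generic case for a non-normal stochastic Koopman operator, those discarded modes have nonzero components along the retained ones, so this orthogonality fails. Your identity $C(\mathcal{K}^{(1)}_c-\mathcal{K}^{(1)}_{\mathcal{S}_m^c})\boldsymbol{\Psi}=\mathcal{Q}_m d+\mathcal{Q}_m\epsilon$ is therefore unjustified. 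As written, you have proved the bound only under an extra assumption.

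The missing ingredient is the best-approximation property of $\mathcal{P}_m$. It replaces your equality by an inequality that needs nothing beyond condition~(2).

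Take $\mathcal{P}_m$ to be the $L^2(\mu)$-orthogonal projection onto $\mathcal{S}_m^c$, applied componentwise to vector-valued functions. It acts on the function argument, not on $\mathbb{C}^M$. It is therefore defined equally on $\mathbb{R}^N$- and $\mathbb{C}^M$-valued functions and needs no transport through $C$, which removes your stated ``main obstacle.''

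Applying $\mathcal{Q}_m$ to the first-moment decomposition and using condition~(2) gives $\mathcal{Q}_m[\mathcal{K}^{(1)}_c\bar{\omega}]=\mathcal{Q}_m d+\mathcal{Q}_m\epsilon$. Each $\phi_n^c$ with $n\le m$ lies in $\mathcal{S}_m^c$ and $C$ is a fixed linear map. Hence every component of $C\mathcal{K}^{(1)}_{\mathcal{S}_m^c}\boldsymbol{\Psi}$ lies in $\mathcal{S}_m^c$ automatically. Minimality of the orthogonal projection, together with $\mathcal{K}^{(1)}_c\bar{\omega}=C\mathcal{K}^{(1)}_c\boldsymbol{\Psi}$, then yields
\[
\|\mathcal{Q}_m d+\mathcal{Q}_m\epsilon\|
=\bigl\|\mathcal{Q}_m[\mathcal{K}^{(1)}_c\bar{\omega}]\bigr\|
\le\bigl\|\mathcal{K}^{(1)}_c\bar{\omega}-C\mathcal{K}^{(1)}_{\mathcal{S}_m^c}\boldsymbol{\Psi}\bigr\|
=\bigl\|C\bigl(\mathcal{K}^{(1)}_c-\mathcal{K}^{(1)}_{\mathcal{S}_m^c}\bigr)\boldsymbol{\Psi}\bigr\|
\le\|C\|\,\bigl\|\bigl(\mathcal{K}^{(1)}_c-\mathcal{K}^{(1)}_{\mathcal{S}_m^c}\bigr)\boldsymbol{\Psi}\bigr\|.
\]
Your reverse triangle inequality and Step~1 then finish the proof. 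This is exactly the paper's argument, and it requires no assumption about how the truncated expansion relates to the orthogonal projection.
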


Theorem~\ref{thm:rtip_lower_bound} shows that the control-aware residual increases when branch drift, $d_t$, contains components unresolved by the retained Koopman representation. Thus, unlike B-tipping, where residual growth reflects spectral slowing near instability, increased residuals in R-tipping reflect loss of tracking under time-dependent forcing. Nonetheless, unresolved branch drift is not the only possible source of increased residuals. Additional unresolved dynamics through e.g., local restoring dynamics not fully captured by the retained Koopman subspace, and stochastic fluctuations may also contribute. Theorem~\ref{thm:rtip_lower_bound} instead identifies unresolved branch drift as a minimal mechanism through which rate-induced tracking failure contributes to residual growth.

In summary, both of these residual quantities in Equations~\ref{eq:main_reskmd_decomp} and~\ref{eq:main_reskmdc_decomp} will be used as our Koopman-based EWS in B- and R-tipping, respectively. See Methods and Supplementary Information for complete derivation and formalism.

\section*{Koopman EWS skills on prototypical systems}\label{sec:experiments}

We first benchmark our proposed Koopman early warning indicators on four prototypical stochastic systems spanning both B- and R-tipping before applying the framework to AMOC simulation data. Aside from traditional indicators such as AR1 and variance, we also plot the Koopman spectral gap, defined as $\gamma := -\mathrm{Re}(\lambda_{L,1})$, where $\lambda_{L,1}$ is the leading non-zero Koopman generator eigenvalue, i.e., the non-zero eigenvalue with real part closest to zero. Near CSD, $\gamma$ is expected to decrease toward zero~\cite{    Tantet2018}. For B-tipping, we consider May’s harvesting model and the Rosenzweig–MacArthur consumer–resource model, which exhibit fold and Hopf bifurcations, respectively~\cite{bury2021deep}. For R-tipping, we consider a forced saddle-node normal form and a Bautin oscillator, where transitions arise from failure to track a moving attracting branch under time-dependent forcing~\cite{huang2024deep}. Additional system details, parameter choices, and implementation procedures are provided in Methods and Supplementary Information.

\begin{figure}[h!]
    \centering
    \begin{subfigure}[h]{\linewidth}
        \centering
        \includegraphics[width=\linewidth]{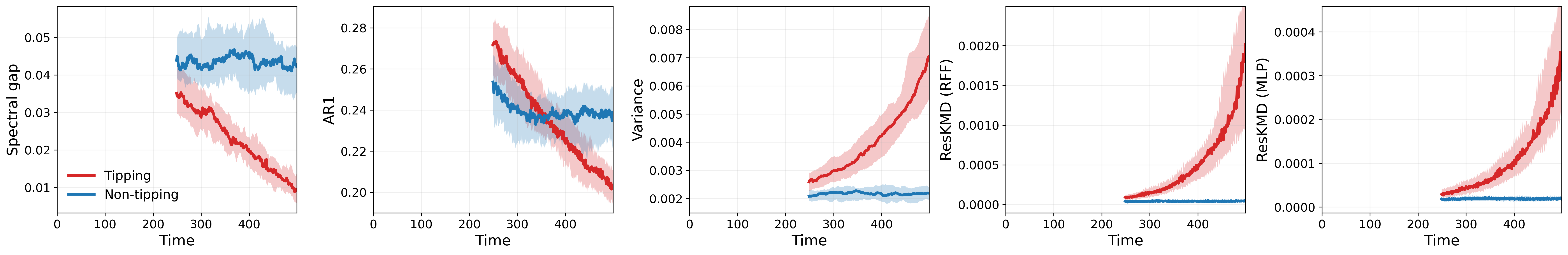}
        \caption{B-tipping: Rosenzweig-MacArthur}
        \label{fig:bifurcation_ews_rm}
    \end{subfigure}
    \hfill
    \begin{subfigure}[h]{\linewidth}
        \centering
        \includegraphics[width=\linewidth]{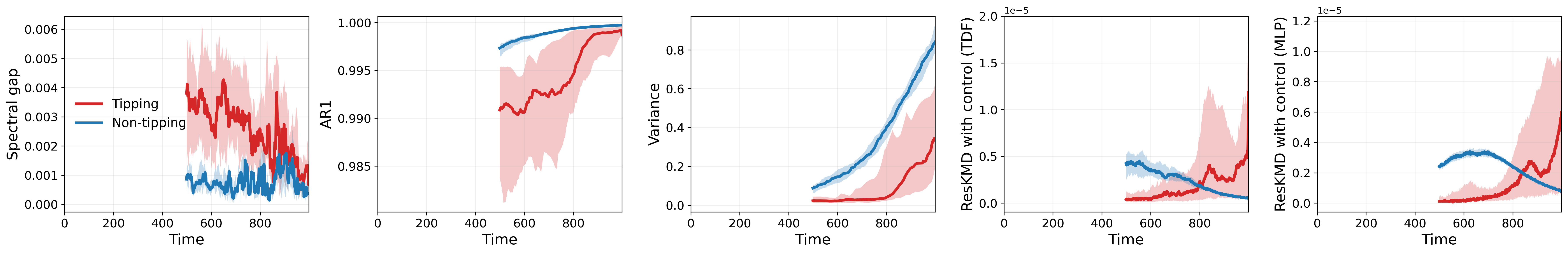}
        \caption{R-tipping: Bautin oscillator}
        \label{fig:rate_ews_bautin}
    \end{subfigure}
    \caption{Koopman EWS benchmarked against traditional metrics across tipping scenarios: In (\textbf{a}) the Rosenzweig-MacArthur model, panels show the ensemble behavior of five indicators computed on pre-transition sliding windows of length $50\%$: Spectral gap, AR1, variance, ResKMD (RFF), and ResKMD (MLP observable with a hidden size of (16, 8, 4)). In (\textbf{b}) the Bautin oscillator, panels show the ensemble behavior of five indicators computed on pre-transition sliding windows: Spectral gap, AR1, and control-aware ResKMD (TDF, MLP observable with a hidden size of (8, 4, 2)). The shaded region denotes the interquantile range around the median.}  
    \label{fig:bifurcation_ews}
\end{figure}

Figure~\ref{fig:bifurcation_ews} (see also Supplementary Section~\ref{si-sec:proto_exp} for additional results) shows representative results for the Rosenzweig–MacArthur model (B-tipping) and the Bautin oscillator (R-tipping). In the bifurcation setting, Koopman residuals increase more clearly for tipping trajectories than for non-tipping trajectories. This is consistent with the theoretical mechanism: as the system approaches the bifurcation, slow Koopman modes move toward the stability boundary, finite-dimensional spectral representations deteriorate, and stochastic fluctuations become amplified along slowly decaying directions. Learned observables obtained from MLP embeddings further improve the separation between tipping and non-tipping trajectories relative to prescribed dictionaries such as random Fourier features (RFF), consistent with the quantitative AUROC scores shown in Figure~\ref{fig:auroc}.

The rate-induced setting reveals a qualitatively different picture. Classical CSD indicators show inconsistent separation between tipping and non-tipping trajectories. For instance, AR1 and variance increase in both tipping and non-tipping cases, while the spectral gap approaches zero near the critical transition in both groups. The control-aware Koopman residual, in contrast, increases substantially more for tipping trajectories, consistent with the unresolved branch drift mechanism described earlier. This supports the interpretation that, in R-tipping, the dominant precursor is loss of tracking rather than CSD, and is also consistent with the quantitative AUROC scores shown in Figure~\ref{fig:auroc}.

\begin{figure}[h!]
    \centering
    \begin{subfigure}[h]{0.7\linewidth}
        \centering
        \includegraphics[width=\linewidth]{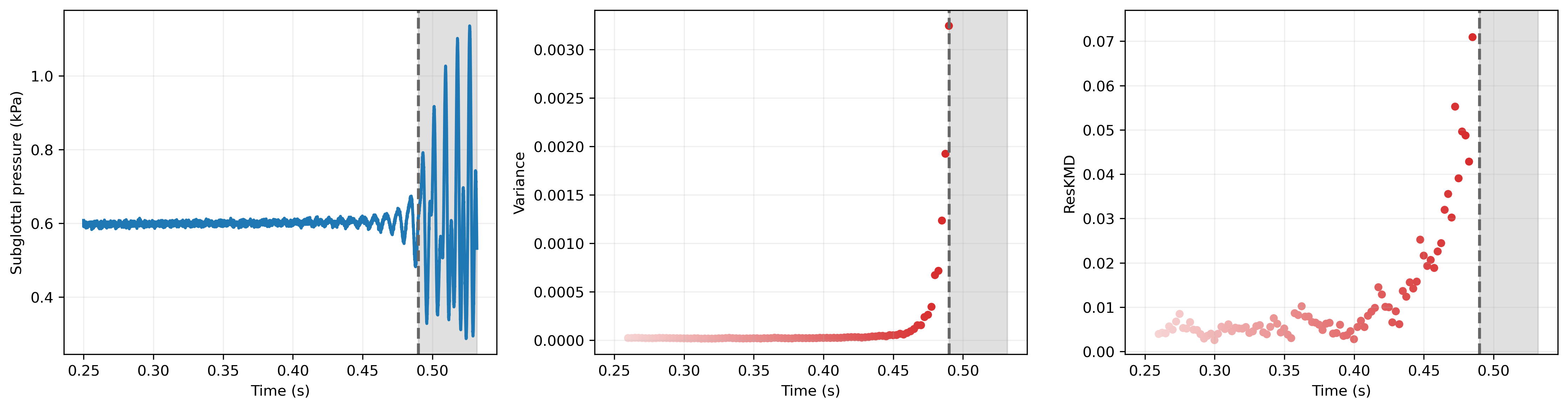}
        \caption{Voice onset}
        \label{fig:empirical_voice}
    \end{subfigure}
    \hfill
    \begin{subfigure}[h]{0.7\linewidth}
        \centering
        \includegraphics[width=\linewidth]{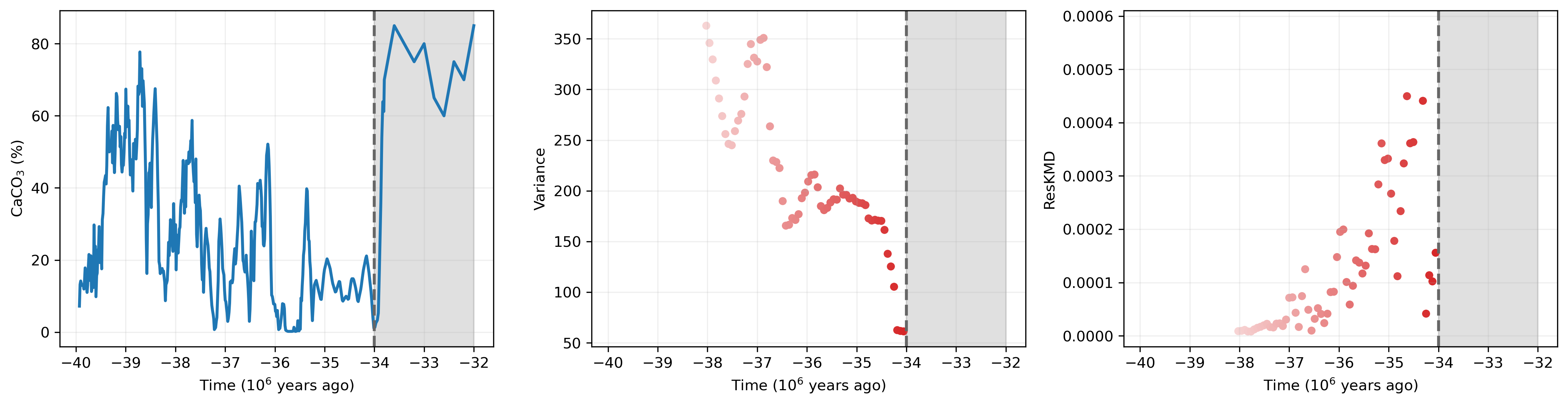}
        \caption{Paleoclimate}
        \label{fig:empirical_climate}
    \end{subfigure}
    \hfill
    \begin{subfigure}[h]{0.7\linewidth}
        \centering
        \includegraphics[width=\linewidth]{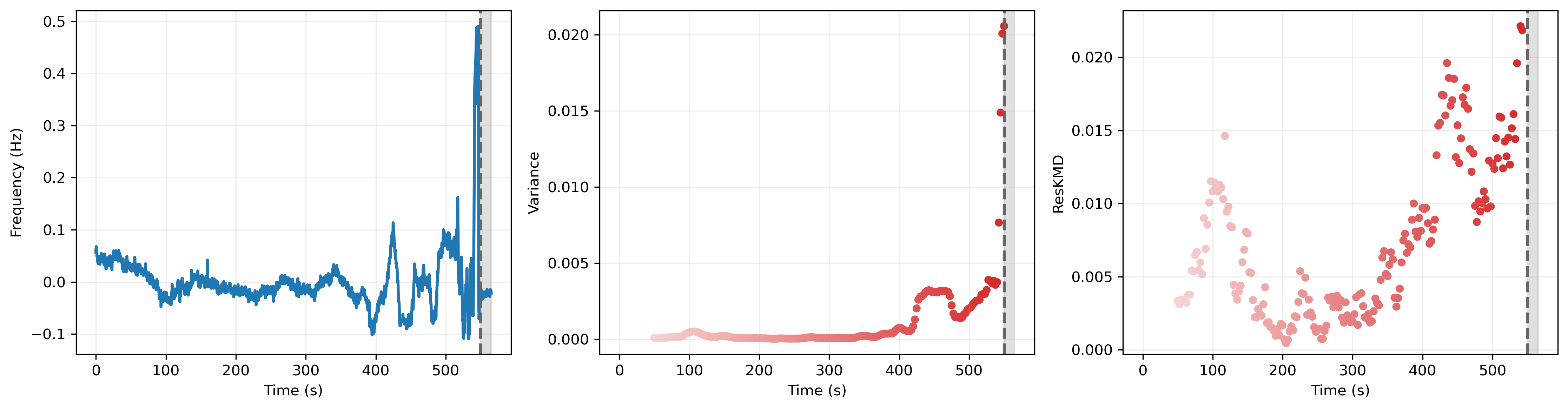}
        \caption{Electricity blackout}
        \label{fig:empirical_blackout}
    \end{subfigure}
    \caption{Koopman EWS across empirical systems: (\textbf{a}) vocal phonation exhibiting fold transition, (\textbf{b}) paleoclimate time series of calcium carbonate associated with the end of greenhouse Earth, and (\textbf{c}) electricity blackout at Western Interconnect in 1996. The grey vertical dashed line indicates identified critical transitions. We find that our Koopman EWS indicator is able to detect critical transitions earlier (e.g., voice onset and electricity blackout) and more accurately (e.g., paleoclimate) than classical CSD metrics, such as variance. For all indicators, lighter dots represent earlier time.}
    \label{fig:b_empirical_ews}
\end{figure}

\section*{Koopman EWS skills on empirical observations}
We further evaluate our Koopman EWS on a diverse collection of real-world observations spanning human and natural systems. These datasets include cyanobacteria recovery dynamics in a microcosm experiment~\cite{veraart2012recovery}, phonation onset of vocal cord~\cite{mergell1998phonation}, ATP collapse under hypoxia in plant cells~\cite{wagner2019multiparametric}, paleoclimate transitions associated with the end of greenhouse Earth~\cite{dakos2008slowing}, and electricity blackout~\cite{council1996western}. Together, these systems probe a broad range of temporal scales, noise characteristics, and underlying transition mechanisms beyond idealized dynamical models. Additional dataset descriptions and details are provided in the Methods section.

Figure~\ref{fig:b_empirical_ews} (see also Figure~\ref{si-fig:b_empirical_ews} for additional results) shows that the Koopman-based indicators remain informative across these heterogeneous empirical settings. In systems associated with classical CSD, such as the microcosm and paleoclimate examples, the Koopman residual exhibits increasing trends consistent with growing dynamical memory and spectral slowing prior to transition. In more complex or noisy settings, including the blackout and phonation examples, the residual-based indicators remain sensitive to changes in the underlying dynamical structure even when conventional indicators appear less consistent. Overall, these results suggest that the Koopman framework can extract informative dynamical signatures from empirical observations across a wide range of domains and observational conditions.

\section*{Tipping of the Atlantic Meridional Overturning Circulation}

The Atlantic Meridional Overturning Circulation (AMOC) is a major component of the climate system that transports heat and salinity through the Atlantic Ocean. Its weakening or collapse has long been discussed as a potential climate tipping phenomenon because it can reorganize large-scale ocean circulation and strongly affect regional and global climate~\cite{wunderling2024climate,lenton2008tipping}. We therefore use AMOC as a physically relevant testbed for our Koopman framework in both the reduced box-model setting and coupled global circulation model (GCM).

\begin{figure}[h!]
    \centering
    \begin{subfigure}[h]{0.8\linewidth}
        \centering
        \includegraphics[width=\linewidth]{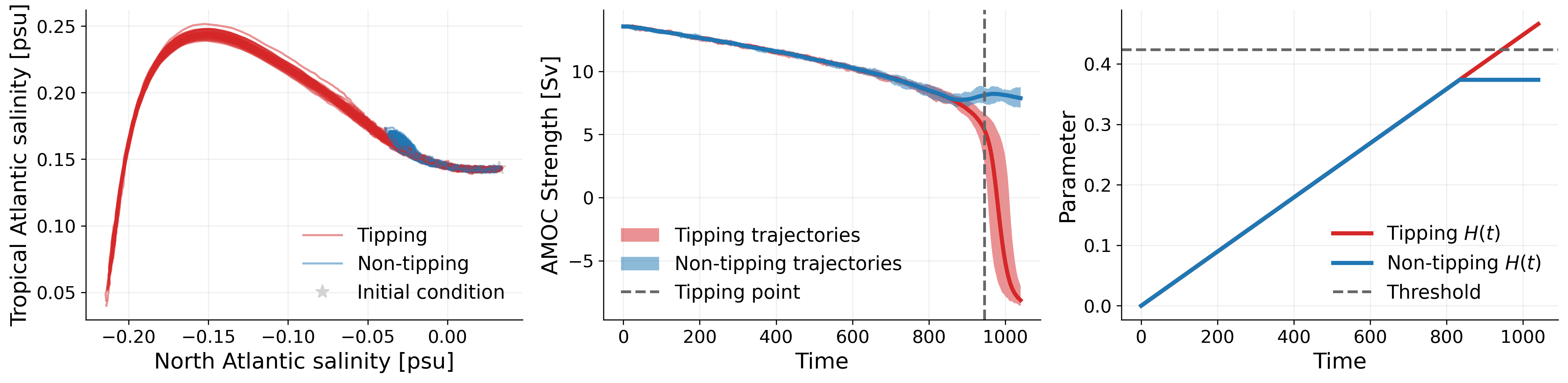}
        \caption{AMOC box-model (B-tipping)}
        \label{fig:bifurcation_examples_amoc3b}
    \end{subfigure}
    \hfill
    \begin{subfigure}[h]{0.8\linewidth}
        \centering
        \includegraphics[width=\linewidth]{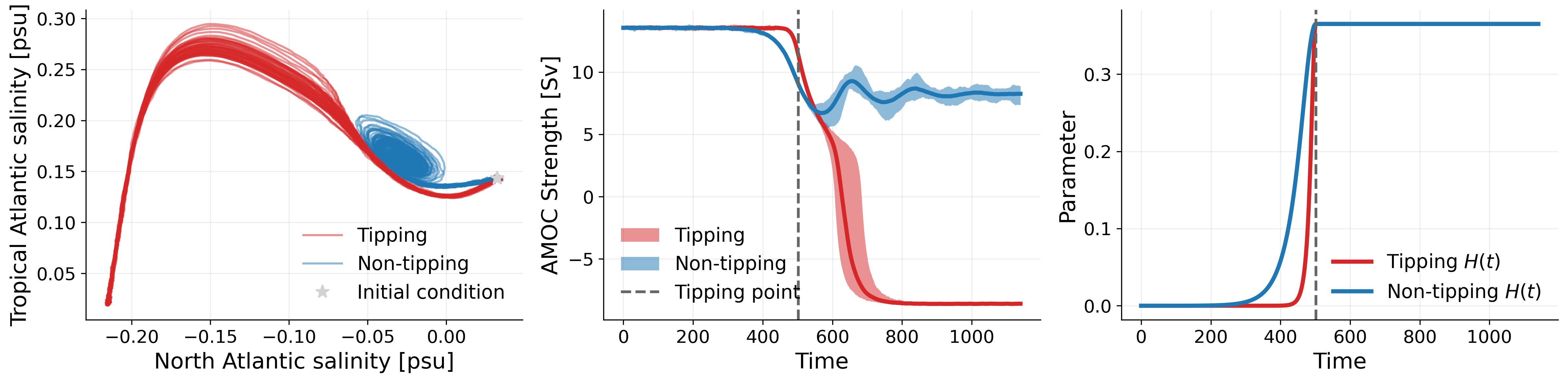}
        \caption{AMOC box-model (R-tipping)}
        \label{fig:rate_examples_amoc3b}
    \end{subfigure}
    \caption{Idealized AMOC box-model trajectories. (\textbf{a}) B-tipping and (\textbf{b}) R-tipping. Each panel shows \textbf{left}: phase-space evolution in $(S_T,S_N)$, \textbf{center}: AMOC strength $Q$, and \textbf{right}: the corresponding hosing or control schedule.}
    \label{fig:amoc3box}
\end{figure}

For the reduced setting (see Figure~\ref{fig:amoc3box} for sample realizations), we adopt the two-dimensional box-model from \cite{ritchie2023rate,alkhayuon2019basin}, obtained from a five-box ocean model by treating Southern Ocean and bottom water salinities as slow variables and diagnosing Indo-Pacific salinity from salt conservation. The resulting state variables are the North Atlantic and tropical Atlantic salinities $(S_N, S_T)$, from which the AMOC strength $Q$ is diagnosed (see Methods Equation~\ref{eq:amoc_strength}) and supplied as the scalar observable to all early warning estimators. The salinity dynamics are piecewise-defined according to the sign of $Q$ (see Methods Equations~\ref{eq:amoc_sde_posq_SN}–\ref{eq:amoc_sde_negq_ST}) and driven by surface freshwater hosing $H(r,t)$. Departing from the original deterministic formulation, we add white noise with amplitude $\sigma_N=\sigma_T=1.0$ to both salinity equations. We configure the model for both tipping classes: in the B-tipping setup, $H$ is ramped linearly past the deterministic fold at $H^\star = 0.4236$ \cite{alkhayuon2019basin}, with non-tipping runs capped just below threshold; in the R-tipping setup, tipping and non-tipping trajectories follow an identical hyperbolic-secant forcing profile (see Methods Equation~\ref{eq:amoc_rate_forcing}) differing only in the forcing rate $r$.

\begin{figure}[h!]
    \centering
    \begin{subfigure}[h]{\linewidth}
        \centering
        \includegraphics[width=\linewidth]{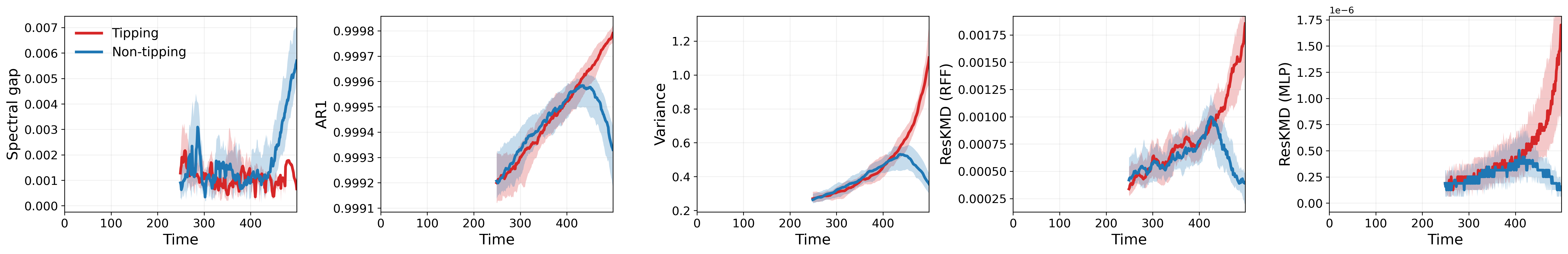}
        \caption{AMOC box-model (B-tipping)}
        \label{fig:bifurcation_ews_amoc3b}
    \end{subfigure}
    \hfill
    \begin{subfigure}[h]{\linewidth}
        \centering
        \includegraphics[width=\linewidth]{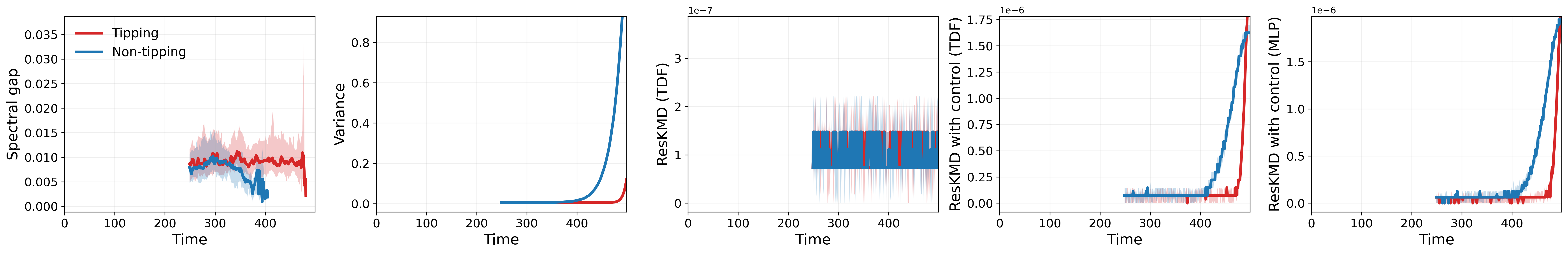}
        \caption{AMOC box-model (R-tipping)}
        \label{fig:rate_ews_amoc3b}
    \end{subfigure}
    \caption{EWS for the B- and R-tipping setups in AMOC box-model. For each setup, the panels show the ensemble behavior of five indicators computed on pre-transition sliding windows of length $50\%$. (\textbf{a}) Spectral gap, AR1, variance, ResKMD (RFF), and ResKMD (MLP observable with a hidden size of (8, 4, 2)). (\textbf{b}) Spectral gap, variance, ResKMD (TDF), and control-aware ResKMD (TDF, MLP observable with a hidden size of (8, 4, 2)). The shaded region denotes the interquantile range around the median.}
    \label{fig:amoc3box_ews}
\end{figure}

As illustrated in Figure~\ref{fig:amoc3box_ews}, the results from AMOC box-model mirror the patterns seen in the prototypical systems. In the B-tipping setting, AR1, variance and ResKMD (RFF and MLP) all exhibit increasing trend prior to transition, consistent with CSD phenomena. In the R-tipping setting, by contrast, classical indicators fail to provide useful early warning signal as the non-tipping trajectories tend to be falsely characterized as tipping (e.g., increasing variance, diminishing spectral gap). The control-aware ResKMD, however, provides a better tipping signal as the indicator abruptly rises near the tipping threshold at a much faster rate than their CSD-based counterpart. This is also evident in our quantitative AUROC measures in Figure~\ref{fig:auroc} where control-aware ResKMD outperforms across the board.

We next consider the 10-member NASA GISS-E2.1-G ensemble studied in \cite{romanou2023stochastic} under the SSP2-4.5 scenario. The ensemble exhibits branching behavior in AMOC strength at $48^\circ$N: eight members remain on a relatively strong overturning branch, whereas two evolve toward a markedly weaker state. 

\begin{figure}[h!]
    \centering
    \begin{subfigure}[h]{\linewidth}
    \includegraphics[width=\linewidth]{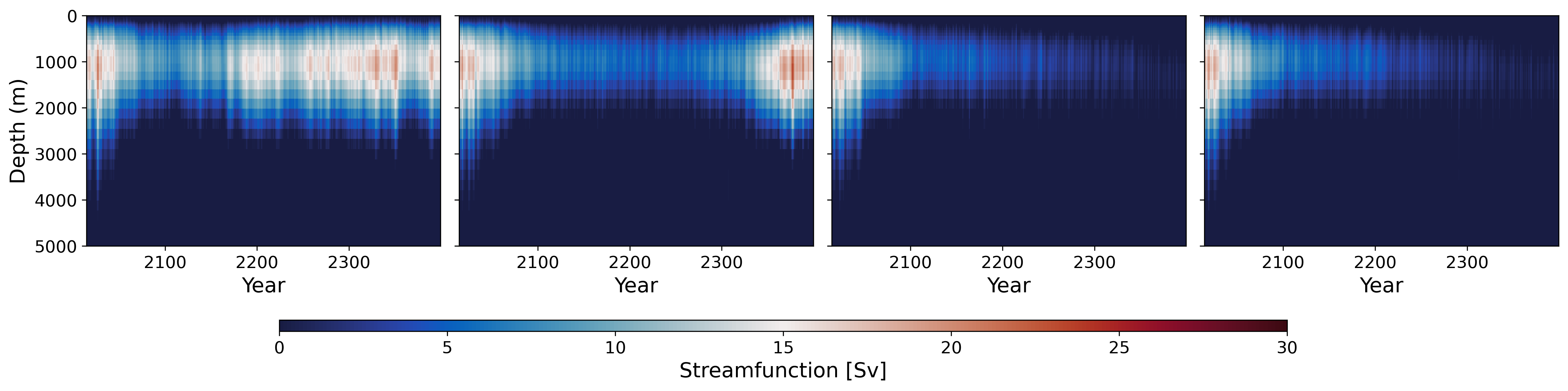}
    \caption{Depth profiles of overturning streamfunction}
    \label{fig:amoc_profile}
    \end{subfigure}
    \begin{subfigure}[h]{0.48\linewidth}
        \centering
        \includegraphics[width=\linewidth]{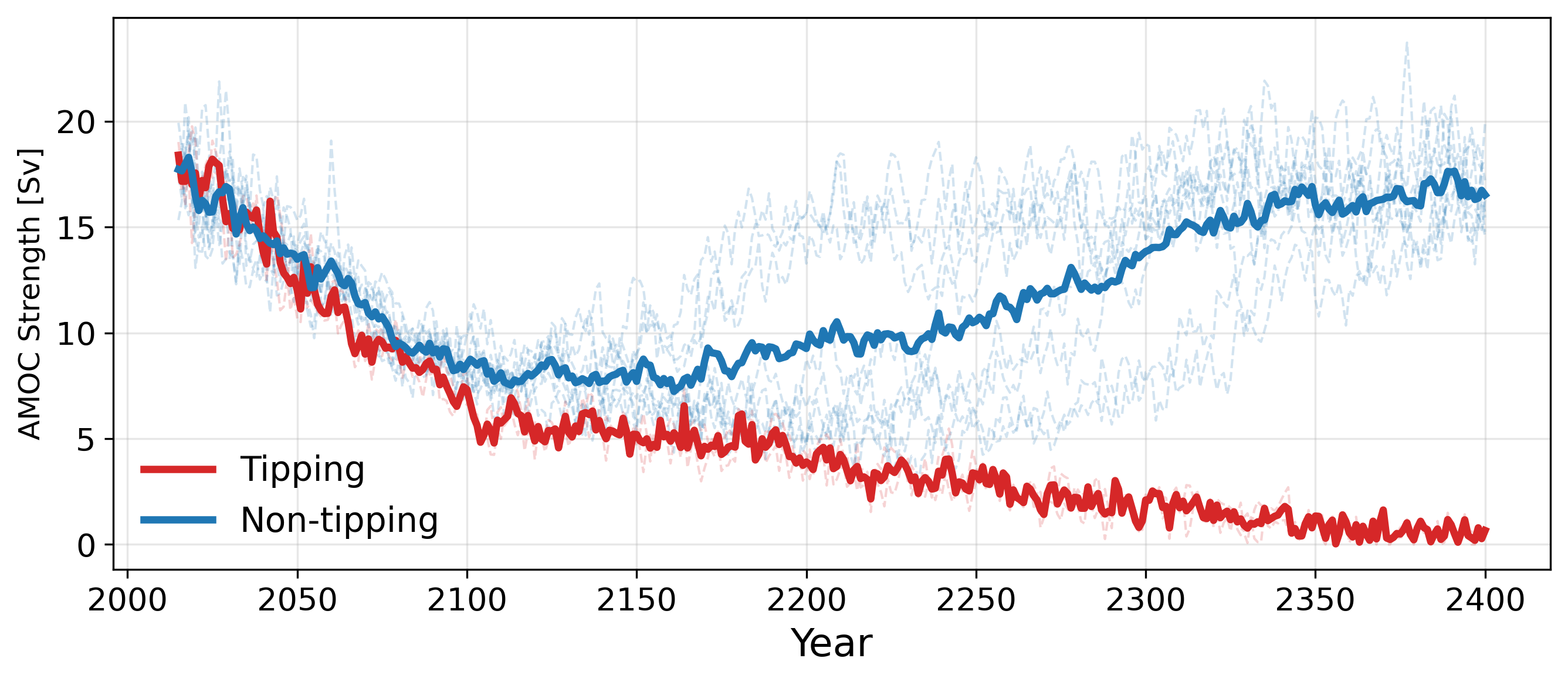}
        \caption{AMOC strength}
        \label{fig:amoc_amoc48N_ssp245_traj}
    \end{subfigure}
    \hfill
    \begin{subfigure}[h]{0.48\linewidth}
        \centering
        \includegraphics[width=\linewidth]{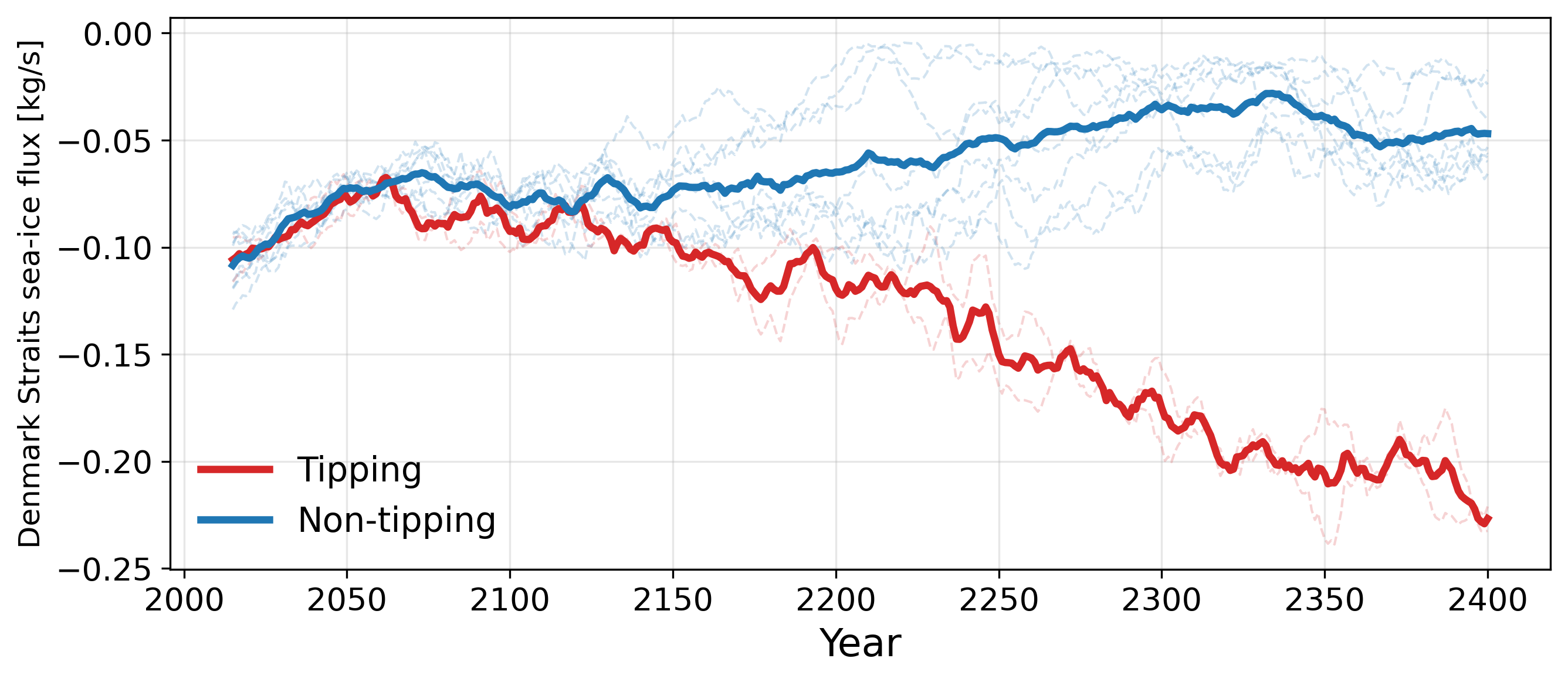}
        \caption{Denmark Strait sea-ice flux}
        \label{fig:control_si_flux_denmarkstr_ssp245_traj}
    \end{subfigure}
    \caption{AMOC tipping (red) and non-tipping (blue) runs at $48^\circ$N. (\textbf{a}) Depth profiles of Atlantic overturning streamfunction for representative coupled AMOC trajectories: the first two panels show non-tipping runs that recover to the stronger branch, while the latter two show tipping runs that evolve toward a weaker circulation state. (\textbf{b}) AMOC strength trajectories and (\textbf{c}) Denmark Strait sea-ice flux used as the control variable in the control-aware Koopman analysis. Light lines denote individual ensemble members and thick lines denote ensemble means. AMOC strength is computed as the maximum overturning streamfunction below 500\,m, and Denmark Strait sea-ice flux is smoothed with a 10-year running mean.}
    \label{fig:amoc_control_traj}
\end{figure}

Figure~\ref{fig:amoc_profile} illustrates this separation using depth profiles of the Atlantic overturning streamfunction, with non-tipping realizations retaining a stronger overturning cell and tipping realizations showing pronounced weakening, consistent with the AMOC strength trajectories in Figure~\ref{fig:amoc_amoc48N_ssp245_traj}. As a physically motivated control variable, we use annual Denmark Strait sea-ice mass flux, which \cite{romanou2023stochastic} linked to the weakening branch through enhanced freshwater release (Figure~\ref{fig:control_si_flux_denmarkstr_ssp245_traj}). For this study, we treat the weak-branch members as tipping trajectories and the remainder as non-tipping trajectories ($26^\circ$N results are reported in Supplementary Information~\ref{si-sec:amoc_26N}). Before supplying this control variable to the control-aware ResKMD, we smooth it using a 10-year running mean. Sensitivity to alternative control variables, such as maximum mixed layer depth (MLD) across different ocean basins, is reported in Supplementary Information~\ref{si-sec:ablate_amoc_forcing}.

Figure~\ref{fig:amoc_ews_48} shows that the control-aware Koopman indicators provide the clearest separation between tipping and non-tipping AMOC trajectories in the coupled NASA GISS ModelE ensemble at $48^\circ$N. Classical CSD-based metrics, especially AR1, remain highly variable across ensemble members and do not show a consistent increase prior to trajectory divergence. Variance exhibits a smoother trend, but still provides weaker separation. By contrast, control-aware ResKMD, particularly with MLP observables, increases more strongly and earlier for the tipping members, with a marked rise around year 2150, while remaining comparatively small for the non-tipping members. In the MLP case, the tipping trajectories eventually decrease and converge after the transition, consistent with relaxation toward a different attracting state. These results suggest that, in the coupled setting, the relevant precursor is not fully captured by indicators based on the AMOC time series alone. Instead, it is more clearly expressed as a mismatch between AMOC evolution and the supplied control variable in the augmented Koopman space. This interpretation is consistent with the idea proposed in \cite{Borner2025} that the strong trajectory separation observed in \cite{romanou2023stochastic} may be associated with a close encounter with an edge state, which acts as a dynamical saddle and reduces predictability; see also \cite{Lohmann2024}.

\begin{figure}[h!]
    \centering
    \begin{subfigure}[h]{\linewidth}
        \includegraphics[width=\linewidth]{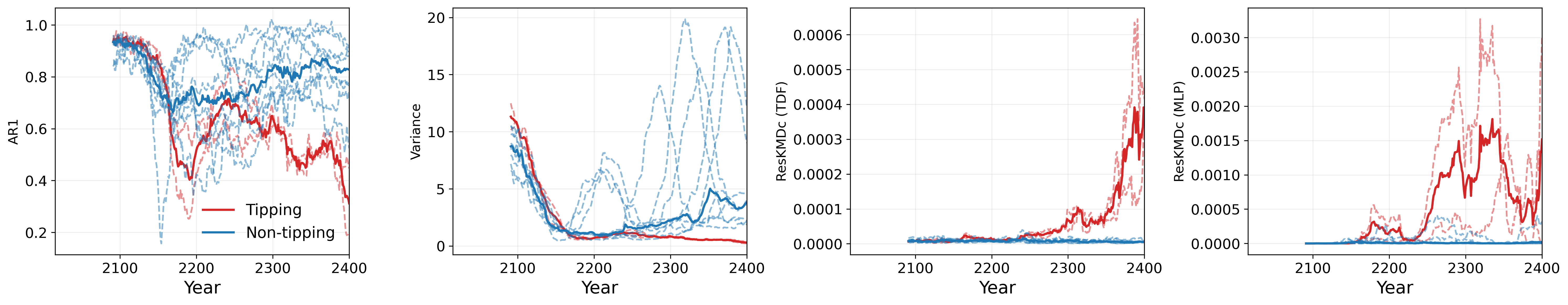}
        \caption{EWS}
        \label{fig:amoc_ews_48}
    \end{subfigure}
    \hfill
    \begin{subfigure}[h]{0.255\linewidth}
        \centering
        \includegraphics[width=\linewidth]{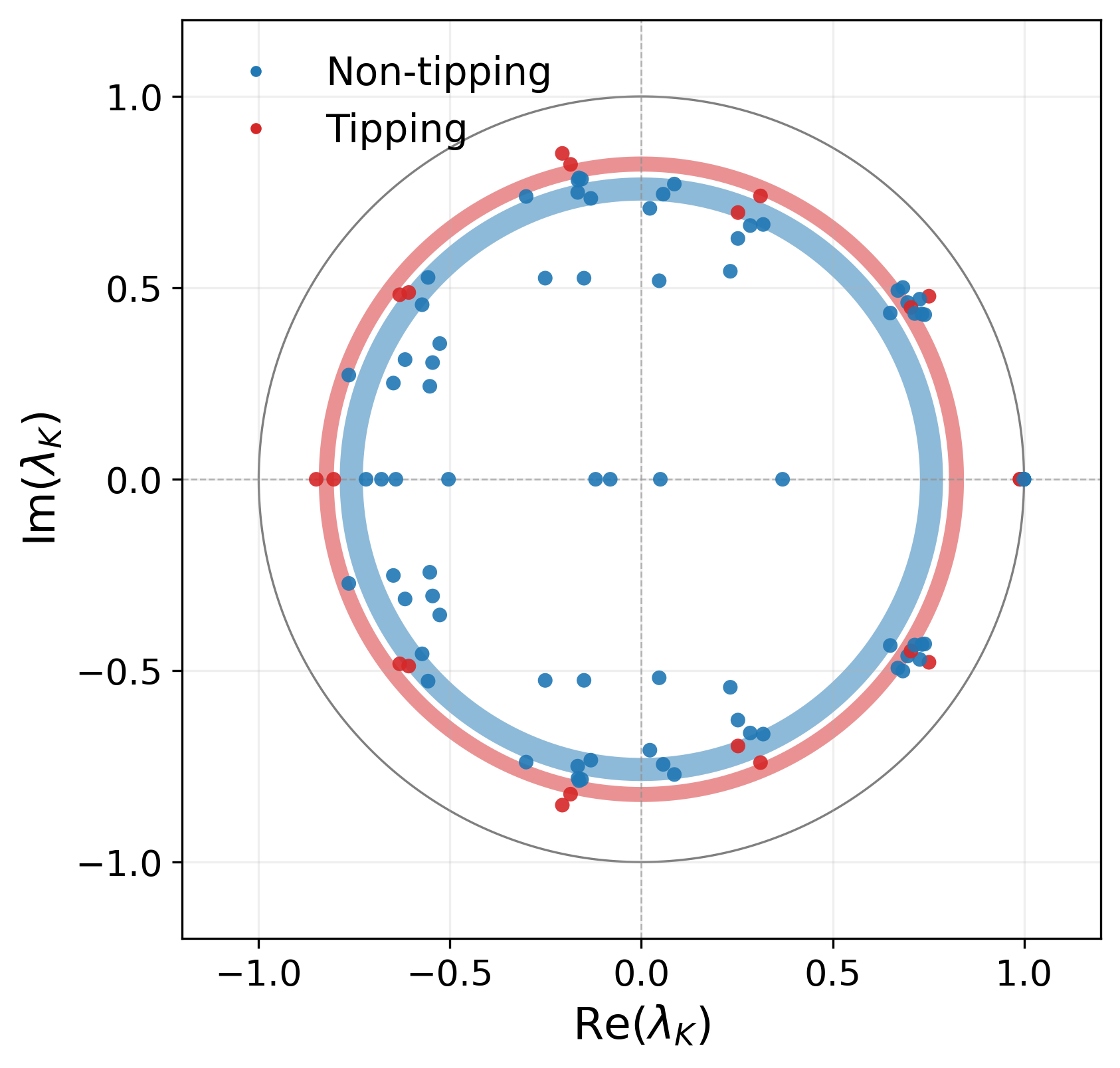}
        \caption{Discrete spectrum}
        \label{fig:amoc_amoc48N_ssp245_disc_spec}
    \end{subfigure}
    \hfill
    \begin{subfigure}[h]{0.240\linewidth}
        \centering
        \includegraphics[width=\linewidth]{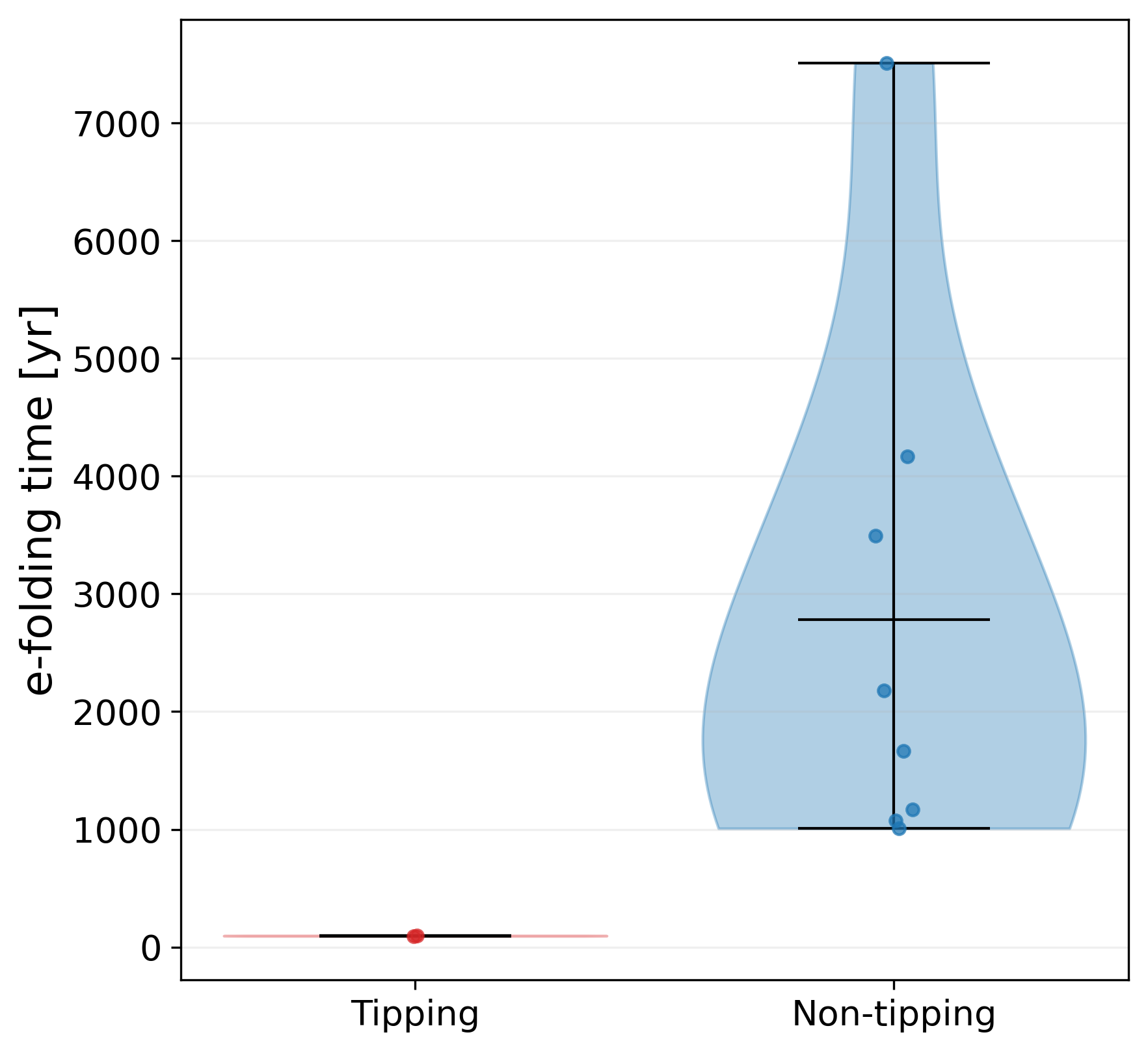}
        \caption{E-folding years}
        \label{fig:amoc_amoc48N_ssp245_efold}
    \end{subfigure}
    \hfill
    \begin{subfigure}[h]{0.41\linewidth}
        \centering
        \includegraphics[width=\linewidth]{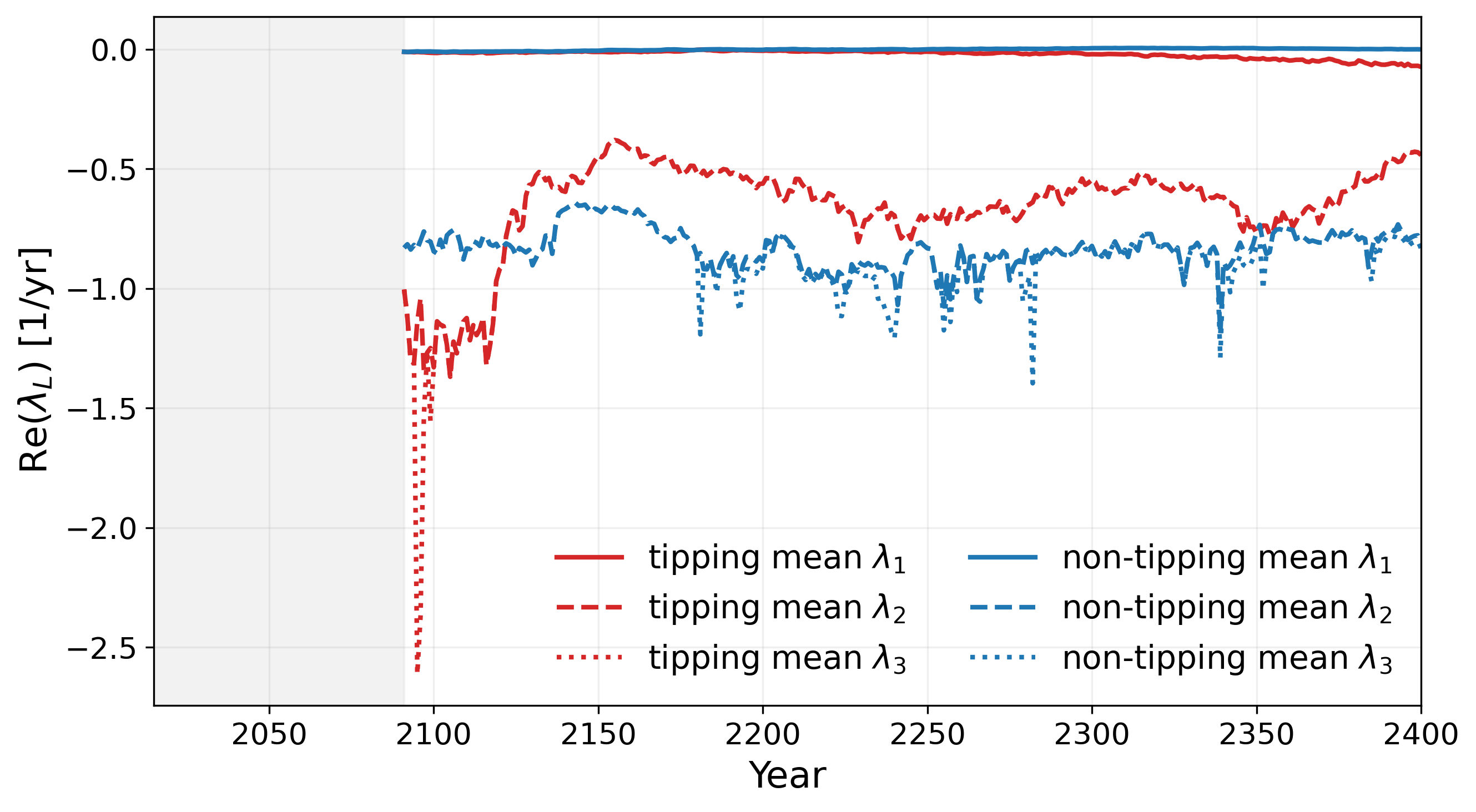}
        \caption{Rolling $\lambda_L$}
        \label{fig:amoc_amoc48N_ssp245_gap}
    \end{subfigure}
    \caption{Koopman EWS and spectral diagnostics for the coupled AMOC ensemble at $48^\circ$N. \textbf{(a)} EWS for the coupled AMOC ensemble at $48^\circ$N. The panels show the ensemble behaviour of AR1, variance, and control-aware ResKMD using TDF with delay size of 20 and MLP observables with a hidden size of of (16, 8, 4). \textbf{(b)} discrete-time Koopman eigenvalues $\lambda_K$ in the unit disk with colored annulus representing $40-60\%$-percentile distribution. \textbf{(c)} e-folding times $T_e = 1/|\mathrm{Re}(\lambda_L)|$ of the slowest decaying non-constant generator mode for each ensemble member. \textbf{(d)} rolling group means of the three generator eigenvalues with smallest $|\mathrm{Re}(\lambda_L)|$, obtained from EDMD fits on rolling windows of length $20\%$ of the full record. Red denotes tipping members and blue denotes non-tipping members.}
    \label{fig:amoc_spectral_48}
\end{figure}

Beyond early warning detection, our Koopman framework provides access to interpretable spectral diagnostics. Here we focus on the $48^\circ$N results using the TDF observables setting (see Figure~\ref{si-fig:amoc_26N} for similar results in $28^\circ$N). We first examine the distribution of the discrete-time Koopman eigenvalues $\lambda_{K,j}\in\mathbb{C}$ (Figure~\ref{fig:amoc_amoc48N_ssp245_disc_spec}). The tipping ensemble exhibits a broader and more outward radial distribution, with eigenvalues lying closer to the unit circle. Since the decay of discrete-time Koopman modes is governed by $|\lambda_{K,j}|$, this indicates slower decay, weaker damping, and thus reduced restoring stability of the stronger AMOC branch in the tipping ensemble.

To obtain physically interpretable decay rates and oscillation frequencies, we map to the continuous-time generator spectrum through $\lambda_{K,j}=e^{\lambda_{L,j}\Delta t}=e^{(\sigma_j+i\omega_j)\Delta t}$, where $\sigma_j=\mathrm{Re}(\lambda_{L,j})$ is the exponential decay rate, $\omega_j=\mathrm{Im}(\lambda_{L,j})$ is the angular frequency, and $\Delta t$ has annual resolution. From this we compute the e-folding time $T_{e,j}=1/|\sigma_j|$. We report the dominant $T_e$ associated with the slowest-decaying non-constant mode. As illustrated in Figure~\ref{fig:amoc_amoc48N_ssp245_efold}, the non-tipping ensemble tends to exhibit larger $T_e$, suggesting a more persistent slow mode associated with an eventual recovery toward the stronger AMOC branch rather than collapse.

Finally, we examine spectral gap behavior by repeating the EDMD fit in a rolling window of length $20\%$ of the available record and retaining, within each window, the three modes with smallest $|\mathrm{Re}(\lambda_{L,j})|$. Figure~\ref{fig:amoc_amoc48N_ssp245_gap} shows that the tipping ensemble tends to exhibit slow modes that move more quickly toward the stability boundary $\mathrm{Re}(\lambda_{L,j})=0$ and remain closer to zero than in the non-tipping ensemble. This indicates spectral densification and a reduced spectral gap in the tipping case, consistent with a weakening of the isolated point-spectrum description. It is also consistent with the larger residuals, which indicate that the dynamics are less well captured by a small number of isolated Koopman modes prior to AMOC weakening.

\begin{figure}[h!]
    \centering
    \begin{subfigure}[h]{0.3\linewidth}
        \includegraphics[width=\linewidth]{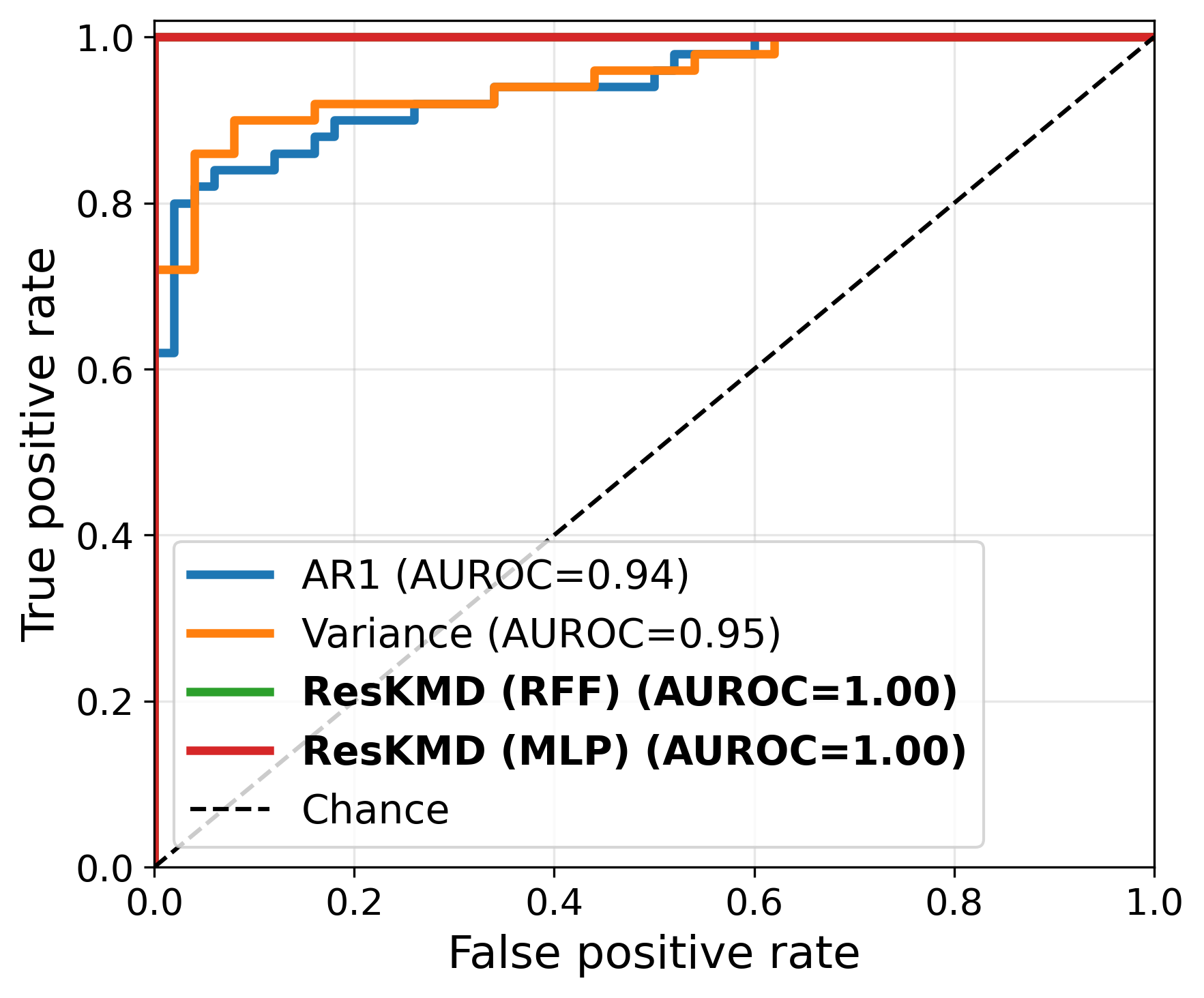}
        \caption{May's harvesting}
        \label{fig:b_harvesting_roc}
    \end{subfigure}
    \hfill
    \begin{subfigure}[h]{0.3\linewidth}
        \centering
        \includegraphics[width=\linewidth]{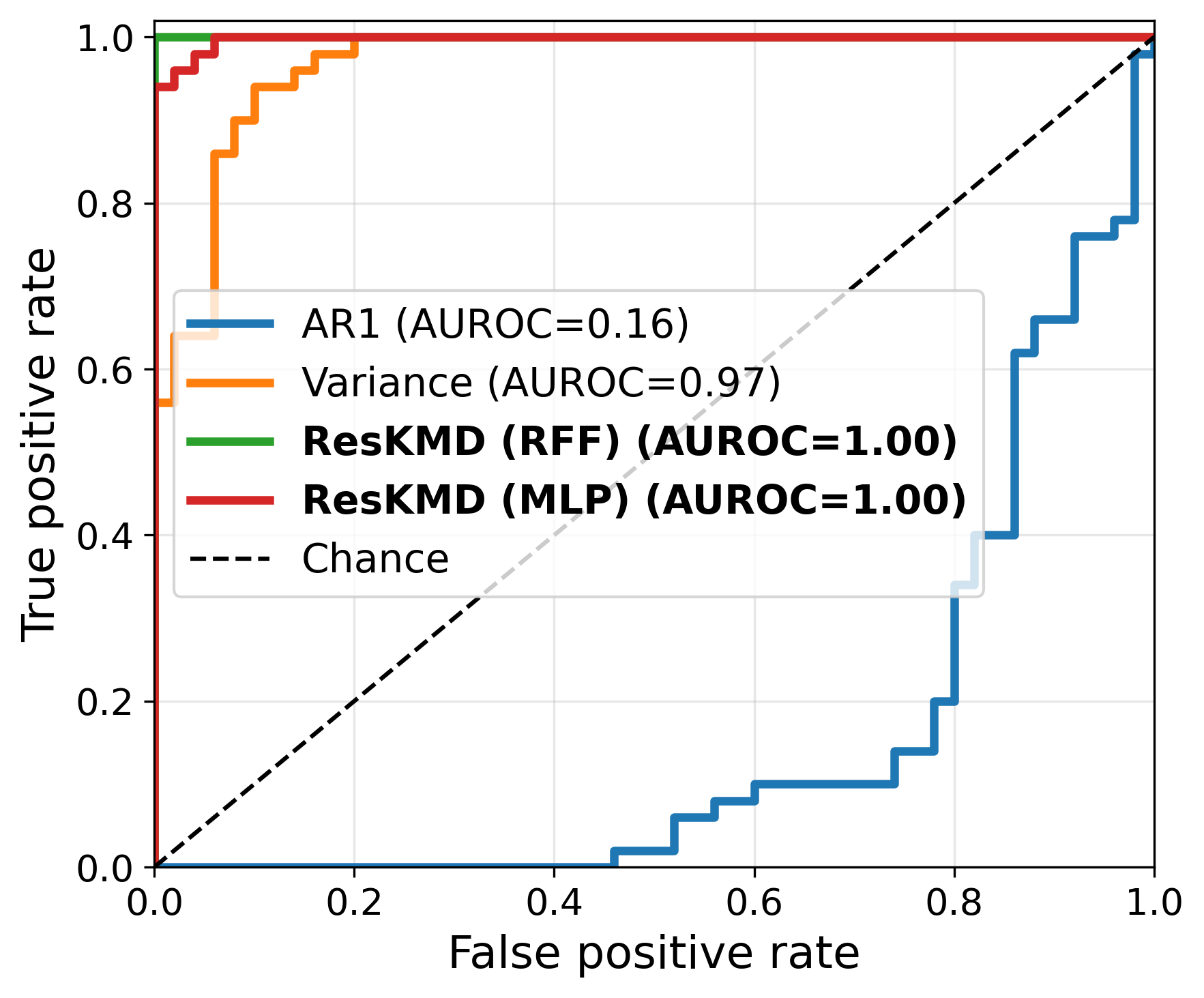}
        \caption{Rosenzweig-MacArthur}
        \label{fig:b_rosenzweig_macarthur_roc}
    \end{subfigure}
    \hfill
    \begin{subfigure}[h]{0.3\linewidth}
        \centering
        \includegraphics[width=\linewidth]{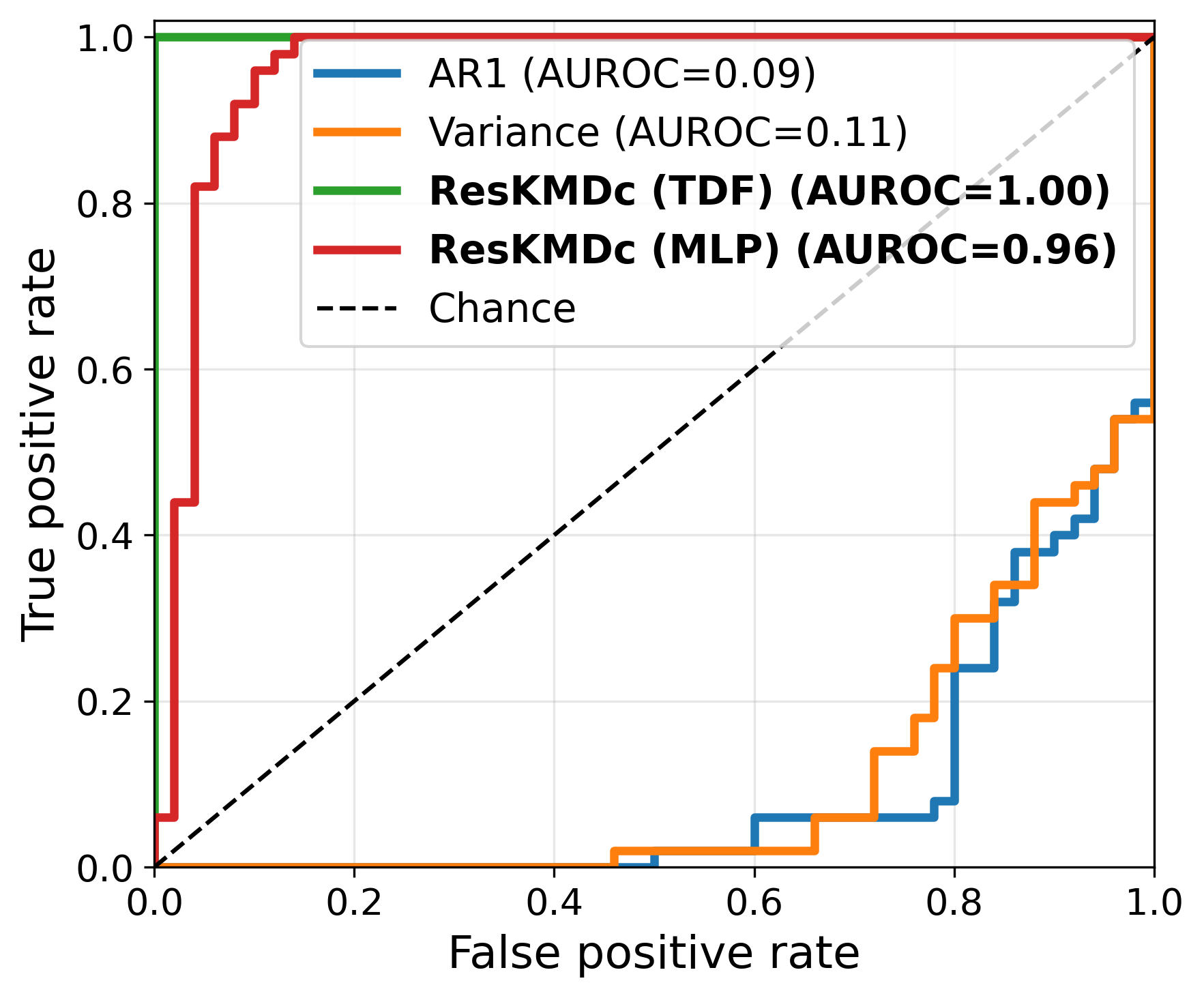}
        \caption{Saddle-node}
        \label{fig:r_saddle_node_roc}
    \end{subfigure}
    \hfill
    \begin{subfigure}[h]{0.3\linewidth}
        \centering
        \includegraphics[width=\linewidth]{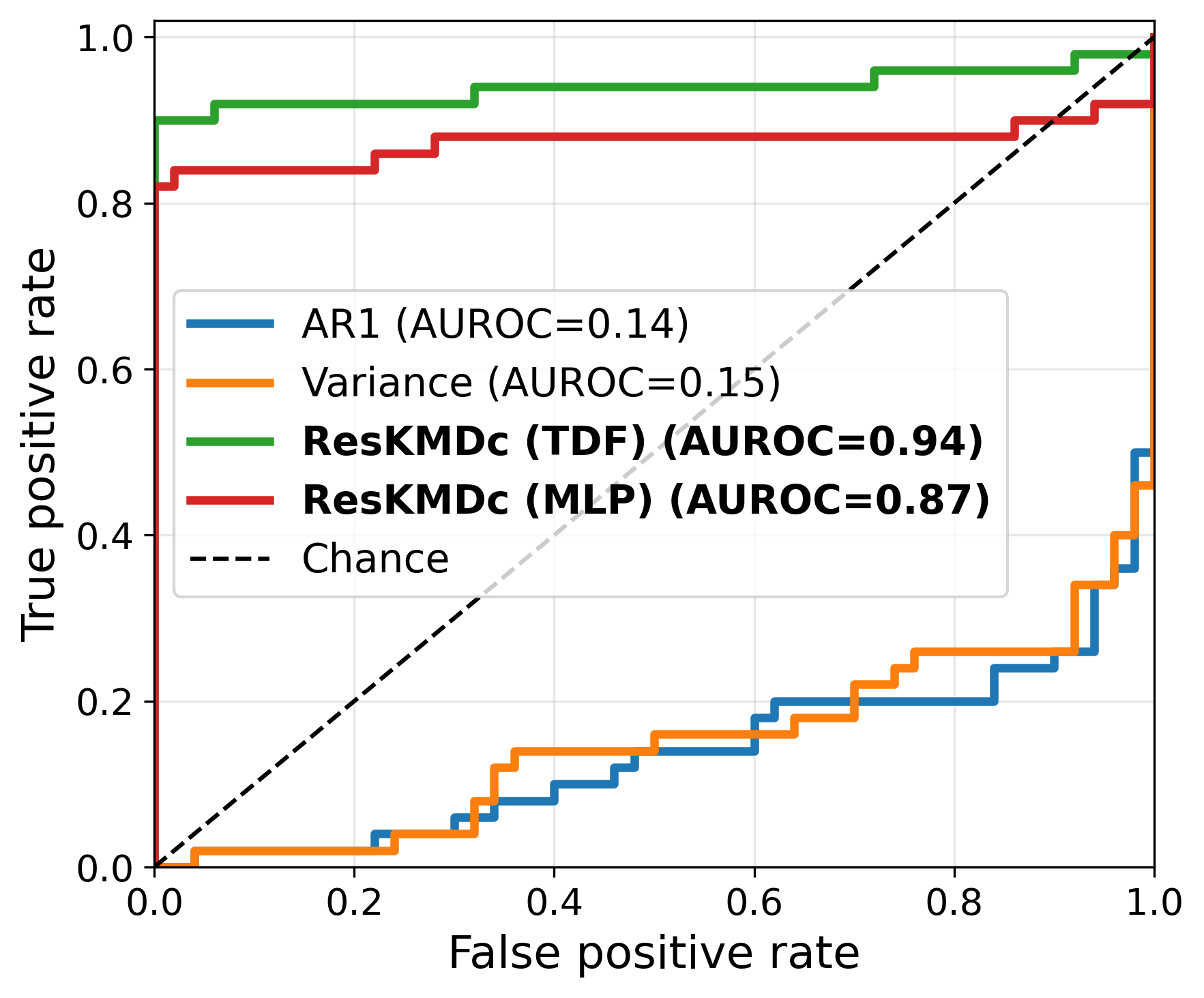}
        \caption{Bautin}
        \label{fig:r_bautin_roc}
    \end{subfigure}
    \hfill
    \begin{subfigure}[h]{0.3\linewidth}
        \centering
        \includegraphics[width=\linewidth]{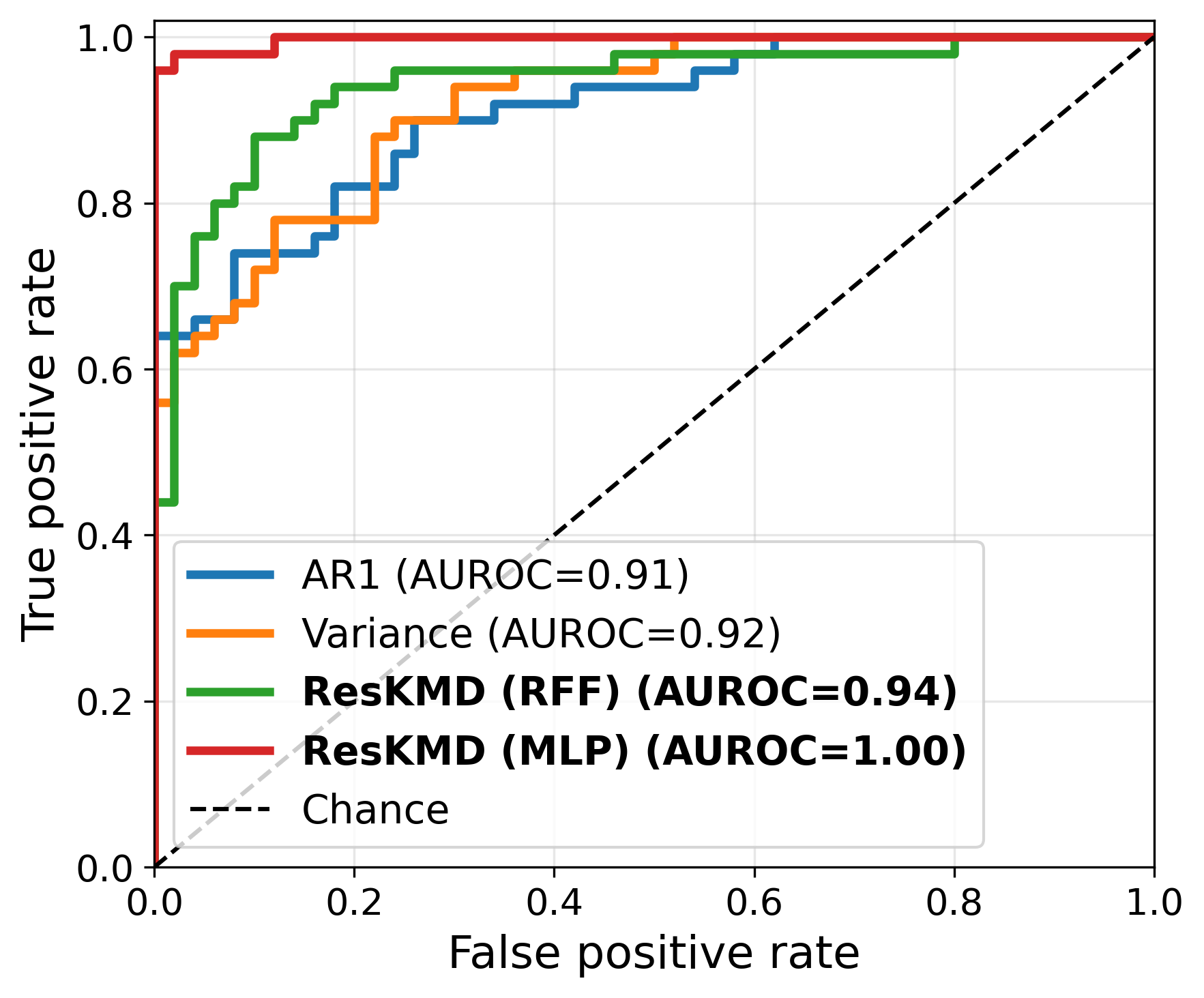}
        \caption{B-tipping: AMOC box}
        \label{fig:b_amoc3b_roc}
    \end{subfigure}
    \hfill
    \begin{subfigure}[h]{0.3\linewidth}
        \centering
        \includegraphics[width=\linewidth]{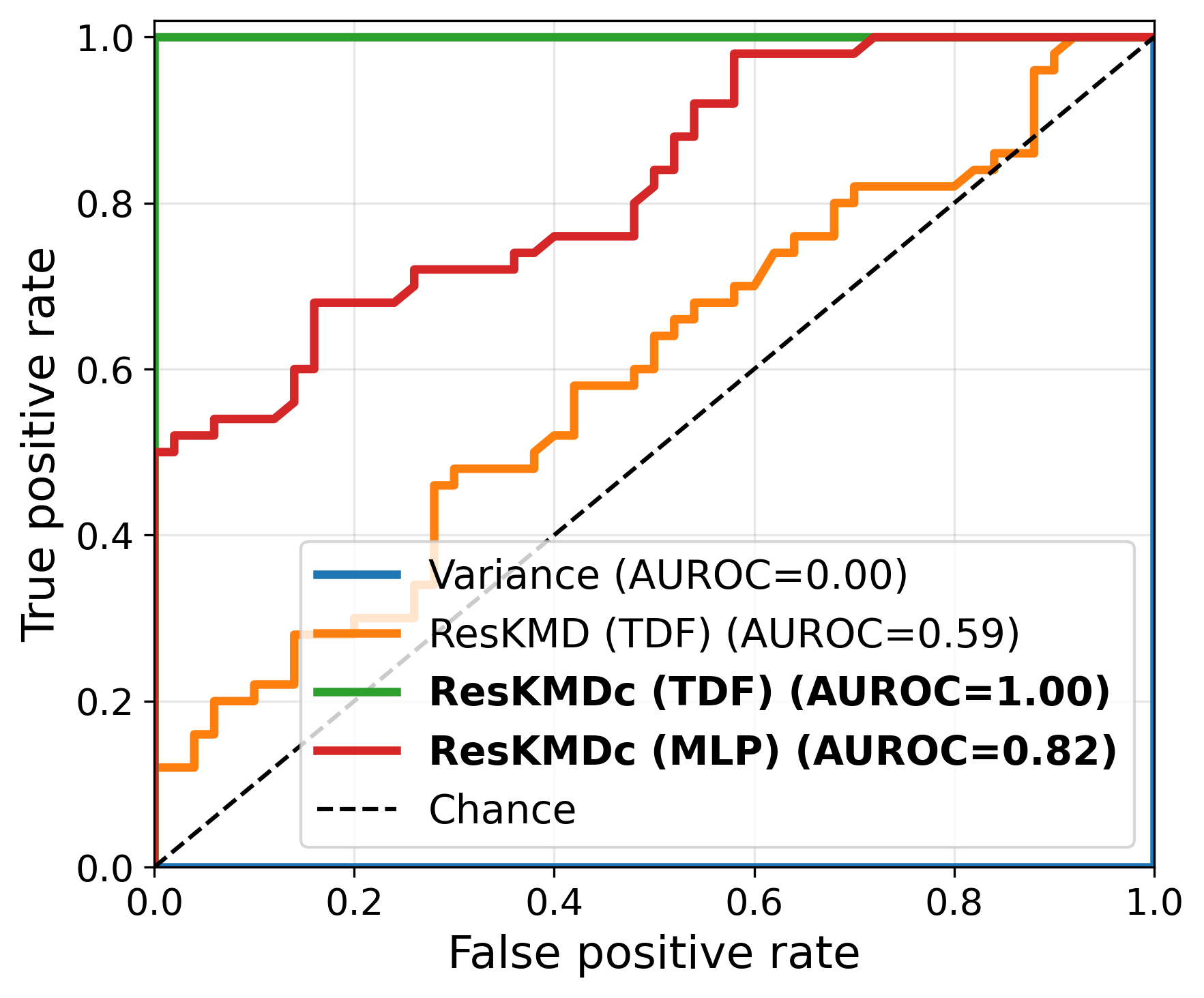}
        \caption{R-tipping: AMOC box}
        \label{fig:r_amoc3b_roc}
    \end{subfigure}
    \caption{Quantitative benchmarking of early warning indicators using AUROC. Panels correspond to (\textbf{a}) May’s harvesting model, (\textbf{b}) Rosenzweig–MacArthur model, (\textbf{c}) forced saddle-node system, (\textbf{d}) Bautin oscillator, (\textbf{e}) B-tipping AMOC box-model, and (\textbf{f}) R-tipping AMOC box-model. Higher AUROC values indicate improved discrimination between tipping and non-tipping trajectories. Across both tipping regimes, Koopman-based indicators generally outperform classical CSD indicators, with control-aware ResKMD outperforms in the R-tipping regimes.}
    \label{fig:auroc}
\end{figure}

\section*{Discussion}
\label{sec:discussion}
In this work, we develop a Koopman-based framework for EWS in stochastic systems undergoing either B-tipping or R-tipping. In the B-tipping setting, we built on recent results connecting ResKMD to CSD and showed how the residual provides a principled indicator that combines truncation error and stochastic fluctuation within a single spectral quantity. With a control setup we generalise this framework to R-tipping scenarios, yielding a control-aware ResKMD that captures loss of tracking relative to a moving attracting branch. Unlike the B-tipping case, where residual growth is tied to eigenvalues approaching the unit circle, the R-tipping mechanism is instead governed by unresolved drift associated with the time-varying control variables.

Across a hierarchy of examples, from low-dimensional prototypical systems to reduced box-model and fully coupled AMOC simulations, we found that the proposed Koopman-based indicators recover the expected warning behavior in B-tipping problems and provide useful discrimination in R-tipping regimes where classical EWS such as AR1 and variance can be weak or inconsistent. In the AMOC applications, the framework also yielded useful spectral diagnostics beyond detection itself, including $T_e$, spectral gaps, and the distribution of slow Koopman modes. These quantities provide additional physical interpretation of resilience loss and timescale separation, and highlight a central advantage of the operator-theoretic viewpoint: the same EDMD/ResDMD machinery used to construct the EWS also gives access to interpretable spectral structure. In real-world application where only a single realization is available, our ResKMD indicator can provide robust uncertainty estimate by performing multiple runs with different observable configuration.

Some considerations point to a number of promising active research directions. The practical performance of ResKMD and its control-aware extension depends on the choice of observables, truncation level, and the supplied control or driver variable. As seen in the AMOC experiments, some control choices are substantially more informative than others, and identifying the most dynamically relevant forcing channel is itself a nontrivial modeling problem. Furthermore, in the AMOC R-tipping box-model experiments, although the control-aware ResKMD indicator exhibits a substantially faster increase in the tipping trajectories, this increase alone may be difficult to interpret in isolation when non-tipping reference trajectories are unavailable. This highlights an important limitation for practical deployment in real-world systems when R-tipping dominates and counterfactual non-tipping realizations are typically inaccessible. Future work should therefore investigate more robust calibration strategies, uncertainty quantification, and methods for identifying absolute rather than purely comparative signatures of rate-induced loss of tracking. 

While the B-tipping case admits a clearer theoretical divergence result, the R-tipping case is more problem dependent: increased residuals arise through unresolved branch drift rather than through a universal spectral route (see our discussion in Remark~\ref{rem:no_general_divergence}). Our results also rely on finite-sample EDMD- and ResDMD-type approximations, whose statistical error, dictionary dependence, and robustness in high-dimensional settings deserve further study. On the theory side, sharper guarantees for control-aware ResKMD under partial observability, model misspecification, and nonstationary noise would further strengthen the framework's foundations, as would connecting the present framework more explicitly to Koopman generators, continuous spectrum, and ResDMD with control in more general settings. Understanding how residual growth relates to tracking failure in nonautonomous stochastic systems is a particularly interesting open question. Just like B-tipping can be seen as a manifestation of the closure of the spectral gap and of the divergence of response operators in the autonomous setting \cite{Santos2022,LucariniChekroun2023}, it is tempting to investigate whether recent advances on linear response for nonautonomous systems   \cite{Lucarini2026,GL26} can shed light on R-tipping and be related to the results presented here. On the application side, an important next step is to test these EWS more broadly across other climate tipping elements and ESM simulations, where multiple coupled control variables, spatial heterogeneity, and partial observations~\cite{nathaniel2023metaflux,kim2024spatiotemporal} make tipping detection especially challenging.

Overall, to our knowledge our results present the first  framework of an unified EWS analysis across B-tipping and R-tipping in stochastic systems using operator-theoretic formalism. Rather than treating these two tipping mechanisms with separate heuristic indicators, ResKMD offers a common language in which both loss of stability and loss of tracking can be studied through spectral approximation error. This study provides a significant step toward EWS methods that remain flexible and physically interpretable in the stochastic, high-dimensional, and strongly driven systems that arise in AMOC and other complex applications.

\section*{Methods}
\label{sec:methods}

\subsection*{Stochastic dynamics}

Consider a discrete stochastic dynamical system with state $\omega_t\in\Omega\subseteq\mathbb{R}^N$:
\begin{equation}
    \omega_{t+1}=\Phi(\omega_t)+\varepsilon_t,
    \qquad
    \Phi_\varepsilon(\omega):=\Phi(\omega)+\varepsilon.
    \label{eq:sde}
\end{equation}

Here, $\Phi:\Omega\to\Omega$ is the deterministic flow map and $\varepsilon_t$ denotes stochastic forcing. Unless otherwise stated, we assume that $\varepsilon_t$ has zero conditional mean and finite second-moment, so that the relevant expectations and covariances are well defined. In the synthetic experiments, we use independent Gaussian noise with covariance $\Sigma_\varepsilon$. Expectations below are taken with respect to the probability law induced by the noise sequence:
\begin{align}
    m_t &:= \mathbb{E}[\omega_t],\\
    \mathrm{Cov}(\omega_{t_1},\omega_{t_2})
    &:= \mathbb{E}\!\left[(\omega_{t_1}-m_{t_1})(\omega_{t_2}-m_{t_2})^\top\right].
\end{align}

\subsection*{Bifurcation-induced tipping}
To define bifurcation-induced tipping (B-tipping), consider a parameterized family of deterministic dynamical systems $\omega_{t+1} = \Phi_\beta(\omega_t)$, where $\beta \in \mathbb{R}$ is a slowly varying parameter. Suppose that, for each $\beta < \beta^\star$, the system admits a stable fixed point, $\omega^\star(\beta)$, satisfying $\Phi_\beta(\omega^\star(\beta)) = \omega^\star(\beta)$. The set of equilibria $\omega^\star(\beta)$ defines an attracting equilibrium branch parameterized by $\beta$. To characterize local stability along this branch, we linearize the dynamics around $\omega^\star(\beta)$. The corresponding Jacobian is
\begin{equation}
    \mathcal{J}(\beta) := \left.\frac{\partial \Phi_\beta}{\partial \omega}\right|_{\omega=\omega^\star(\beta)}.
    \label{eq:b_jacobian}
\end{equation}

Let $\{\lambda_{\mathcal{J},i}(\beta)\}_{i=1}^N$ denote the eigenvalues of $\mathcal{J}(\beta)$ ordered so that $|\lambda_{\mathcal{J},1}(\beta)|\ge \cdots \ge |\lambda_{\mathcal{J},N}(\beta)|$. We define the spectral radius of the Jacobian as $\rho(\mathcal{J}(\beta)) := \max_i |\lambda_{\mathcal{J},i}(\beta)|$. Local asymptotic stability requires $\rho(\mathcal{J}(\beta))<1$, so the spectrum lies strictly inside the unit circle. As the system evolves along the equilibrium branch toward the critical parameter value $\beta\to\beta^\star$, the leading spectral modes approach the unit circle, signaling weakening restoring stability and the onset of CSD.

For a fixed parameter value $\beta$, linearizing the perturbation 
$\bar{\omega}_t:=\omega_t-\omega^\star(\beta)$ gives
\begin{equation}
    \bar{\omega}_{t+1}\approx \mathcal{J}(\beta)\bar{\omega}_t+\epsilon_t,
\end{equation}

where $\epsilon_t$ collects stochastic forcing and neglected higher-order terms. Let $\ell_{\mathcal{J},1}(\beta)$ denote the left eigenvector associated with the dominant eigenvalue $\lambda_{\mathcal{J},1}(\beta)$, and define
\begin{align}
    z_t &:= \ell_{\mathcal{J},1}(\beta)^\top\bar{\omega}_t, \nonumber\\
    z_{t+1} &= \lambda_{\mathcal{J},1}(\beta)z_t+\eta_t,
    \label{eq:linearized_sde}
\end{align}

with $\eta_t:=\ell_{\mathcal{J},1}(\beta)^\top\epsilon_t$ denoting the projected stochastic innovation. This yields the familiar spectral interpretation of CSD. Assuming stationarity, $\mathrm{Var}(z_{t+1})=\mathrm{Var}(z_t)$, and uncorrelated innovations, $\mathrm{Cov}(z_t,\eta_t)=0$, we obtain AR1 (derivation in Supplementary Information~\ref{si-sec:ar1_derivation})
\begin{equation}
    \mathrm{Corr}(z_t,z_{t+1})=\lambda_{\mathcal{J},1}(\beta),
    \label{eq:csd_ar1}
\end{equation}

and variance (derivation in Supplementary Information~\ref{si-sec:var_derivation})
\begin{equation}
    \mathrm{Var}(z_t)=\frac{\mathrm{Var}(\eta_t)}{1-|\lambda_{\mathcal{J},1}(\beta)|^2}.
    \label{eq:csd_var}
\end{equation}

These relations are exact for one-dimensional linear systems and provide a useful leading mode approximation when the local dynamics are dominated by a single slow mode. If several modes decay on comparable timescales, AR1 and variance should instead be interpreted as effective scalar summaries rather than exact spectral diagnostics. Thus, as $\rho(\mathcal{J}(\beta))\to 1$, perturbations decay more slowly, and both AR1 and variance increase.

\subsection*{Rate-induced tipping}
Rate-induced tipping (R-tipping) refers to the loss of tracking of a moving attracting branch caused by sufficiently rapid variation in a time-dependent control variable, even when the corresponding frozen dynamics remain locally stable~\cite{ashwin2012tipping,kiers2020rate}. We consider the stochastic control-dependent system
\begin{equation}
    \omega_{t+1}=\Phi(\omega_t,u_t)+\varepsilon_t,
    \qquad
    \Phi_\varepsilon(\omega,u):=\Phi(\omega,u)+\varepsilon,
    \label{eq:sde_control}
\end{equation}

where $u_t\in\mathbb{R}^U$ is a time-dependent control variable. For each fixed value of $u$, we associate the frozen system $\omega_{t+1}=\Phi(\omega_t,u)$, obtained by holding the control constant. Assume that, for each $u$ in the regime of interest, this frozen system admits an equilibrium $\omega^\star(u)$ satisfying $\Phi(\omega^\star(u),u)=\omega^\star(u)$, so that $u\mapsto \omega^\star(u)$ defines a frozen equilibrium branch. We further assume that this branch is locally stable for each fixed $u$, with
\begin{equation}
    \mathcal{J}(u):=\left.\frac{\partial \Phi(\omega,u)}{\partial \omega}\right|_{\omega=\omega^\star(u)},
    \qquad
    \rho(\mathcal{J}(u))<1.
    \label{eq:r_jacobian}
\end{equation}

Thus, if the control variable is frozen at any single value, nearby trajectories would relax locally toward the corresponding equilibrium $\omega^\star(u)$. The key issue in the non-autonomous setting is whether the evolving trajectory $\omega_t$ remains close to the moving equilibrium branch $\omega^\star(u_t)$ as the control variable changes in time. To quantify this, we define the tracking error, $\bar{\omega}_t:=\omega_t-\omega^\star(u_t)$. Linearizing around the moving branch $\omega^\star(u_t)$ yields (derivation in Supplementary Information~\ref{si-sec:rate_perturb_dynamics})
\begin{equation}
    \bar{\omega}_{t+1} = \underbrace{\mathcal{J}(u_t)\bar{\omega}_t}_{\text{local restoration}} + \underbrace{\omega^\star(u_t)-\omega^\star(u_{t+1})}_{\text{branch drift, }d_t} + \underbrace{\epsilon_t}_{\text{high-order terms}}.
    \label{eq:rtip}
\end{equation}

The term $d_t:=\omega^\star(u_t)-\omega^\star(u_{t+1})$ is the control-induced drift of the frozen equilibrium branch, that is, the displacement of the instantaneous attracting state caused by the change in the control variable from $u_t$ to $u_{t+1}$. Equation~\ref{eq:rtip} therefore shows that R-tipping arises through competition between two effects: local restoration toward the current frozen equilibrium, represented by $\mathcal{J}(u_t)\bar{\omega}_t$, and control-driven drift, represented by $d_t$. In particular, tipping can occur even when $\rho(\mathcal{J}(u))<1$ for every fixed $u$, provided the control changes rapidly enough that the trajectory can no longer track the moving branch.

\subsection*{Stochastic Koopman operators}
We now define the Koopman-theoretic framework used throughout the paper. Koopman theory represents nonlinear state evolution through linear operators acting on observables $\psi:\Omega\to\mathbb{C}$~\cite{brunton2021modern,koopman1931hamiltonian}. We let $\mu$ be a finite positive measure on the state space $\Omega$. Consider the space $L^2(\Omega,\mu)$ equipped with the standard inner product
\begin{equation}
    \langle \psi_1,\psi_2\rangle_{L^2(\Omega,\mu)}
    :=
    \int_\Omega \psi_1(\omega)\overline{\psi_2(\omega)}\,d\mu(\omega).
\end{equation}

The first-moment stochastic Koopman operator  $\mathcal{K}^{(1)}$ acts on observables of $L^2(\Omega,\mu)$ by expectation
\begin{equation}
    [\mathcal{K}^{(1)}\psi](\omega)
    :=
    \mathbb{E}\!\left[\psi(\Phi_\varepsilon(\omega))\right].
    \label{eq:stochastic_koopman}
\end{equation}

As a linear operator on $L^2(\Omega,\mu)$, $\mathcal{K}^{(1)}$ can be studied through its spectrum, whose structure depends on the underlying dynamics. When Koopman eigenfunctions exist, the corresponding eigenpairs $(\lambda,\phi)$ satisfy
\begin{equation}
    \mathcal{K}^{(1)}\phi=\lambda\phi.
\end{equation}

To connect Koopman spectral structure with stochastic variability and classical CSD indicators such as variance, we additionally define a second-moment Koopman operator
\begin{equation}
    [\mathcal{K}^{(2)}(\psi_1,\psi_2)](\omega)
    :=
    \mathbb{E}\!\left[
        \psi_1(\Phi_\varepsilon(\omega))
        \overline{\psi_2(\Phi_\varepsilon(\omega))}
    \right].
\end{equation}

The corresponding one-step variance is
\begin{equation}
    \mathrm{Var}\!\left[\psi(\Phi_\varepsilon(\omega))\right]
    =
    [\mathcal{K}^{(2)}(\psi,\psi)](\omega)
    -
    \left|[\mathcal{K}^{(1)}\psi](\omega)\right|^2.
\end{equation}

In the data driven setting, we use observable dictionaries built from random Fourier features (RFF~\cite{rahimi2007random}), time-delay features (TDF~\cite{abarbanel1994predicting}), and deep embeddings, such as using multi-layer perceptron (MLP)-based autoencoders~\cite{nathaniel2025deep}.

\subsection*{Residual Koopman framework for B-tipping}
We now formalize the residual Koopman framework used for the autonomous, bifurcation-induced setting. Throughout this subsection, the sub/superscript $b$ denotes the bifurcation setting. Let $\boldsymbol{\psi}=[\psi_1,\ldots,\psi_M]^\top$ be a vector-valued observable dictionary of dimension $M$. When non-constant Koopman eigenfunctions are available, let $\{(\lambda_n^b,\phi_n^b)\}_{n\ge 1}$ denote the retained eigenvalue--eigenfunction pairs, ordered by decreasing magnitude. We exclude the constant invariant eigenfunction with eigenvalue one, since it does not represent decay or early warning behavior. For a truncation level $m\in\mathbb{N}^+$, we define:
\begin{equation}
    \mathcal{S}_m^b:=\mathrm{span}\{\phi_1^b,\ldots,\phi_m^b\}.
\end{equation}

Following the standard Koopman mode decomposition~\cite{brunton2021modern}, the truncated Koopman representation of the vector observable $\boldsymbol{\psi}$ on $\mathcal{S}_m^b$ is:
\begin{equation}
    [\mathcal{K}^{(1)}_{\mathcal{S}_m^b}\boldsymbol{\psi}](\omega)
    =
    \sum_{n=1}^m \lambda_n^b v_n^b \phi_n^b(\omega),
    \label{eq:kmd}
\end{equation}

where $v_n^b\in\mathbb{C}^M$ is the Koopman mode associated with eigenfunction $\phi_n^b$. We then define the residual Koopman mode decomposition (ResKMD) by
\begin{equation}
    \mathrm{res}\!\left[
        \mathcal{K}^{(1)},
        \mathcal{K}^{(1)}_{\mathcal{S}_m^b};
        \boldsymbol{\psi}
    \right]^2
    :=
    \int_\Omega
    \mathbb{E}\!\left[
        \left\|
            \boldsymbol{\psi}(\Phi_\varepsilon(\omega))
            -
            [\mathcal{K}^{(1)}_{\mathcal{S}_m^b}\boldsymbol{\psi}](\omega)
        \right\|_2^2
    \right]
    d\mu(\omega).
    \label{eq:reskmd}
\end{equation}

This residual separates into a deterministic approximation component and a stochastic fluctuation component. Specifically
\begin{equation}
    \mathrm{res}\!\left[
        \mathcal{K}^{(1)},
        \mathcal{K}^{(1)}_{\mathcal{S}_m^b};
        \boldsymbol{\psi}
    \right]^2
    =
    \underbrace{
    \left\|
        \mathcal{K}^{(1)}\boldsymbol{\psi}
        -
        \mathcal{K}^{(1)}_{\mathcal{S}_m^b}\boldsymbol{\psi}
    \right\|_{L^2(\Omega,\mu;\mathbb{C}^M)}^2
    }_{\text{truncation error}}
    +
    \underbrace{
    \int_\Omega
    \mathrm{Tr}\!\left[
        \mathrm{Cov}\!\left(\boldsymbol{\psi}(\Phi_\varepsilon(\omega))\right)
    \right]d\mu(\omega)
    }_{\text{stochastic fluctuation}}.
    \label{eq:reskmd_decomp}
\end{equation}

Here $\mathrm{Cov}$ denotes the covariance of the next-step observable $\boldsymbol{\psi}(\Phi_\varepsilon(\omega))$ induced by stochasticity at fixed current state $\omega$. The first term in Equation~\ref{eq:reskmd_decomp} measures the error incurred by representing the expected observable evolution using only the retained Koopman subspace \(\mathcal{S}_m^b\), while the second term measures the stochastic spread of the lifted one-step evolution.

To interpret the truncation term, consider the simplifying case in which the relevant Koopman eigenfunctions form an orthonormal basis for the observable subspace of interest. Then the observable can be expanded as $\boldsymbol{\psi} =  \sum_{n\ge 1} v_n^b \phi_n^b$, and the truncation error becomes
\begin{equation}
    \left\|
        \mathcal{K}^{(1)}\boldsymbol{\psi}
        -
        \mathcal{K}^{(1)}_{\mathcal{S}_m^b}\boldsymbol{\psi}
    \right\|_{L^2(\Omega,\mu;\mathbb{C}^M)}^2
    =
    \sum_{n>m}
    |\lambda_n^b|^2
    \|v_n^b\|_2^2.
    \label{eq:trunc_modal}
\end{equation}

With the orthonormal-basis and eigenvalue-ordering assumptions above, the truncation error is bounded below by the first omitted Koopman mode and above by a worst-case tail contribution controlled by the largest omitted eigenvalue~\cite{miyauchi2026generalized}
\begin{equation}
    \underbrace{
    |\lambda^b_{m+1}|^2\|v^b_{m+1}\|_2^2
    }_{\text{first omitted mode}}
    \le
    \left\|
        \mathcal{K}^{(1)}\boldsymbol{\psi}
        -
        \mathcal{K}^{(1)}_{\mathcal{S}_m^b}\boldsymbol{\psi}
    \right\|_{L^2(\Omega,\mu;\mathbb{C}^M)}^2
    \le
    \underbrace{
    |\lambda^b_{m+1}|^2\|\boldsymbol{\psi}\|_{L^2(\Omega,\mu;\mathbb{C}^M)}^2
    }_{\text{worst-case omitted tail}}.
    \label{eq:trunc_bounds}
\end{equation}

Moreover, for each non-constant Koopman eigenfunction, the stochastic one-step evolution can be written as
\begin{equation}
    \phi_n^b(\omega_{t+1})
    =
    \lambda_n^b\phi_n^b(\omega_t)+\eta_{n,t},
    \qquad
    \eta_{n,t}
    :=
    \phi_n^b(\Phi_\varepsilon(\omega_t))
    -
    \lambda_n^b\phi_n^b(\omega_t).
    \label{eq:eigenfunction_ar1}
\end{equation}

Here $\eta_{n,t}$ is the one-step stochastic innovation associated with the Koopman eigenfunction $\phi_n^b$. Under the same stationarity and innovation-independence assumptions used in Equations~\ref{eq:csd_ar1} and \ref{eq:csd_var}, this gives
\begin{equation}
    \mathrm{Var}\!\left[\phi_n^b(\omega_t)\right]
    =
    \frac{\mathrm{Var}(\eta_{n,t})}{1-|\lambda_n^b|^2}.
    \label{eq:koopman_coordinate_var}
\end{equation}

From Equation~\ref{eq:trunc_bounds}, the contribution of the first omitted mode can increase as its eigenvalue approaches the unit circle. Similarly, Equation~\ref{eq:koopman_coordinate_var} further shows that stochastic fluctuations become larger when a non-constant Koopman eigenvalue approaches the unit circle. Intuitively, this happens because perturbations decay more slowly, so noise persists for longer. Thus, near B-tipping, ResKMD can increase through two complementary mechanisms: poorer finite-dimensional spectral approximation and stronger stochastic amplification. This mechanism is formalized for the bifurcation setting by Theorem~2 of \cite{miyauchi2026generalized}, which shows, under its stated assumptions, that
\begin{equation}
    \mathrm{res}\!\left[
        \mathcal{K}^{(1)},
        \mathcal{K}^{(1)}_{\mathcal{S}_m^b};
        \boldsymbol{\psi}
    \right]^2
    \to\infty
    \qquad\text{as }\beta\to\beta^\star.
    \label{eq:reskmd_diverge}
\end{equation}

\subsection*{Control-aware residuals for R-tipping}
We now apply the Koopman residual framework to the augmented state-control system introduced above. Throughout this subsection, the sub/superscript $c$ denotes the control setting. Using $u_{t+1}=g(u_t) \in \mathbb{R}^U$, the corresponding one-step map on the joint state-control space $\Omega\times\mathbb{R}^U$ has the form
\begin{equation}
    \Phi_{c,\varepsilon}(\omega,u)
    :=
    \bigl(\Phi_\varepsilon(\omega,u),g(u)\bigr).
    \label{eq:augmented_control_map}
\end{equation}

This rewrites the control-dependent dynamics as an autonomous stochastic system on the joint state-control space. We assume that the control update $g: \mathbb{R}^U \to \mathbb{R}^U$ is deterministic, so that stochasticity enters only through the state transition $\Phi_\varepsilon(\omega,u)$. This keeps the one-step uncertainty collected in a single stochastic term for notational simplicity.

Let $\mu$ be a finite reference measure on $\Omega\times\mathbb{R}^U$, such as the empirical measure induced by an observed state-control trajectory. We define state-only, joint state-control, and control-only observable blocks by
\[
    \boldsymbol{\psi}^{(\omega)}:\Omega\to\mathbb{C}^{M_\omega},
    \qquad
    \boldsymbol{\psi}^{(\omega u)}:\Omega\times\mathbb{R}^U\to\mathbb{C}^{M_{\omega u}},
    \qquad
    \boldsymbol{\psi}^{(u)}:\mathbb{R}^U\to\mathbb{C}^{M_u}.
\]

The concatenated control-aware dictionary is
\begin{equation}
    \boldsymbol{\Psi}(\omega,u)
    :=
    \begin{bmatrix}
        \boldsymbol{\psi}^{(\omega)}(\omega) \\
        \boldsymbol{\psi}^{(\omega u)}(\omega,u) \\
        \boldsymbol{\psi}^{(u)}(u)
    \end{bmatrix}
    \in\mathbb{C}^M,
    \qquad
    M=M_\omega+M_{\omega u}+M_u .
    \label{eq:psi_control}
\end{equation}

The three blocks need not have the same dimension. Their specific construction is implementation-dependent. In our numerical experiments, we construct the control variable directly i.e., using the identity observable, and the joint observable using simple products between the state observables and the control variable, following~\cite{proctor2018generalizing}.

The control-aware first-moment Koopman operator acts on observables of the augmented state-control space as
\begin{equation}
    [\mathcal{K}^{(1)}_c\Psi](\omega,u)
    :=
    \mathbb{E}\!\left[\Psi(\Phi_\varepsilon(\omega,u),g(u))\right].
    \label{eq:kic_first_moment}
\end{equation}

Thus, $\mathcal{K}^{(1)}_c$ is simply the stochastic Koopman operator associated with the augmented map in Equation~\ref{eq:augmented_control_map}. The control-aware formulation therefore applies the same Koopman residual idea as before, but now to the pair $(\omega,u)$ rather than to $\omega$ alone.

Let $\{(\lambda_n^c,\phi_n^c)\}_{n\ge 1}$ denote the retained non-constant eigenvalue-eigenfunction pairs of $\mathcal{K}^{(1)}_c$, ordered by decreasing magnitude. We assume that such point-spectrum components exist. For a truncation level $m \in \mathbb{N}^+$, we define:
\begin{equation}
    \mathcal{S}_m^c:=\mathrm{span}\{\phi_1^c,\ldots,\phi_m^c\}.
\end{equation}

Following the standard Koopman mode decomposition, the truncated control-aware representation is
\begin{equation}
    [\mathcal{K}^{(1)}_{\mathcal{S}_m^c}\boldsymbol{\Psi}](\omega,u)
    =
    \sum_{n=1}^m \lambda_n^c v_n^c \phi_n^c(\omega,u),
    \label{eq:kmd_control}
\end{equation}

where $v_n^c\in\mathbb{C}^{M}$ is the Koopman mode associated with $\phi_n^c$. We define the control-aware residual by
\begin{equation}
    \mathrm{res}\!\left[
        \mathcal{K}^{(1)}_c,
        \mathcal{K}^{(1)}_{\mathcal{S}_m^c};
        \boldsymbol{\Psi}
    \right]^2
    :=
    \int_{\Omega\times\mathbb{R}^U}
    \mathbb{E}\!\left[
        \left\|
            \boldsymbol{\Psi}(\Phi_\varepsilon(\omega,u),g(u))
            -
            [\mathcal{K}^{(1)}_{\mathcal{S}_m^c}\boldsymbol{\Psi}](\omega,u)
        \right\|_2^2
    \right]
    d\mu(\omega,u),
    \label{eq:reskmd_control}
\end{equation}

Because this is the usual ResKMD residual applied to the augmented state-control system, the same decomposition into approximation error and stochastic fluctuation carries over. This gives rise to Proposition~\ref{prop:reskmdc_decomp}, and detailed proof is given in Supplementary Information~\ref{si-sec:proof-reskmdc-decomp}.

\begin{boxremark}{No universal residual divergence in the R-tipping setting}{no_general_divergence}
Unlike in the B-tipping case, one should not expect a universal residual-divergence result for R-tipping, because rate-induced transitions need not involve loss of local stability or Koopman eigenvalues approaching the unit circle. If the augmented state-control dynamics are represented exactly in the chosen observable space, the control-aware residual can remain zero even when the trajectory loses track of the moving equilibrium branch; a counterexample is given in Supplementary Information~\ref{si-sec:counterexample-rtipping}. Thus, the residual should be interpreted as detecting unresolved tracking dynamics in a finite-dimensional data-driven representation, rather than tracking failure in a perfectly resolved model.
\end{boxremark}

We now connect the control-aware residual to the lower bound mechanism in Theorem~\ref{thm:rtip_lower_bound}. Recall the tracking error, $\bar{\omega}(\omega,u) := \omega-\omega^\star(u)$ and branch drift, $d(\omega,u) := \omega^\star(u)-\omega^\star(g(u))$ as defined earlier, but now adapted using the control dynamics $g$.

\begin{boxassumption}{Tracking error observability}{tracking_repr}
There exists a bounded linear map $C:\mathbb{C}^M\to\mathbb{R}^N$, such that $\bar{\omega}(\omega,u) = C\boldsymbol{\Psi}(\omega,u)$.
\end{boxassumption}

Assumption~\ref{assm:tracking_repr} says that the chosen observables contain enough information to reconstruct the tracking error. In practice this may hold only approximately, but it makes explicit the link between residuals in the augmented observable space and tracking errors in state space.

Using the tracking error dynamics from Equation~\ref{eq:rtip}, we write the first-moment evolution of $\bar{\omega}$ as
\begin{equation}
    [\mathcal{K}^{(1)}_c\bar{\omega}](\omega,u)
    =
    \underbrace{\mathcal{J}(u)\bar{\omega}(\omega,u)}_{\text{local restoration}}
    +
    \underbrace{d(\omega,u)}_{\text{branch drift}}
    +
    \underbrace{\epsilon(\omega,u)}_{\text{high-order terms}}.
    \label{eq:first_moment_decomp}
\end{equation}

Equation~\ref{eq:first_moment_decomp} separates the expected tracking error evolution into local restoration, branch drift, and remaining higher-order terms from linearization. The lower bound follows by projecting this decomposition onto the part not represented by the retained Koopman subspace. Let $\mathcal{P}_m$ denote projection onto $\mathcal{S}_m^c$, and let $\mathcal{Q}_m := I-\mathcal{P}_m$ denote the unresolved projection. Under the tracking error observability Assumption~\ref{assm:tracking_repr}, and assuming that the local restoration term is captured by the retained Koopman subspace, the control-aware residual satisfies
\begin{equation}
    \mathrm{res}\!\left[
        \mathcal{K}^{(1)}_c,
        \mathcal{K}^{(1)}_{\mathcal{S}_m^c};
        \boldsymbol{\Psi}
    \right]
    \ge
    \frac{1}{\|C\|}
    \left(
    \|\mathcal{Q}_m d\|_{L^2(\Omega\times\mathbb{R}^U,\mu;\mathbb{R}^N)}
    -
    \|\mathcal{Q}_m\epsilon\|_{L^2(\Omega\times\mathbb{R}^U,\mu;\mathbb{R}^N)}
    \right).
    \label{eq:drift_lower_bound}
\end{equation}

If $\mathcal{Q}_m\epsilon=0$, this further reduces to
\begin{equation}
    \mathrm{res}\!\left[
        \mathcal{K}^{(1)}_c,
        \mathcal{K}^{(1)}_{\mathcal{S}_m^c};
        \boldsymbol{\Psi}
    \right]
    \ge
    \frac{1}{\|C\|}
    \|\mathcal{Q}_m d\|_{L^2(\Omega\times\mathbb{R}^U,\mu;\mathbb{R}^N)}.
    \label{eq:drift_lower_bound_clean}
\end{equation}

The proof is provided in Supplementary Information~\ref{si-sec:proof-rtip_lower_bound}. Theorem~\ref{thm:rtip_lower_bound} shows that unresolved branch drift provides a lower-bound contribution to the control-aware residual. This does not mean that all residual growth is caused by branch drift: unresolved state dynamics, stochastic fluctuations, imperfect observability, and finite-dimensional approximation error may also contribute. Rather, the theorem identifies one mechanism by which rate-induced loss of tracking can appear as increased Koopman residual.

\subsection*{Extended dynamic mode decomposition}
We estimate finite-dimensional autonomous and control-aware Koopman operators from trajectory data using extended dynamic mode decomposition (EDMD)~\cite{li2017extended}. Given a trajectory $\{\omega_t\}_{t=1}^{T+1}$, in the autonomous setting, we form one-step snapshot pairs $(\omega_t,\omega_{t+1})$ and lift each observation into a higher-dimensional feature space using a dictionary of observable $\boldsymbol{\psi} \in \mathbb{C}^M$ constructing two data matrices:
\begin{equation}
    \Psi_X :=
    \begin{bmatrix}
        \boldsymbol{\psi}(\omega_1) & \cdots & \boldsymbol{\psi}(\omega_T)
    \end{bmatrix},
    \qquad
    \Psi_Y :=
    \begin{bmatrix}
        \boldsymbol{\psi}(\omega_2) &  \cdots & \boldsymbol{\psi}(\omega_{T+1})
    \end{bmatrix},
\end{equation}

The Koopman approximation is then given as $K \approx \Psi_Y \Psi_X^\dagger$, where ${}^\dagger$ is the Moore-Penrose pseudoinverse \cite{Barata2012}. In the control-aware setting, we proceed analogously on the augmented trajectory $\{(\omega_t,u_t)\}_{t=1}^{T+1}$ using dictionary $\boldsymbol{\Psi}(\omega,u)$ (see Equation~\ref{eq:psi_control}):
\begin{equation}
    \Psi_{X_c} :=
    \begin{bmatrix}
        \boldsymbol{\Psi}(\omega_1,u_1) & \cdots & \boldsymbol{\Psi}(\omega_T,u_T)
    \end{bmatrix},
    \qquad
    \Psi_{Y_c} :=
    \begin{bmatrix}
        \boldsymbol{\Psi}(\omega_2,u_2) & \cdots & \boldsymbol{\Psi}(\omega_{T+1},u_{T+1})
    \end{bmatrix},
\end{equation}

from which we estimate $K_c \approx \Psi_{Y_c} \Psi_{X_c}^\dagger$ jointly captures the state-control dynamics. 

We consider both prescribed (random Fourier features (RFF)~\cite{rahimi2007random} and time-delay features (TDF)~\cite{abarbanel1994predicting}) and deep learning-based (latent embeddings from multi-layer perceptron (MLP) autoencoders~\cite{rupe2024causal,nathaniel2025deep}) for our choice of observables. Unless otherwise stated, we retain Koopman modes explaining 90\% of the variability. By default, we also use RFF with 500-dimension and TDF with 20-step delay, unless otherwise specified. For MLP observables, we use the default hyperparameters of learning rate $10^{-3}$, batch size 32, and training epochs of 1000. Training is performed once over the first window and the frozen encoder is applied over the subsequent sliding windows. In summary, this yields a fully data-driven procedure that does not require explicit knowledge of the governing equations while retaining the spectral interpretation of the Koopman framework. 

\subsection*{Quantitative metrics}
In addition to the qualitative rolling-window analysis of the indicators, we quantitatively benchmark the proposed Koopman EWS using Kendall’s $\tau$~\cite{kendall1938new} to measure monotonic trends prior to tipping, and receiver operating characteristic (ROC) analysis~\cite{fawcett2006introduction,hanley1982meaning} together with the corresponding area under the ROC curve (AUROC)~\cite{peterson1954theory,bradley1997use} to evaluate discrimination between tipping and non-tipping trajectories. 

We consider a binary classification setting where each sample is labeled as either tipping ($y=1$) or non-tipping ($y=0$). Let $\hat{s}: \mathcal{D} \to \mathbb{R}$ denote a scoring function that assigns a confidence score to each observed trajectory, where higher values indicate a higher likelihood of a critical transition. A prediction $\hat{y}=1$ is made whenever $\hat{s} > \tau$ for a threshold $\tau$.

For each threshold $\tau$, we define the true positive rate (TPR) and false positive rate (FPR) as
\begin{equation}
\text{TPR}(\tau) = \frac{|\{i : y_i = 1,\ \hat{s}_i > \tau\}|}{|\{i : y_i = 1\}|}, \quad
\text{FPR}(\tau) = \frac{|\{i : y_i = 0,\ \hat{s}_i > \tau\}|}{|\{i : y_i = 0\}|}.
\end{equation}

The receiver operating characteristic (ROC) curve \cite{fawcett2006introduction,hanley1982meaning} is obtained by plotting $\text{TPR}(\tau)$ against $\text{FPR}(\tau)$ as the threshold $\tau$ varies over all possible values. The ROC curve characterizes the trade-off between correctly detecting tipping events and incorrectly classifying non-tipping trajectories.

The Area Under the ROC Curve (AUROC) \cite{peterson1954theory,bradley1997use} is defined as
\begin{equation}
\text{AUROC} = \int_0^1 \text{TPR}\bigl(\text{FPR}^{-1}(u)\bigr)\,du,
\end{equation}

which corresponds to the probability that a randomly chosen tipping trajectory is assigned a higher score than a randomly chosen non-tipping trajectory. An AUROC of $0.5$ indicates random performance, while $1.0$ corresponds to perfect discrimination.

\subsection*{Prototypical tipping systems}
This section provides the data-generation protocols and model specifications for the prototypical systems and AMOC experiments presented in the main text. 

For bifurcation-induced tipping, we consider the following systems:

We first model the \textit{May's single-species harvesting model} for a biomass variable $x$, corresponding to the one-dimensional case $\omega = x \in \mathbb{R}_+$. Its continuous-time dynamics are
\begin{equation}
    \frac{dx}{dt} = r x\left(1-\frac{x}{k}\right) - h\frac{x^2}{s^2+x^2} + \sigma_x \xi_x(t),
    \label{eq:may_harvesting}
\end{equation}

where $r$ is the intrinsic growth rate, $k$ is the carrying capacity, $s$ sets the saturation scale of harvesting, $h$ is the harvesting rate, and $\xi_x(t)$ denotes Gaussian white noise. The harvesting term counteracts logistic growth, and as $h$ increases the stable positive-biomass equilibrium eventually disappears in a fold bifurcation. In our experiments, we set $r=1$, $k=1$, $s=0.1$, and $\sigma=0.01$, for which the deterministic system loses stability at the critical harvesting rate $h^\star = 0.26$. Tipping runs are generated by ramping $h$ linearly from $0.15$ to $0.27$, while non-tipping runs use the same initial ramp but capped at $h=0.25$, keeping the system below the fold. 

We next consider the \textit{Rosenzweig-MacArthur consumer-resource model}, for which the state is two-dimensional, $\omega = (x,y)^\top \in \mathbb{R}_+^2$, with $x$ the resource population and $y$ the consumer population. The continuous-time dynamics are
\begin{align}
    \frac{dx}{dt} &= r x\left(1-\frac{x}{k}\right) - \frac{axy}{1+ahx} + \sigma_x \xi_x(t), \nonumber \\
    \frac{dy}{dt} &= \frac{eaxy}{1+ahx} - my + \sigma_y \xi_y(t),
    \label{eq:rm_hopf}
\end{align}

where $r$ is the intrinsic growth rate of the resource, $k$ is its carrying capacity, $a(t)$ is the consumer attack rate, $e$ is the conversion efficiency, $h$ is the handling time, $m$ is the consumer mortality rate, and $\xi_x,\xi_y$ are independent Gaussian white noise processes. We use $r=4$, $k=1.7$, $e=0.5$, $h=0.15$, and $m=2$, for which the coexistence equilibrium destabilizes at the Hopf threshold $a^\star = 15.69$. Tipping runs are produced by ramping $a$ from $12$ to $17$, thereby crossing $a^\star$, whereas non-tipping runs use the same ramp but capped at $a=14.38$, so the system remains on the stable side of the Hopf bifurcation.

For both systems, we generate 50 tipping and 50 non-tipping stochastic trajectories with noise amplitude $\sigma=0.01$. Integration uses the Euler--Maruyama scheme with internal step size $\Delta t=0.01$. EWS are computed on pre-transition time series of length 500 using sliding windows of size $50\%$ and stride 1. The data-generation protocol follows \cite{bury2021deep}, except that for the non-tipping cases we use the same parameter schedule as the tipping cases before capping the bifurcating parameter slightly below the critical threshold, thereby making the discrimination task more challenging.

For rate-induced tipping, we consider the following systems:

We first consider a one-dimensional \textit{saddle-node normal form} with state $\omega = x \in \mathbb{R}$ and control variable $u(t)=\lambda(t)$. The continuous-time dynamics are
\begin{equation}
    \frac{dx}{dt} = (x+\lambda(t))^2 - 1 + \sqrt{2\sigma_x}\,\xi_x(t),
    \qquad
    \lambda(t) = \frac{\lambda_{\max}}{2} \left[\tanh\left(\frac{\lambda_{\max}\epsilon t}{2}\right)+1 \right],
    \label{eq:rate_saddle}
\end{equation}

where $\lambda_{\max}$ sets the forcing amplitude and $\epsilon$ controls the forcing rate. For each fixed $\lambda$, the frozen system has the usual saddle-node equilibrium structure, but the trajectory can still fail to track the attracting branch if $\lambda(t)$ increases too rapidly. In our experiments, we use $\lambda_{\max}=3$ and compare a tipping case with $\epsilon=1.25$ against a non-tipping case with the same forcing profile at half the rate, $\epsilon=0.625$. We set $\sigma_x = 0.008$. We use the scalar observable $x(t)$ for EWS estimation, and identify tipping when the trajectory crosses the threshold $x \geq 0$.

We next consider a \textit{stochastic Bautin system} with complex state $z_t = x_t + i y_t$ and forcing variable $u(t)=\Lambda(t)$. The continuous-time dynamics are
\begin{align}
    \frac{dz}{dt} ={}& (a+i\omega)(z-\Lambda(t))
    - b(z-\Lambda(t))^2(z-\Lambda(t))
    + (z-\Lambda(t))^4(z-\Lambda(t))
    + \sigma_z\,\xi_z(t), \nonumber\\
    \Lambda(t) ={}& \frac{\Lambda_{\max}}{2} \left[\tanh\left(\frac{\Lambda_{\max} r t}{2}\right)+1\right],
    \label{eq:rate_bautin}
\end{align}

where $a$ is the linear growth parameter, $\omega$ is the angular frequency, $b$ is the cubic coefficient, $\sigma_z$ sets the noise magnitude, and $r$ controls the forcing rate. In this example the forcing translates the underlying Bautin normal form through state space, and sufficiently rapid variation in $\Lambda(t)$ can trigger rate-induced escape even when the dynamics remain locally stable. We use the parameter values $a=0.1$, $\omega=3$, $b=1$, $\sigma_z=0.2$, and $\Lambda_{\max}=8$. Tipping runs use $r=0.10$, whereas non-tipping runs use the same forcing shape with half the rate, $r=0.05$. For EWS estimation we use the scalar radius observable $\rho(t)=\sqrt{x_t^2+y_t^2}$, and identify tipping when $\rho(t) \geq 10$.

For the \textit{reduced AMOC experiments}, we use the box-model analyzed in \cite{ritchie2023rate,alkhayuon2019basin}. This system is obtained from the original five-box model by treating the Southern Ocean (S) and bottom water (B) salinities as slow or fixed and diagnosing the Indo-Pacific (IP) salinity from salt conservation, yielding a two-dimensional dynamical system with state $\omega=(S_N,S_T)^\top\in\mathbb{R}^2$, where $S_N$ and $S_T$ denote the North Atlantic and tropical Atlantic salinity variables, respectively. The AMOC strength is diagnosed as
\begin{equation}
    Q = \frac{\lambda\left[\alpha(T_S-T_0)+\beta(S_N-S_S)\right]}{1+\lambda\alpha\mu},
    \label{eq:amoc_strength}
\end{equation}

and the observable supplied to the EWS estimators is the scalar time series $Q(t)$. The model is driven by freshwater hosing through the surface freshwater fluxes. In the reduced formulation, the salinity dynamics are piecewise-defined according to the sign of $Q$.

For $Q \ge 0$,
\begin{align}
    V_N \frac{dS_N}{dt} &= Q(S_T-S_N) + K_N(S_T-S_N) - F_N(H(r,t))S_0 + \sigma_N\xi_N(t), \label{eq:amoc_sde_posq_SN}\\
    V_T \frac{dS_T}{dt} &= Q\big(\gamma S_S + (1-\gamma)S_{IP} - S_T\big)
    + K_S(S_S-S_T) + K_N(S_N-S_T) - F_T(H(r,t))S_0 + \sigma_T\xi_T(t),
    \label{eq:amoc_sde_posq_ST}
\end{align}

For $Q < 0$,
\begin{align}
    V_N \frac{dS_N}{dt} &= |Q|(S_B-S_N) + K_N(S_T-S_N) - F_N(H(r,t))S_0 + \sigma_N\xi_N(t),
    \label{eq:amoc_sde_negq_SN}\\
    V_T \frac{dS_T}{dt} &= |Q|(S_N-S_T)
    + K_S(S_S-S_T) + K_N(S_N-S_T) - F_T(H(r,t))S_0 + \sigma_T\xi_T(t).
    \label{eq:amoc_sde_negq_ST}
\end{align}

Here, $F_N$ and $F_T$ denote the freshwater-flux terms induced by the hosing protocol $H(r,t)$, with parameters given in \cite{ritchie2023rate}. Different from earlier formulation, we transform the deterministic box-model as stochastic system where noise amplitudes are set to $\sigma_N=\sigma_T=1.0$.

For the B-tipping configuration, we use a linear schedule for the bifurcation parameter $H$: tipping runs ramp $H$ from $0.0$ to $0.4736$, whereas non-tipping runs follow the same ramp but are capped at $H=0.3736$. The deterministic fold occurs at $H^\star=0.4236$~\cite{alkhayuon2019basin}. For the R-tipping configuration, we use
\begin{equation}
    H(r,t)=
    \begin{cases}
        H_0 + \Delta H\,\mathrm{sech}\!\big(r(t-t_{\mathrm{crit}})\big), & t<t_{\mathrm{crit}},\\
        H_0 + \Delta H, & t\ge t_{\mathrm{crit}},
    \end{cases}
    \label{eq:amoc_rate_forcing}
\end{equation}

so that tipping and non-tipping trajectories differ only through the forcing rate $r$. We use $r=0.017$ for tipping and $r=0.005$ for non-tipping, with $t_{\mathrm{crit}}=500$. 

\subsection*{Empirical observations}

\paragraph{microcosm} We analyze a cyanobacteria population in chemostats subjected to dilution events under gradually increasing light levels~\cite{veraart2012recovery}. Population density is inferred from the light attenuation coefficient, computed from continuous measurements of outgoing light intensity. The full dataset comprises 7,784 observations over 28.86 days at a sampling interval of 5 minutes. The time series is divided into six segments separated by dilution events, which are external perturbations and not part of the intrinsic population dynamics. To ensure consistency, we restrict analysis to recovery phases between dilution events. Specifically, for each segment, we use the final 250 time points (approximately one day) prior to the next dilution event, yielding a total of 1,500 data points across all segments.

\paragraph{voice} We analyze experimental time series of phonation onset, corresponding to the emergence of vocal fold oscillations, which can be interpreted as a Hopf bifurcation~\cite{mergell1998phonation,murray2012vibratory}. The data are obtained from a physical replica of human vocal folds (EPI model) with a multilayer body–cover structure that reproduces realistic tissue mechanics. Oscillations are induced by gradually increasing airflow, while subglottal pressure is recorded using a pressure transducer. The available time series (from \cite{grziwotz2023anticipating}) consists of 7,501 data points over 0.375\,s with a sampling interval of $5 \times 10^{-5}$\,s. Tipping corresponds to the transition from a stable, non-oscillatory state to sustained oscillations as the flow rate crosses a critical threshold. We use the measured pressure signal as the observable for tipping detection.

\paragraph{cellular\_atp} We analyze experimental time series of cytosolic ATP (i.e., the readily available energy pool) dynamics in living plant cells, measured via a genetically encoded FRET sensor sensitive to MgATP$^{2-}$~\cite{wagner2019multiparametric}. The data capture the response of leaf tissue under gradually increasing hypoxia, induced by sealing samples in a dark environment where respiration depletes available oxygen. ATP levels remain relatively stable during oxygen decline and then undergo a sudden collapse, indicating a critical transition in cellular energy state. The dataset consists of 271 observations over 3 minutes with a sampling interval of 0.011 minutes. We use the fluorescence ratio signal as the observable, with tipping identified as the abrupt drop in ATP concentration.

\paragraph{greenhouse\_earth} We analyze a paleoclimate time series of calcium carbonate (CaCO$_3$) associated with the end of greenhouse Earth, marking the transition from an ice-free state to the formation of polar ice caps~\cite{dakos2008slowing}. The data exhibit increasing autocorrelation prior to the climate shift, consistent with critical slowing down. The dataset consists of 462 observations spanning 5.9 million years with a sampling interval of 0.013 million years. Tipping is identified as the transition to a glaciated climate state.

\paragraph{blackout\_frequency} We analyze bus voltage frequency data preceding the Western Interconnect blackout of August 1996, measured within the Bonneville Power Administration network~\cite{council1996western}. The time series captures system dynamics leading up to grid separation and exhibits critical fluctuations prior to the blackout, consistent with EWS of instability. The dataset consists of 11,301 observations over 565 seconds with a sampling interval of 0.05\,s. Tipping is identified as the transition to system-wide failure, i.e., blackout.

\section*{Declarations}
\subsection*{Acknowledgements}
JN, HF, PG acknowledge funding, computing, and storage resources from the NSF Science and Technology Center (STC) Learning the Earth with Artificial Intelligence and Physics (LEAP) (Award \#2019625). The material is based upon work supported by NASA under award No 80NSSC25K0062. VL acknowledges the partial support provided by the Horizon Europe Projects Past2Future (Grant No. 101184070) and ClimTIP (Grant No. 100018693), by the ARIA SCOP-PR01-P003 - Advancing Tipping Point Early Warning AdvanTip project, and by the European Space Agency Project PREDICT (Contract 4000146344/24/I-LR). Further support was provided by the U.S. Department of Energy, Office of Science, Office of Advanced Scientific Computing Research and Office of Biological and Environmental Research, Scientific Discovery through Advanced Computing (SciDAC) program under Award Number 9233218CNA000001. PK acknowledges the support from U.S. Department of Energy (DOE), Office of Science, Office of Biological and Environmental Research, Regional and Global Model Analysis program area as part of the HiLAT-RASM project.

\subsection*{Competing interests}
All authors declare no financial or non-financial competing interests. 

\subsection*{Ethics approval and consent to participate}
Not applicable.

\subsection*{Consent for publication}
Not applicable.

\subsection*{Data Availability}
The toy datasets have been generated with custom code (see the Code Availability statement) and simulation data to reproduce the AMOC results is freely available from the reference in the text. 

\subsection*{Materials availability}
Not applicable.

\subsection*{Code Availability}
The code for this study will be available at \url{https://github.com/juannat7/kews}.

\subsection*{Author contributions}
JN and CR conceived of the study and performed the analysis under the guidance of PG. DD, PK, and VL assisted in the methodological formulation. HF and AR provided domain science expertise to frame and interpret the results. The first draft of the manuscript was written by JN and CR, and all authors commented on the manuscript.

\bibliographystyle{unsrt}
\bibliography{references}

\clearpage
\appendix

\begin{center}
    {\Large\bfseries Supplementary Information for\\[0.5em]
    Koopman early warning signals for bifurcation and rate-induced tipping}

    \vspace{1.5em}

    {\normalsize
    Juan Nathaniel\textsuperscript{1,*} \quad
    Carla Roesch\textsuperscript{2} \quad
    Derek DeSantis\textsuperscript{3} \quad
    Parvathi Kooloth\textsuperscript{4} \quad
    Hang Fan\textsuperscript{1} \\[0.5em]
    Valerio Lucarini\textsuperscript{5} \quad
    Anastasia Romanou\textsuperscript{6} \quad
    Pierre Gentine\textsuperscript{1}
    }

    \vspace{1em}

    {\small
    \textsuperscript{1}Columbia University, USA \quad
    \textsuperscript{2}University of Edinburgh, UK \quad
    \textsuperscript{3}Los Alamos National Laboratory, USA \\[0.25em]
    \textsuperscript{4}Pacific Northwest National Laboratory, USA \quad
    \textsuperscript{5}University of Leicester, UK \quad
    \textsuperscript{6}NASA Goddard Institute for Space Studies, USA \\[0.5em]
    \textsuperscript{*}Corresponding author: \texttt{jn2808@columbia.edu}
    }
\end{center}

\vspace{2em}

\setcounter{section}{0}
\setcounter{equation}{0}
\setcounter{figure}{0}
\setcounter{table}{0}

\renewcommand{\thesection}{S\arabic{section}}
\renewcommand{\theequation}{S\arabic{equation}}
\renewcommand{\thefigure}{S\arabic{figure}}
\renewcommand{\thetable}{S\arabic{table}}

\renewcommand{\theHsection}{supp.section.\arabic{section}}
\renewcommand{\theHequation}{supp.equation.\arabic{equation}}
\renewcommand{\theHfigure}{supp.figure.\arabic{figure}}
\renewcommand{\theHtable}{supp.table.\arabic{table}}

\section{Spectral view of classical indicators}
The classical CSD interpretation of AR1 and variance derived below is inherently equilibrium-centered: one linearizes the dynamics around a stable fixed point and studies the dominant eigenvalue of the resulting Jacobian. This provides the mechanism by which slowing recovery implies increasing autocorrelation and variance near bifurcation. However, this reasoning assumes that the local stability structure is well approximated by a nearly fixed equilibrium over the analysis window. For methods to estimate a local, time-varying Jacobian from data around the observed state itself, see discussion in \cite{grziwotz2023anticipating}. This class of methods extend Jacobian-based warning analysis beyond the strict requirement of first locating the equilibrium. The underlying bifurcation intuition remains the same, but the estimation is performed locally along the trajectory rather than exclusively at an equilibrium.

\subsection{Spectral view of AR1}
\label{si-sec:ar1_derivation}
\begin{align}
    \mathrm{Corr}(z_t, z_{t+1})
    &= \frac{\mathrm{Cov}(z_t, z_{t+1})}{\sqrt{\mathrm{Var}(z_t)\mathrm{Var}(z_{t+1})}} \nonumber\\
    &= \frac{\mathrm{Cov}(z_t, \lambda_{\mathcal{J},1}(\beta)z_t + \eta_t)}{\mathrm{Var}(z_t)} \nonumber\\
    &= \frac{\lambda_{\mathcal{J},1}(\beta)\mathrm{Var}(z_t) + \mathrm{Cov}(z_t,\eta_t)}{\mathrm{Var}(z_t)} \nonumber \\
    &= \lambda_{\mathcal{J},1}(\beta),
\end{align}

where the second line applies stationarity to simplify the denominator, and the third uses uncorrelatedness of innovation to eliminate the cross term. As $\rho(\mathcal{J}(\beta)) \to 1$, perturbations decay increasingly slowly, indicating greater persistence. For a real dominant eigenvalue, this implies $|\mathrm{Corr}(z_t, z_{t+1})| \to 1$. In particular, $\mathrm{Corr}(z_t,z_{t+1}) \to 1$ when the leading eigenvalue approaches $+1$, as in a saddle-node (fold) bifurcation, whereas $\mathrm{Corr}(z_t,z_{t+1}) \to -1$ when the leading eigenvalue approaches $-1$, as in a flip (period-doubling) bifurcation.
A complete derivation of the link between closure of the spectral gap and  occurrence of $|\mathrm{Corr}(z_t,z_{t+1})| \to 1$ is presented in \cite{LucariniChekroun2023,Tantet2018,chekroun2019c}.

\subsection{Spectral view of variance}
\label{si-sec:var_derivation}
\begin{align}
    \mathrm{Var}(z_{t+1})
    &= \mathrm{Var}(\lambda_{\mathcal{J},1}(\beta)z_t + \eta_t)\nonumber\\
    &= |\lambda_{\mathcal{J},1}(\beta)|^2\mathrm{Var}(z_t)
       + \mathrm{Var}(\eta_t)
       + 2\,\mathrm{Re}\!\big(\lambda_{\mathcal{J},1}(\beta)\mathrm{Cov}(z_t,\eta_t)\big)\nonumber\\
    &= |\lambda_{\mathcal{J},1}(\beta)|^2\mathrm{Var}(z_t) + \mathrm{Var}(\eta_t)\nonumber\\
    (1 - |\lambda_{\mathcal{J},1}(\beta)|^2)\mathrm{Var}(z_t) &= \mathrm{Var}(\eta_t)\nonumber\\
    \mathrm{Var}(z_t) &= \frac{\mathrm{Var}(\eta_t)}{1 - |\lambda_{\mathcal{J},1}(\beta)|^2},
\end{align}

where the second line uses the uncorrelatedness of innovation to drop the cross term, and the third applies stationarity. As $\rho(\mathcal{J}(\beta)) \to 1$, the variance term increases and diverges in the linear approximation. A more general derivation of this property is presented in \cite{Lucarini2012}.

\section{Derivation and Proof}
\subsection{Derivation of the R-tipping's tracking error}
\label{si-sec:rate_perturb_dynamics}

Given $\bar{\omega}_t:=\omega_t-\omega^\star(u_t)$, we have
\begin{equation}
\begin{aligned}
    \bar{\omega}_{t+1}
    &= \omega_{t+1} - \omega^\star(u_{t+1}) \\
    &= \Phi(\omega_t,u_t) + \varepsilon_t - \omega^\star(u_{t+1}) \\
    &= \Phi(\omega^\star(u_t)+\bar{\omega}_t,u_t) + \varepsilon_t - \omega^\star(u_{t+1}) \\
    &= \Big(\Phi(\omega^\star(u_t)+\bar{\omega}_t,u_t)
           - \Phi(\omega^\star(u_t),u_t)\Big)
           + \Phi(\omega^\star(u_t),u_t)
           - \omega^\star(u_{t+1})
           + \varepsilon_t\\
    &= \Big(\Phi(\omega^\star(u_t)+\bar{\omega}_t,u_t)
           - \Phi(\omega^\star(u_t),u_t)\Big)
           + \underbrace{\omega^\star(u_t)-\omega^\star(u_{t+1})}_{d_t}
           + \varepsilon_t,
\end{aligned}
\end{equation}

where the fourth line adds and subtracts $\Phi(\omega^\star(u_t),u_t)$ and the fifth line uses the equilibrium identity $\Phi(\omega^\star(u_t),u_t)=\omega^\star(u_t)$. Linearizing $\Phi(\cdot,u_t)$ around $\omega^\star(u_t)$ gives
\begin{equation}
    \Phi(\omega^\star(u_t)+\bar{\omega}_t,u_t)
    =
    \Phi(\omega^\star(u_t),u_t)
    +
    \left.\frac{\partial \Phi}{\partial \omega}\right|_{\omega=\omega^\star(u_t)}\bar{\omega}_t
    +
    \mathcal{O}(\|\bar{\omega}_t\|^2),
\end{equation}

and hence
\begin{equation}
    \bar{\omega}_{t+1}
    =
    \underbrace{\mathcal{J}(u_t)\bar{\omega}_t}_{\text{local restoration}}
    +
    \underbrace{d_t}_{\text{branch drift}}
    +
    \underbrace{\varepsilon_t+\mathcal{O}(\|\bar{\omega}_t\|^2)}_{\epsilon_t}.
\end{equation}

\subsection{Proof of Proposition~\ref{prop:reskmdc_decomp}}
\label{si-sec:proof-reskmdc-decomp}

\begin{proof}
First we define
\begin{equation}
    Y(\omega,u):=\boldsymbol{\Psi}(\Phi_\varepsilon(\omega,u),g(u)),
    \qquad
    \hat{Y}(\omega,u):=[\mathcal{K}^{(1)}_{\mathcal{S}_m^c}\boldsymbol{\Psi}](\omega,u).
\end{equation}

For fixed $(\omega,u)$, add and subtract $\mathbb{E}[Y\mid\omega,u]$
\begin{equation}
    Y-\hat{Y}
    =
    \bigl(Y-\mathbb{E}[Y\mid\omega,u]\bigr)
    +
    \bigl(\mathbb{E}[Y\mid\omega,u]-\hat{Y}\bigr).
\end{equation}

Expanding the squared Euclidean norm gives
\begin{align}
    \|Y-\hat{Y}\|_2^2
    &=
    \left\|Y-\mathbb{E}[Y\mid\omega,u]\right\|_2^2
    +
    \left\|\mathbb{E}[Y\mid\omega,u]-\hat{Y}\right\|_2^2 \nonumber\\
    &\quad
    +
    2\operatorname{Re}
    \left\langle
        Y-\mathbb{E}[Y\mid\omega,u],
        \mathbb{E}[Y\mid\omega,u]-\hat{Y}
    \right\rangle.
\end{align}

Conditioning on $(\omega,u)$ and taking expectations yields
\begin{align}
    \mathbb{E}\!\left[\|Y-\hat{Y}\|_2^2\mid\omega,u\right]
    &=
    \mathbb{E}\!\left[
        \left\|Y-\mathbb{E}[Y\mid\omega,u]\right\|_2^2
        \middle|\omega,u
    \right]
    +
    \left\|\mathbb{E}[Y\mid\omega,u]-\hat{Y}\right\|_2^2 \nonumber\\
    &\quad
    +
    2\operatorname{Re}
    \left\langle
        \mathbb{E}\!\left[Y-\mathbb{E}[Y\mid\omega,u]\middle|\omega,u\right],
        \mathbb{E}[Y\mid\omega,u]-\hat{Y}
    \right\rangle.
\end{align}

The cross term vanishes because
\begin{equation}
    \mathbb{E}\!\left[Y-\mathbb{E}[Y\mid\omega,u]\middle|\omega,u\right]=0.
\end{equation}

Therefore,
\begin{equation}
    \mathbb{E}\!\left[\|Y-\hat{Y}\|_2^2\mid\omega,u\right]
    =
    \left\|\mathbb{E}[Y\mid\omega,u]-\hat{Y}\right\|_2^2
    +
    \mathbb{E}\!\left[
        \left\|Y-\mathbb{E}[Y\mid\omega,u]\right\|_2^2
        \middle|\omega,u
    \right].
    \label{eq:bias_variance_control}
\end{equation}

Since $\mathbb{E}[Y\mid \omega,u] = [\mathcal{K}^{(1)}_c\boldsymbol{\Psi}](\omega,u)$ by the definition of the control-aware first-moment Koopman operator in Equation~\ref{eq:kic_first_moment}, the first term in Equation~\ref{eq:bias_variance_control} is exactly the squared truncation error at $(\omega,u)$. 
\begin{equation}
    \left\|\mathbb{E}[Y\mid \omega,u]-\hat{Y}\right\|_2^2
    =
    \left\|[\mathcal{K}^{(1)}_c\boldsymbol{\Psi}](\omega,u) - [\mathcal{K}^{(1)}_{\mathcal{S}^c_m}\boldsymbol{\Psi}](\omega,u)\right\|_2^2
\end{equation}

The second term equals the trace of the conditional covariance,
\begin{equation}
    \mathbb{E}\left[\left\|Y-\mathbb{E}[Y\mid \omega,u]\right\|_2^2 \middle|\omega,u \right] 
    =
    \mathrm{Tr}\left[\mathrm{Cov}(Y\mid \omega,u)\right].
\end{equation}

Integrating Equation~\ref{eq:bias_variance_control} over the joint state-control space $\Omega\times\mathbb{R}^U$ with respect to measure $\mu$ gives
\begin{align}
  \mathrm{res}\left[
      \mathcal{K}^{(1)}_c,
      \mathcal{K}^{(1)}_{\mathcal{S}^c_m};
      \boldsymbol{\Psi}
  \right]^2
  &=
  \left\|
      \mathcal{K}^{(1)}_c\boldsymbol{\Psi}
      -
      \mathcal{K}^{(1)}_{\mathcal{S}^c_m}\boldsymbol{\Psi}
  \right\|_{L^2(\Omega\times\mathbb{R}^U,\mu;\mathbb{C}^M)}^2
  \nonumber\\
  &\quad
  +
  \int_{\Omega\times\mathbb{R}^U}
  \mathrm{Tr}\left[
      \mathrm{Cov}\left(
          \boldsymbol{\Psi}\left(\Phi_\varepsilon(\omega,u),g(u)\right)
          \middle|\omega,u
      \right)
  \right]
  d\mu(\omega,u),
\end{align}

which proves Proposition~\ref{prop:reskmdc_decomp}.
\end{proof}

Throughout, expectations and covariances in the control-aware setting are understood with respect to the one-step stochastic transition from the current pair $(\omega,u)$, and are integrated over $(\omega,u)$ against the reference measure $\mu$.

\subsection{Counterexample for Remark~\ref{rem:no_general_divergence}}
\label{si-sec:counterexample-rtipping}

Consider the linear forced system
\begin{equation}
    \omega_{t+1}=a\omega_t+u_t,
    \qquad
    u_{t+1}=u_t+r,
    \qquad 0<a<1.
    \label{eq:counterexample_system}
\end{equation}

For fixed $u$, the frozen equilibrium $\omega^\star(u)$ satisfies
\begin{align}
    \omega^\star(u)&=a\omega^\star(u)+u\\
    \omega^\star(u)&=\frac{u}{1-a}
    \label{eq:counterexample_equilibrium}
\end{align}

Next, we define the tracking error
\begin{equation}
    \bar{\omega}_t:=\omega_t-\omega^\star(u_t)=\omega_t-\frac{u_t}{1-a}.
\end{equation}

Then
\begin{align}
    \bar{\omega}_{t+1}
    &= \omega_{t+1}-\omega^\star(u_{t+1}) \nonumber\\
    &= a\omega_t+u_t-\frac{u_t+r}{1-a} \nonumber\\
    &= a\left(\bar{\omega}_t+\frac{u_t}{1-a}\right)+u_t-\frac{u_t+r}{1-a}\nonumber\\
    &= a\bar{\omega}_t-\frac{r}{1-a}.
    \label{eq:counterexample_tracking}
\end{align}

Thus the frozen dynamics remain stable as $0 < a < 1$, but sufficiently large $r$ prevents tracking.

We now consider the following observable dictionary
\begin{equation}
    \boldsymbol{\Psi}(\omega,u)
    :=
    \begin{bmatrix}
        1\\
        \omega\\
        u
    \end{bmatrix}.
\end{equation}

Its one-step lifted evolution is
\begin{equation}
    \boldsymbol{\Psi}(\omega_{t+1},u_{t+1})
    =
    \begin{bmatrix}
        1\\
        \omega_{t+1}\\
        u_{t+1}
    \end{bmatrix}
    =
    \begin{bmatrix}
        1\\
        a\omega_t+u_t\\
        u_t+r
    \end{bmatrix}
    =
    K\boldsymbol{\Psi}(\omega_t,u_t),
\end{equation}

with
\begin{equation}
    K=
    \begin{bmatrix}
        1 & 0 & 0\\
        0 & a & 1\\
        r & 0 & 1
    \end{bmatrix}.
    \label{eq:counterexample_koopman_matrix}
\end{equation}

Hence the augmented dynamics close exactly in the chosen observable space. Because the system is deterministic, both the truncation error and stochastic fluctuation vanish, and therefore
\begin{equation}
    \mathrm{res}\!\left[
      \mathcal{K}^{(1)}_c,
      \mathcal{K}^{(1)}_{\mathcal{S}};
      \boldsymbol{\Psi}
    \right]=0,
\end{equation}

even though the trajectory fails to track the moving equilibrium branch. 

\subsection{Proof of Theorem~\ref{thm:rtip_lower_bound}}
\label{si-sec:proof-rtip_lower_bound}

\begin{proof}
Since each component of the local restoration, $\mathcal{J}(u)\bar{\omega}(\omega,u)$ lies in $\mathcal{S}_m^c$, its unresolved projection onto $\mathcal{Q}_m$ gives zero vector. Applying $\mathcal{Q}_m$ to Equation~\ref{eq:first_moment_decomp} therefore gives
\begin{equation}
    \mathcal{Q}_m[\mathcal{K}^{(1)}_c\bar{\omega}]
    =
    \mathcal{Q}_m \mathcal{J}(u)\bar{\omega}(\omega,u)+\mathcal{Q}_m d+\mathcal{Q}_m\epsilon
    =
    \mathcal{Q}_m d+\mathcal{Q}_m\epsilon.
\end{equation}

Hence, by the reverse triangle inequality,
\begin{equation}
    \left\|
        \mathcal{Q}_m[\mathcal{K}^{(1)}_c\bar{\omega}]
    \right\|_{L^2(\Omega\times\mathbb{R}^U,\mu;\mathbb{R}^N)}
    \ge
    \|\mathcal{Q}_m d\|_{L^2(\Omega\times\mathbb{R}^U,\mu;\mathbb{R}^N)}
    -
    \|\mathcal{Q}_m \epsilon\|_{L^2(\Omega\times\mathbb{R}^U,\mu;\mathbb{R}^N)}.
\end{equation}

From $\bar{\omega}=C\boldsymbol{\Psi}$ and linearity of $C$ and $\mathcal{K}^{(1)}_c$, we have
\begin{equation}
    \mathcal{K}^{(1)}_c\bar{\omega}=C(\mathcal{K}^{(1)}_c\boldsymbol{\Psi}).
\end{equation}

Since $\mathcal{K}^{(1)}_{\mathcal{S}_m^c}\boldsymbol{\Psi}\in(\mathcal{S}_m^c)^M$, each component of $C(\mathcal{K}^{(1)}_{\mathcal{S}_m^c}\boldsymbol{\Psi})$ also lies in $\mathcal{S}_m^c$. By the best-approximation property of $\mathcal{P}_m$,
\begin{align}
    \left\|
        \mathcal{Q}_m[\mathcal{K}^{(1)}_c\bar{\omega}]
    \right\|_{L^2(\Omega\times\mathbb{R}^U,\mu;\mathbb{R}^N)}
    &\le
    \left\|
        \mathcal{K}^{(1)}_c\bar{\omega}
        -
        C\bigl(\mathcal{K}^{(1)}_{\mathcal{S}_m^c}\boldsymbol{\Psi}\bigr)
    \right\|_{L^2(\Omega\times\mathbb{R}^U,\mu;\mathbb{R}^N)} \nonumber\\
    &=
    \left\|
        C\bigl(
            \mathcal{K}^{(1)}_c\boldsymbol{\Psi}
            -
            \mathcal{K}^{(1)}_{\mathcal{S}_m^c}\boldsymbol{\Psi}
        \bigr)
    \right\|_{L^2(\Omega\times\mathbb{R}^U,\mu;\mathbb{R}^N)} \nonumber\\
    &\le
    \|C\|
    \left\|
        \mathcal{K}^{(1)}_c\boldsymbol{\Psi}
        -
        \mathcal{K}^{(1)}_{\mathcal{S}_m^c}\boldsymbol{\Psi}
    \right\|_{L^2(\Omega\times\mathbb{R}^U,\mu;\mathbb{C}^M)}.
\end{align}

Combining the two inequalities gives
\begin{equation}
    \left\|
        \mathcal{K}^{(1)}_c\boldsymbol{\Psi}
        -
        \mathcal{K}^{(1)}_{\mathcal{S}_m^c}\boldsymbol{\Psi}
    \right\|_{L^2(\Omega\times\mathbb{R}^U,\mu;\mathbb{C}^M)}
    \ge
    \frac{1}{\|C\|}
    \left(
        \|\mathcal{Q}_m d\|_{L^2(\Omega\times\mathbb{R}^U,\mu;\mathbb{R}^N)}
        -
        \|\mathcal{Q}_m\epsilon\|_{L^2(\Omega\times\mathbb{R}^U,\mu;\mathbb{R}^N)}
    \right).
\end{equation}

By Proposition~\ref{prop:reskmdc_decomp}, the residual is bounded below by the truncation error because the stochastic fluctuation term is nonnegative. This proves Equation~\ref{eq:drift_lower_bound}. If each component of $\epsilon$ also lies in $\mathcal{S}_m^c$, then $\mathcal{Q}_m\epsilon=0$, then Equation~\ref{eq:drift_lower_bound_clean} follows immediately as a special case.
\end{proof}

\clearpage
\newpage

\section{Prototypical Experiments}\label{si-sec:proto_exp}
Below represents the tipping and non-tipping trajectories across prototypical B-tipping and R-tipping systems.

\begin{figure}[h!]
    \centering
    \begin{subfigure}[h]{0.24\linewidth}
        \centering
        \includegraphics[width=\linewidth]{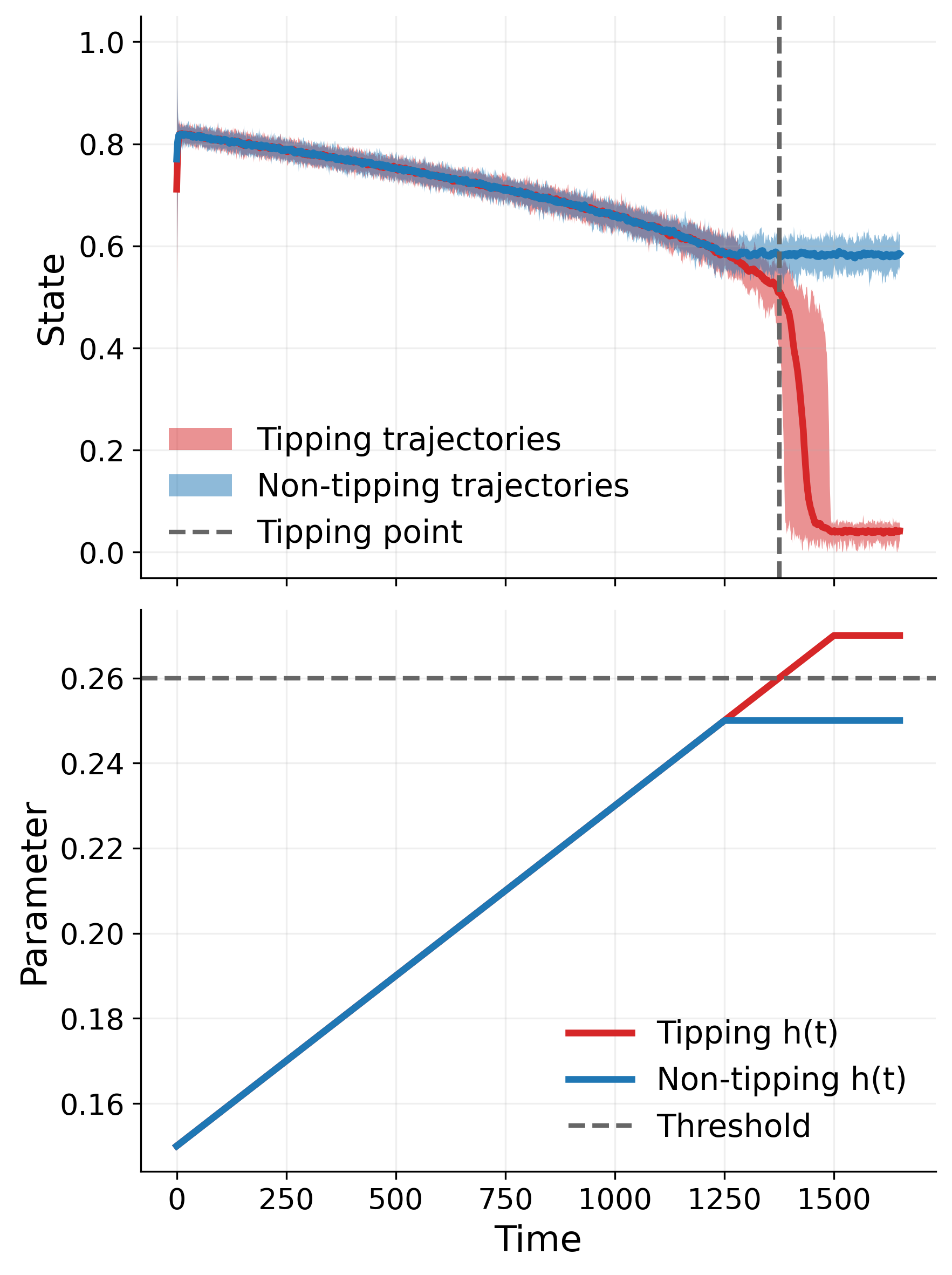}
        \caption{May's harvesting}
        \label{fig:bifurcation_examples_harvesting}
    \end{subfigure}
    \hfill
    \begin{subfigure}[h]{0.24\linewidth}
        \centering
        \includegraphics[width=\linewidth]{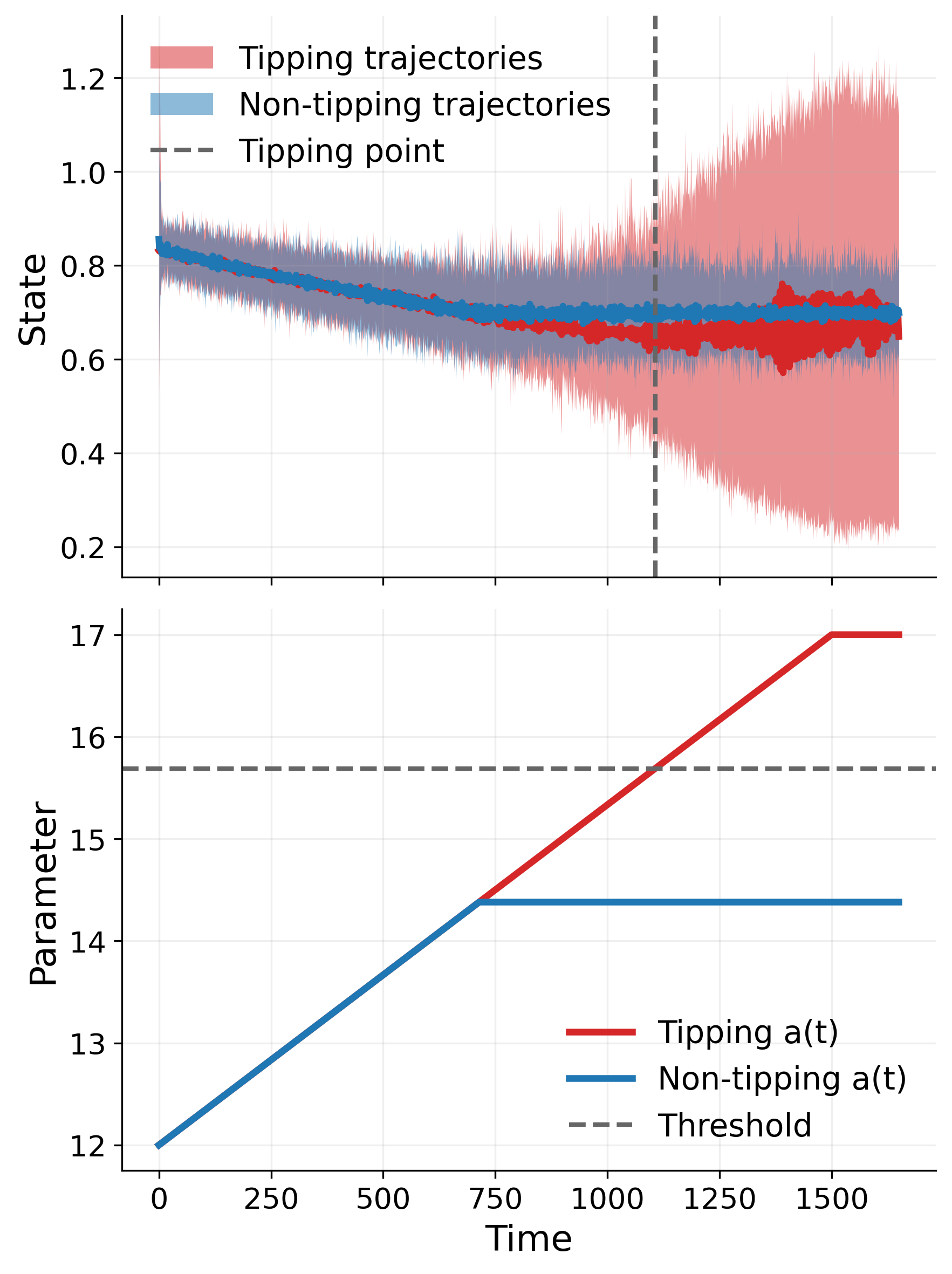}
        \caption{Rosenzweig-MacArthur}
        \label{fig:bifurcation_examples_rm}
    \end{subfigure}
    \hfill
    \begin{subfigure}[h]{0.24\linewidth}
        \centering
        \includegraphics[width=\linewidth]{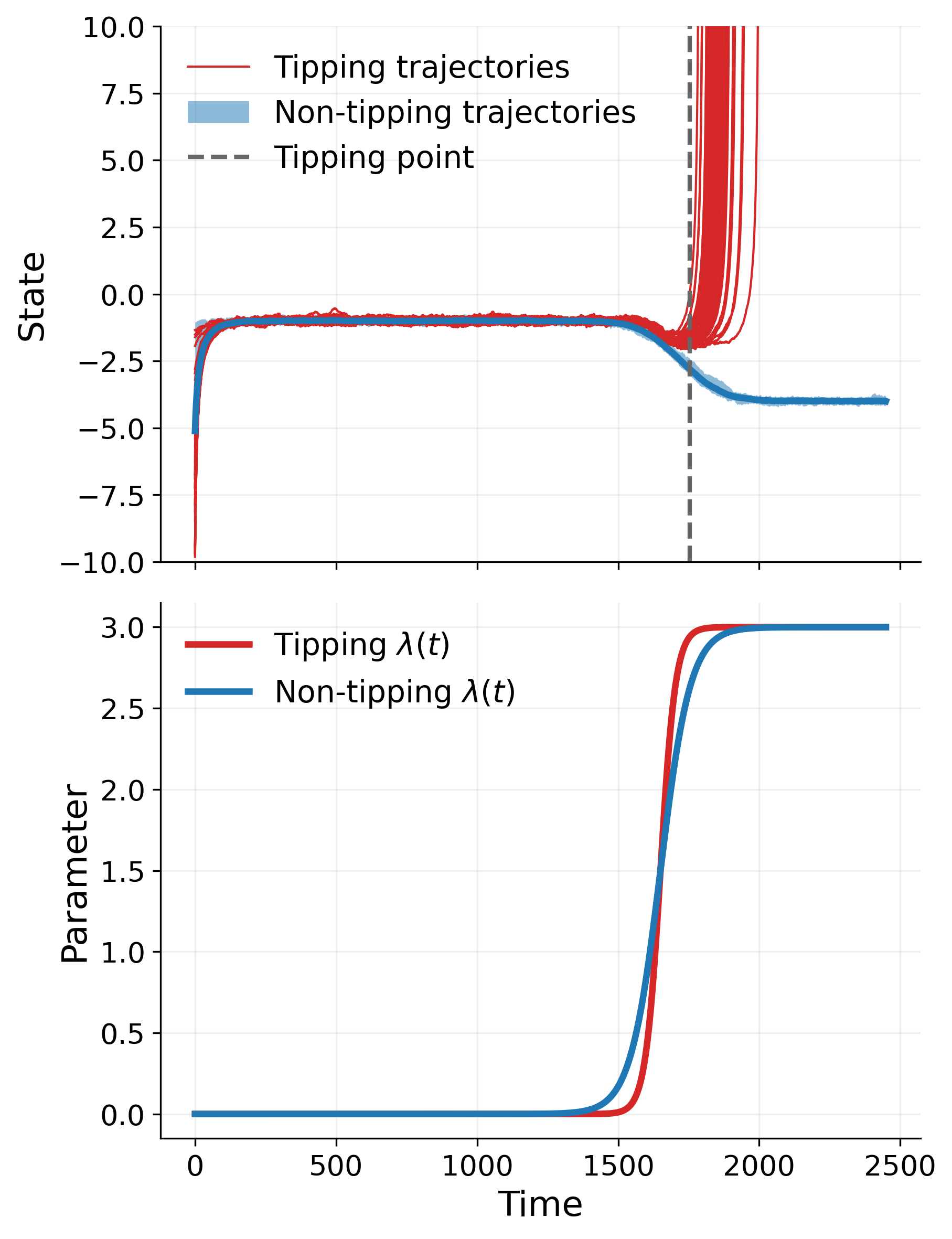}
        \caption{Saddle-node}
        \label{fig:rate_examples_saddle_node}
    \end{subfigure}
    \hfill
    \begin{subfigure}[h]{0.24\linewidth}
        \centering
        \includegraphics[width=\linewidth]{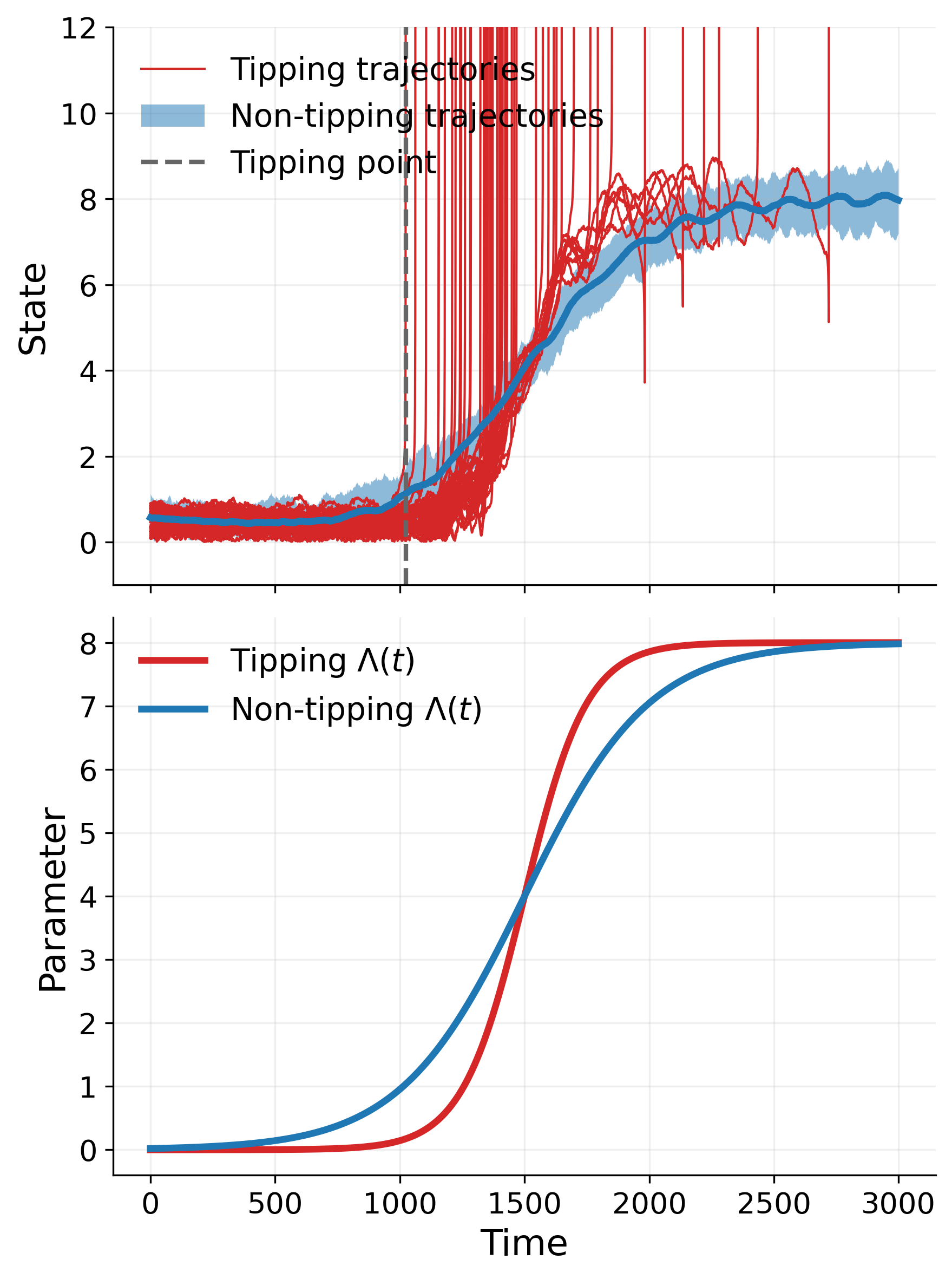}
        \caption{Bautin}
        \label{fig:rate_examples_bautin}
    \end{subfigure}
    \caption{Representative B-tipping (\textbf{a}, \textbf{b}) and R-tipping (\textbf{c}, \textbf{d}) examples. In each panel, the upper plot shows tipping (red) and non-tipping (blue) trajectories, and the lower plot shows the corresponding parameter schedule.}
    \label{fig:bifurcation_examples}
\end{figure}

\subsection{Bifurcation-induced tipping}

\begin{figure}[h!]
    \centering
    \includegraphics[width=\linewidth]{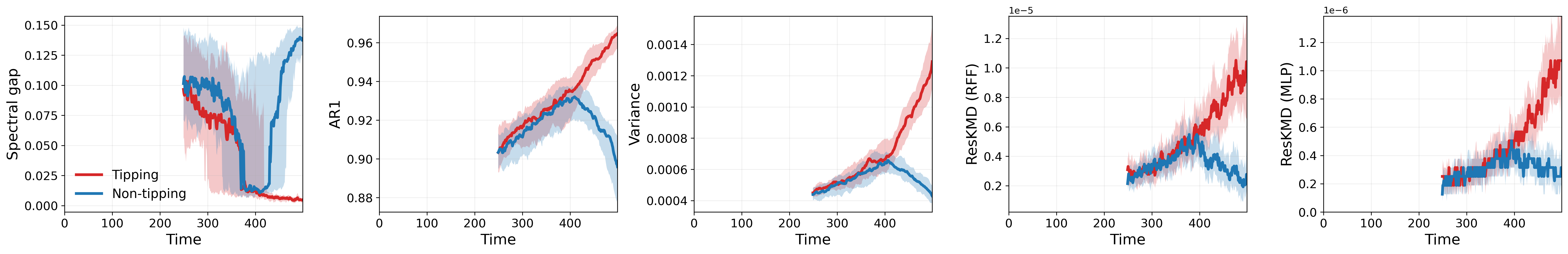}
    \label{fig:bifurcation_ews_harvesting}
    \caption{EWS for B-tipping using May's harvesting model. Panels show the ensemble behavior of five indicators computed on pre-transition sliding windows of length $50\%$: Spectral gap, AR1, variance, ResKMD with RFF, and ResKMD with MLP observable with a hidden size of (16, 8, 4). The shaded region denotes the interquantile range around the median.}
\end{figure}

\subsection{Rate-induced tipping}

\begin{figure}[h!]
    \centering
    \includegraphics[width=\linewidth]{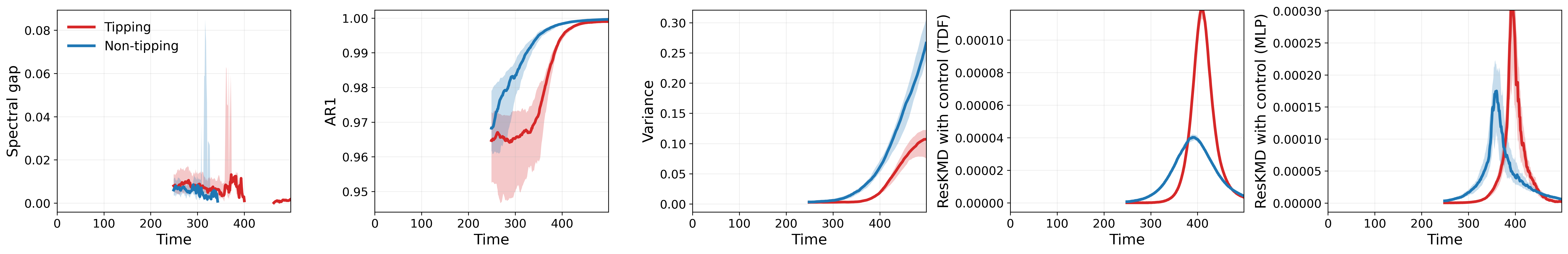}
    \label{fig:rate_ews_saddle_node}
    \caption{EWS for the R-tipping in saddle-node. Panels show the ensemble behavior of five indicators computed on pre-transition sliding windows of length $50\%$: Spectral gap, AR1, variance, and control-aware ResKMD (TDF, MLP observable with a hidden size of (8, 4, 2)). The shaded region denotes the interquantile range around the median.}
\end{figure}

\clearpage
\newpage

\section{Empirical Experiments}\label{si-sec:empirical_exp}
We provide more empirical results showcasing our Koopman-based EWS ability to correctly identify critical transitions. 

\begin{figure}[h!]
    \centering
    \begin{subfigure}[h]{0.9\linewidth}
        \centering
        \includegraphics[width=\linewidth]{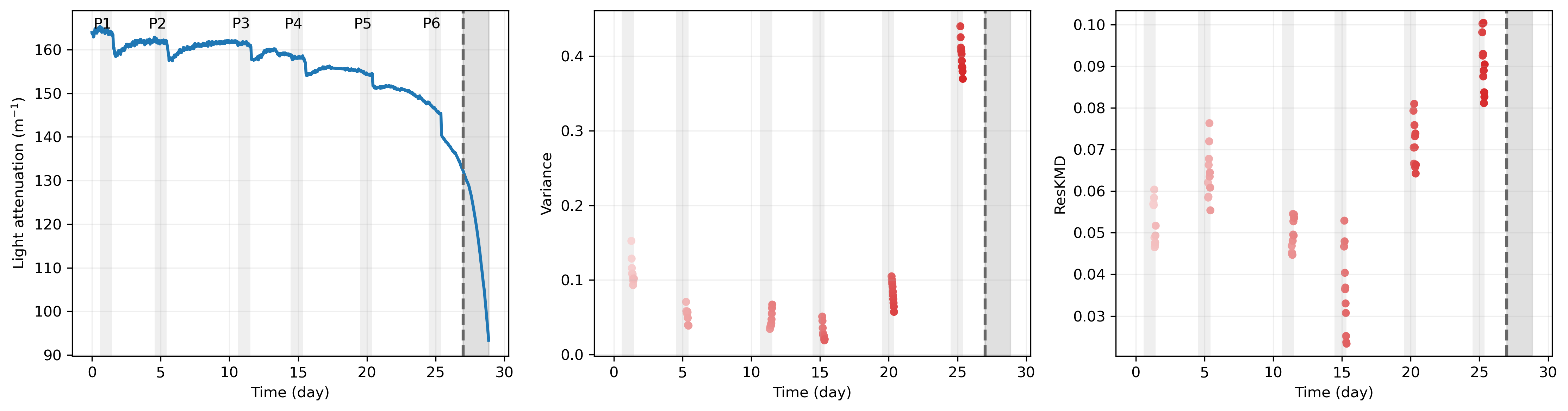}
        \caption{Microcosm experiment}
        \label{fig:empirical_microcosm}
    \end{subfigure}
    \hfill
    \begin{subfigure}[h]{0.9\linewidth}
        \centering
        \includegraphics[width=\linewidth]{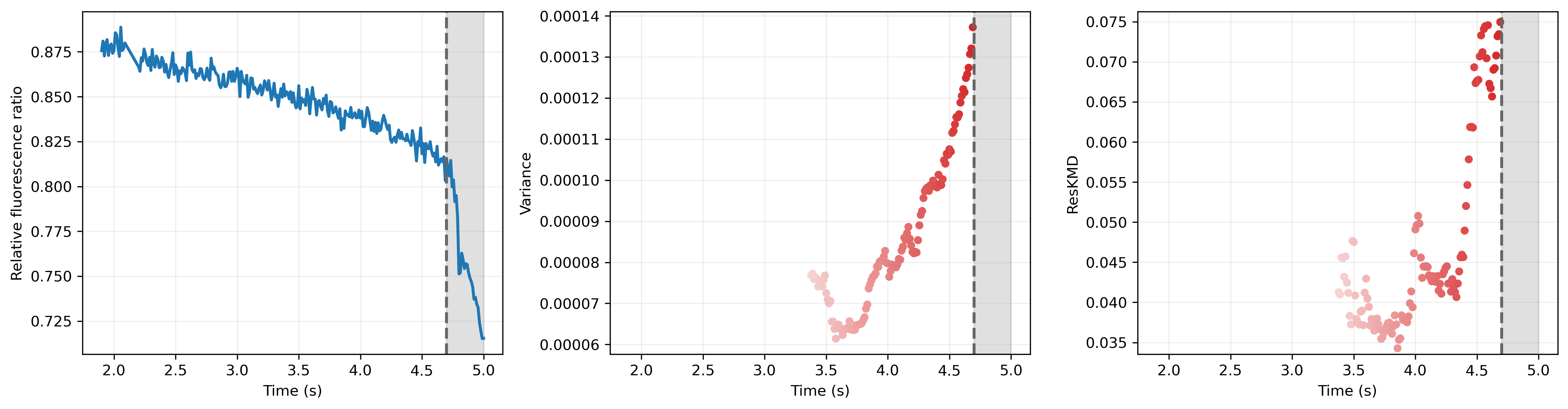}
        \caption{Mitochondria}
        \label{fig:empirical_mitochondria}
    \end{subfigure}
    \hfill
    \caption{Koopman EWS across empirical systems: (\textbf{a}) fold bifurcation in a microcosm experiment where cyanobacteria population undergoes dilution perturbation and increasing light stress, and (\textbf{b}) cytosolic ATP dynamics in living plant cells.}
    \label{si-fig:b_empirical_ews}
\end{figure}

\clearpage
\newpage

\section{Additional Experiments}
\subsection{Coupled AMOC results at $26^\circ$N}
\label{si-sec:amoc_26N}

The main text focuses on $48^\circ$N. For completeness, Figure~\ref{si-fig:amoc_26N} reports the corresponding coupled AMOC results at $26^\circ$N where similar results and conclusions can still be drawn. The control-aware ResKMD again provides better discrimination and earlier detection of early warning signals than AR1 and variance.

\begin{figure}[h!]
    \centering
    \begin{subfigure}[h]{\linewidth}
        \centering
        \includegraphics[width=\linewidth]{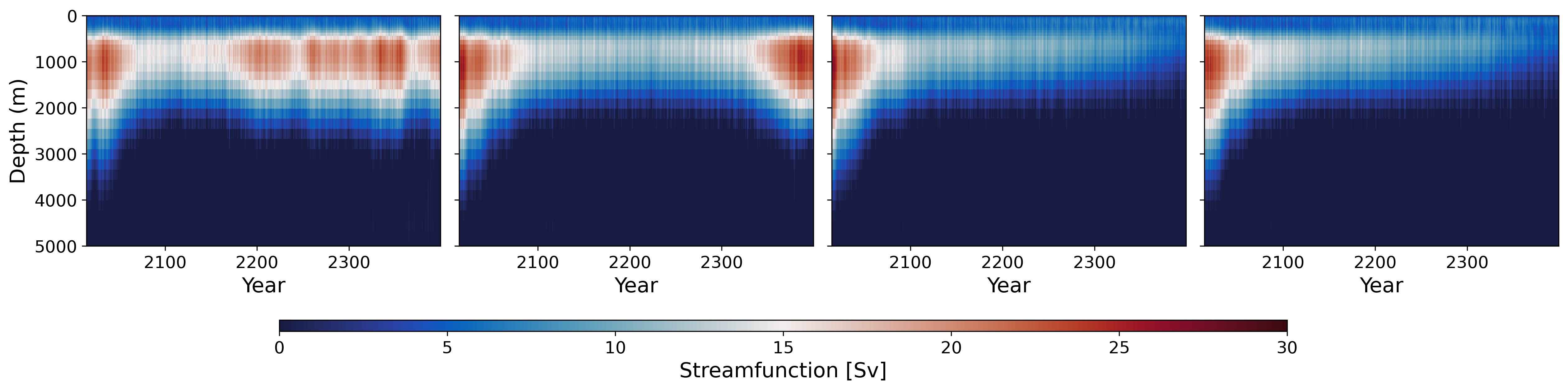}
        \caption{Streamfunction @ $26^\circ$}
        \label{fig:amoc_str_profile_26}
    \end{subfigure}
    \hfill
    \begin{subfigure}[h]{\linewidth}
        \centering
        \includegraphics[width=\linewidth]{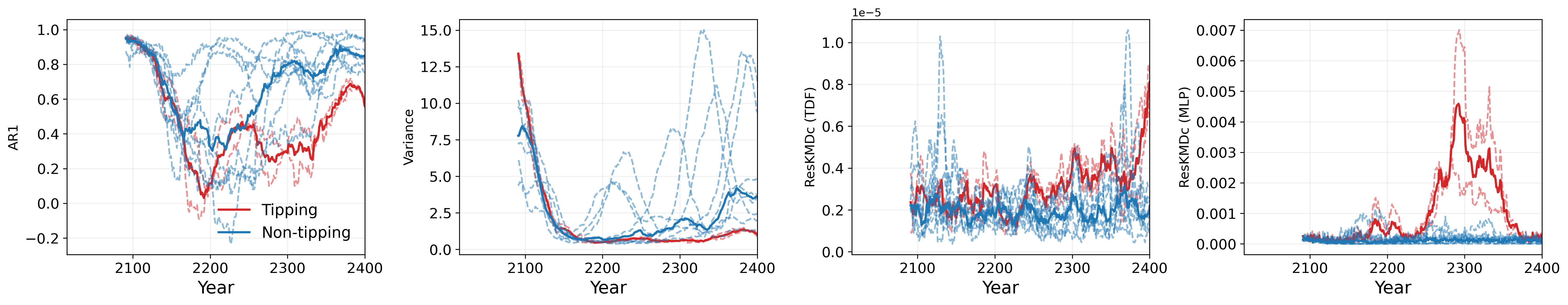}
        \caption{EWS at $26^\circ$N}
        \label{si-fig:amoc_26N_ews}
    \end{subfigure}
    \hfill
    \begin{subfigure}[h]{0.255\linewidth}
        \centering
        \includegraphics[width=\linewidth]{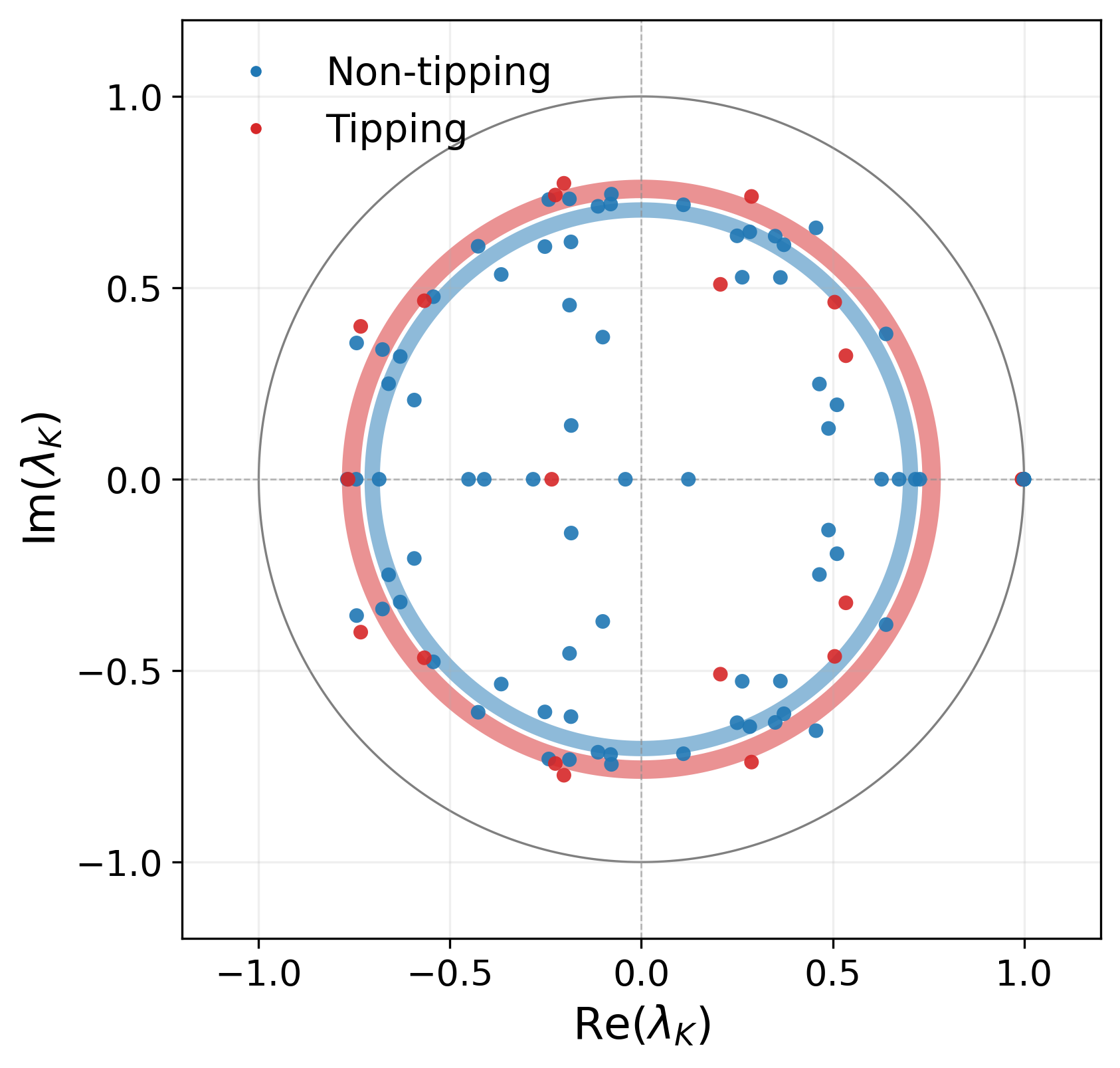}
        \caption{Discrete spectrum}
        \label{si-fig:amoc_26N_disc}
    \end{subfigure}
    \hfill
    \begin{subfigure}[h]{0.240\linewidth}
        \centering
        \includegraphics[width=\linewidth]{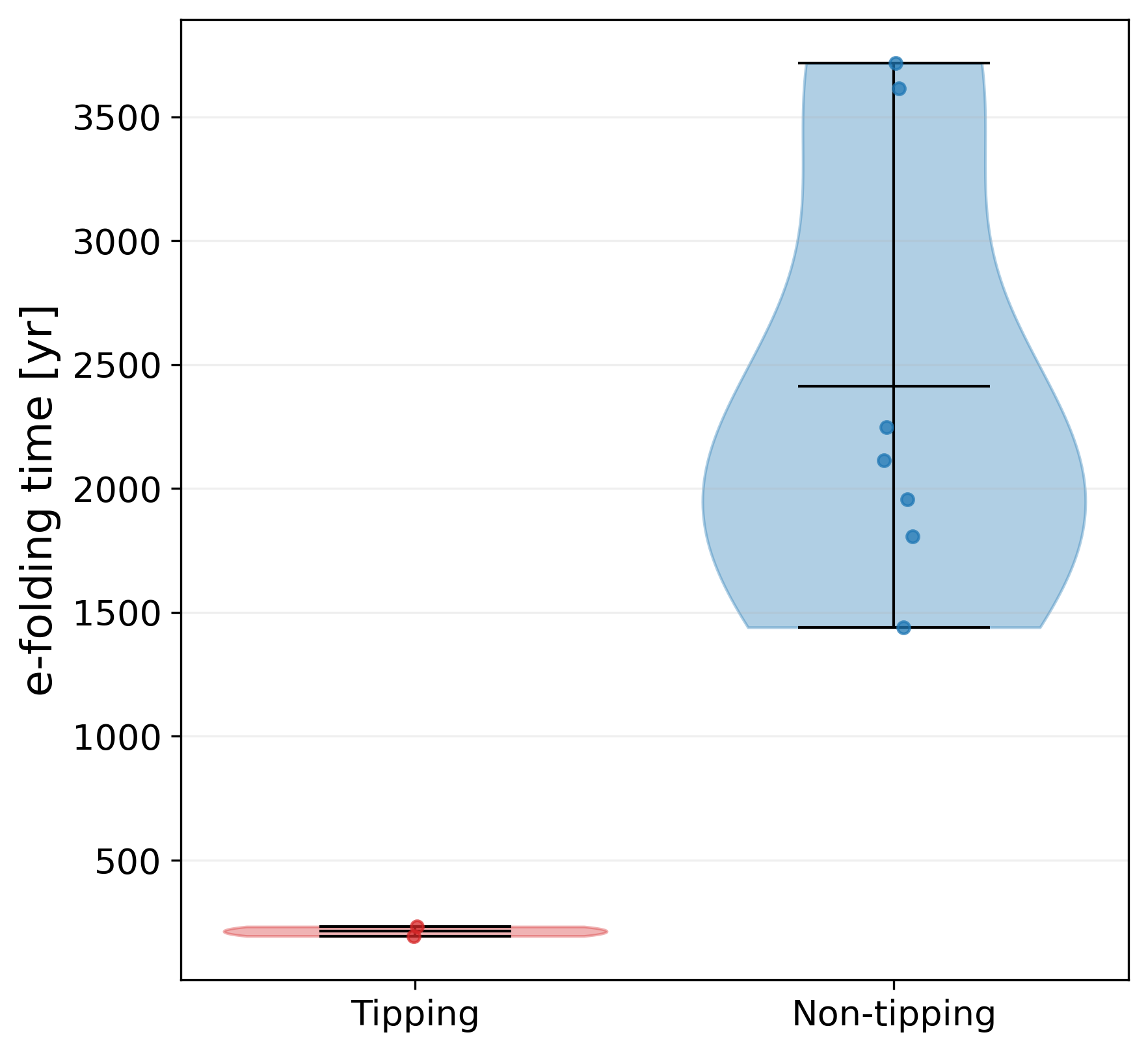}
        \caption{E-folding years}
        \label{si-fig:amoc_26N_efold}
    \end{subfigure}
    \hfill
    \begin{subfigure}[h]{0.41\linewidth}
        \centering
        \includegraphics[width=\linewidth]{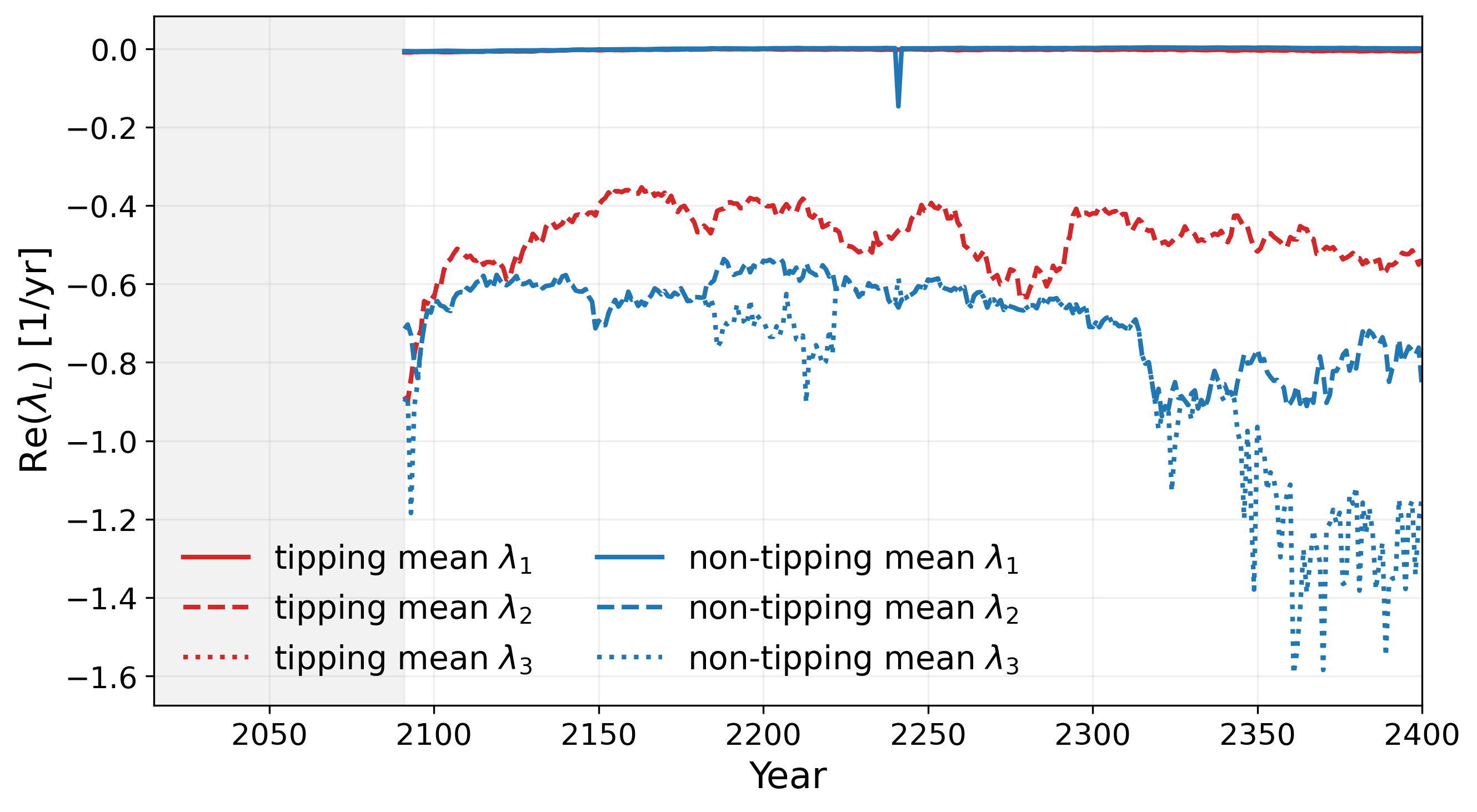}
        \caption{Rolling $\lambda_L$}
        \label{si-fig:amoc_26N_gap}
    \end{subfigure}
    \caption{Supplementary coupled AMOC results at $26^\circ$N. (\textbf{a}) Depth profile of overturning streamfunction for representative coupled AMOC trajectories at $26^\circ$N: the first two columns show non-tipping runs that recover to the stronger branch, whereas the latter two show tipping runs that evolve toward a much weaker circulation state. (\textbf{b}) EWS on pre-transition sliding windows of length $20\%$. (\textbf{c}) Discrete-time Koopman eigenvalues. (\textbf{d}) E-folding times of the slowest decaying non-constant mode. (\textbf{e}) Rolling generator eigenvalues with smallest $|\mathrm{Re}(\lambda_L)|$.}
    \label{si-fig:amoc_26N}
\end{figure}

\clearpage
\newpage

\subsection{AMOC control variable ablation}
\label{si-sec:ablate_amoc_forcing}

To assess sensitivity of the control-aware Koopman residual to the choice of forcing channel, we repeat the coupled AMOC analysis using several alternative control variables. In place of the Denmark Strait sea-ice flux used in the main text, we consider the maximum mixed-layer depth (MLD) in the Labrador Sea, Irminger Sea, Scotland-Iceland Basin, and Greenland-Iceland-Norway Seas. For each forcing variable, we apply the same preprocessing and compute control-aware ResKMD with TDF observables for AMOC strength at both $26^\circ$N and $48^\circ$N.

Figure~\ref{si-fig:ablate_amoc_forcing} compares the forcing trajectories and the corresponding control-aware residuals across the tipping and non-tipping ensembles. The results show that the performance of control-aware ResKMD is sensitive to forcing choice: some forcings produce clearer separation between tipping and non-tipping members than others. This supports the interpretation that control-aware Koopman indicators are most informative when the supplied control captures the dominant mechanism associated with AMOC weakening.

\begin{figure}[h!]
    \centering
    \includegraphics[width=\linewidth]{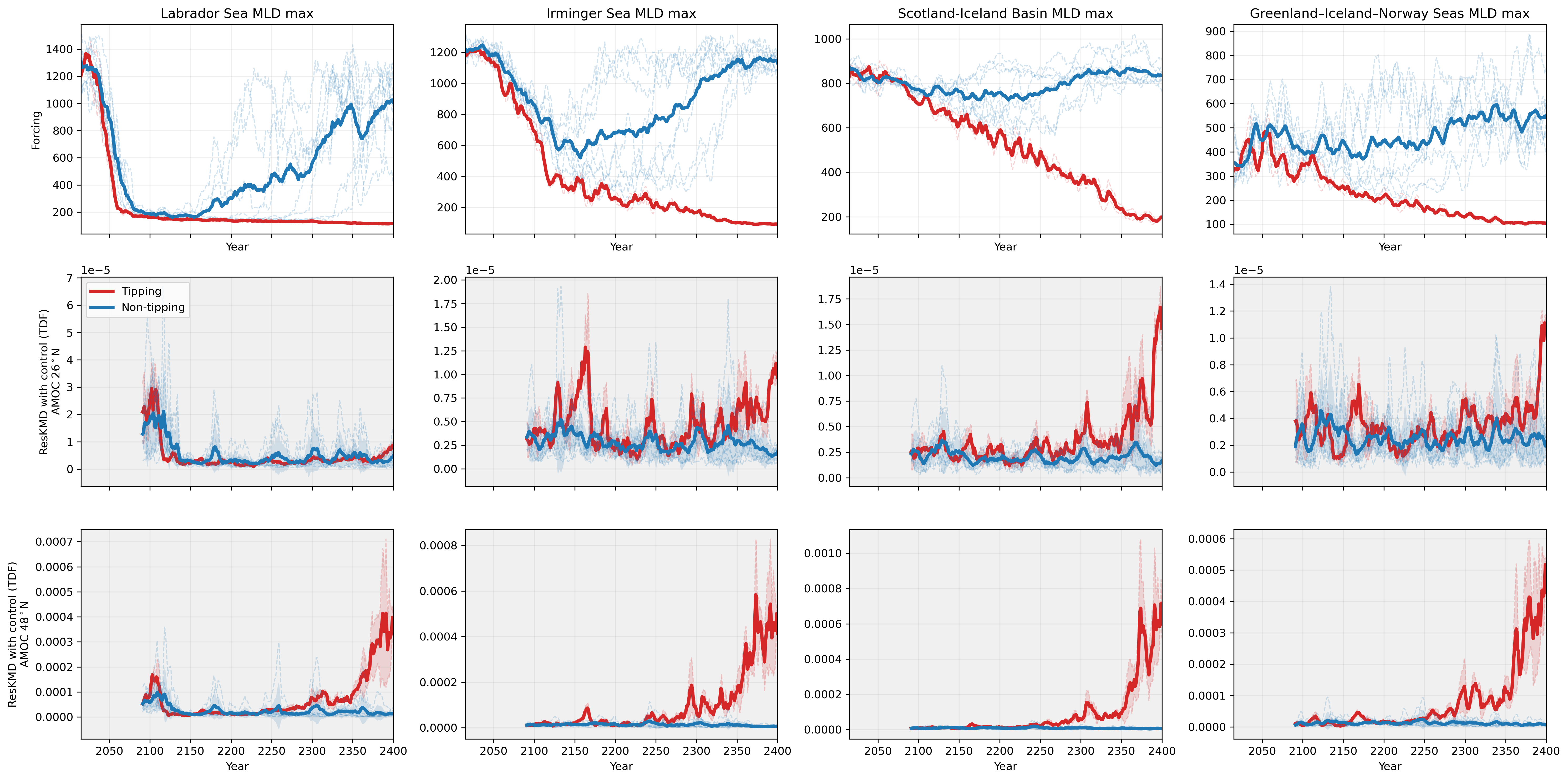}
    \caption{Ablation of forcing variables in control-aware ResKMD for the coupled AMOC runs. Columns correspond to alternative forcing inputs used in the control channel: Labrador Sea MLD maximum, Irminger Sea MLD maximum, Scotland--Iceland Basin MLD maximum, and Greenland--Iceland--Norway Seas MLD maximum. The first row shows the corresponding 10-year rolling-average forcing trajectories, grouped by tipping (red) and non-tipping (blue) ensemble members. The second and third rows show the resulting control-aware ResKMD with TDF observables using sliding windows of length $20\%$ for AMOC strength at $26^\circ$N and $48^\circ$N, respectively.}
    \label{si-fig:ablate_amoc_forcing}
\end{figure}

\end{document}